\newcommand{\N}{\mathbb{N}}
\newcommand{\R}{\mathbb{R}}
\newcommand{\Rplus}{\mathbb{R}^+}
\newcommand{\intset}[1]{\langle #1\rangle}
\newcommand{\ns}[1]{{\null^\ast #1}}
\newcommand{\nsint}{\ns\N}
\newcommand{\nsreal}{\ns\R}
\newcommand{\nsclass}[1]{\langle #1 \rangle}
\newcommand{\ie}{i.e.~}
\newcommand{\eg}{e.g.~}
\newcommand{\MARK}[1]{\marginpar{\tiny Things changed here!}}
\newcommand{\qeq}{\quad=\quad}
\newcommand{\qmapsto}{\quad\mapsto\quad}
\newcommand{\resp}{resp.~}
\newcommand{\wrt}{wrt~}
\newcommand{\category}[1]{\mathbf{#1}}
\newcommand{\C}{\mathcal{C}}
\newcommand{\F}{\mathcal{F}}
\renewcommand{\st}{\mathop{\mathrm{st}}}
\newcommand{\set}[1]{\{#1\}}
\newcommand{\setof}[2]{\set{#1\mid #2}}
\newcommand{\Tr}[2]{\mathop{\mathrm{Tr}_{#1}^{#2}}}
\newcommand{\Net}{\category{Net}}
\newcommand{\sNet}{\category{sNet}}
\newcommand{\id}{\mathrm{id}}
\newcommand{\nbd}{\nobreakdash-\hspace{0pt}}
\newcommand{\Mod}{\category{Mod}}
\newcommand{\Fix}{\category{Fix}}
\renewcommand{\leq}{\leqslant}
\newcommand{\pa}[1]{\left(#1\right)}
\newcommand{\fix}[2][]{\mathop{\mathrm{fix}_{#1}}(#2)}
\newcommand{\Cpo}{\category{Cpo}}
\newcommand{\ICpo}{\category{ICpo}}
\newcommand{\sdom}[2]{#2^{\leq #1}}
\newcommand{\CT}{CT}
\newcommand{\CCT}{CCT}
\newcommand{\IT}{IT}
\newcommand{\floor}[1]{\lfloor #1\rfloor}
\theoremstyle{plain}
\newtheorem{proposition}{Proposition}
\title{A Non-Standard Semantics for Kahn Networks in Continuous
  Time\footnote{This work was partially supported by the Office of Naval
    Research and by the PANDA (``Parallel and Distributed Analysis'',
    \hbox{ANR-09-BLAN-0169}) French ANR project.}}
\author[1]{Romain Beauxis}
\author[2]{Samuel Mimram}
\affil[1]{Tulane University, Department of Mathematics\\
6823 St Charles Avenue, New Orleans LA 70118, USA\\\texttt{rbeauxis@tulane.edu}}
\affil[2]{CEA, LIST\\Point Courrier 94, 91191 Gif-sur-Yvette, France\\\texttt{samuel.mimram@cea.fr}}
\begin{document}
\maketitle

\begin{abstract}
  In a seminal article, Kahn has introduced the notion of process network and
  given a semantics for those using Scott domains whose elements are (possibly
  infinite) sequences of values. This model has since then become a standard
  tool for studying distributed asynchronous computations. From the beginning,
  process networks have been drawn as particular graphs, but this syntax is
  never formalized. We take the opportunity to clarify it by giving a precise
  definition of these graphs, that we call \emph{nets}.  The resulting category
  is shown to be a \emph{fixpoint category}, \ie a cartesian category which is
  traced \wrt the monoidal structure given by the product, and interestingly
  this structure characterizes the category: we show that it is the free
  fixpoint category containing a given set of morphisms, thus providing a
  complete axiomatics that models of process networks should satisfy. We then
  use these tools to build a model of networks in which data vary over a
  \emph{continuous} time, in order to elaborate on the idea that process
  networks should also be able to encompass computational models such as
  hybrid systems or electric circuits. We relate this model to Kahn's semantics
  by introducing a third model of networks based on non-standard analysis, whose
  elements form an \emph{internal complete partial order} for which many
  properties of standard domains can be reformulated. The use of hyperreals in
  this model allows it to formally consider the notion of infinitesimal, and
  thus to make a bridge between discrete and continuous time: time is
  ``discrete'', but the duration between two instants is infinitesimal. Finally,
  we give some examples of uses of the model by describing some networks
  implementing common constructions in analysis.
\end{abstract}

\vspace{3ex}
Process networks have been introduced by Kahn, together with an associated
semantics based on Scott domains, as one of the first model for concurrent and
asynchronous computations~\cite{kahn:semantics-parallel}. These networks are
constituted of \emph{processes} (which may be thought as computers on a network
or threads on a computer for instance) which perform computations and can
exchange information through \emph{channels} acting as unbounded FIFO
queues. Finite or infinite sequences of values, that are called \emph{streams},
are communicated on the channels, and the semantics of the whole process network
is considered to be the streams that can be exchanged, depending on the data
provided by the environment. The set of streams can be structured as a complete
partial order, and the semantics of networks is modeled by Scott-continuous
functions on this domain: the fact that these functions admit a smallest fixpoint
turns out to be crucial in order to model ``loops'' formed by channels
in the network. A series of subsequent works have provided a precise understanding of
this fixpoint construction~\cite{faustini1982operational, hildebrandt2004relational}.

In this model, time is \emph{discrete} in the sense that a countable number of
values might be exchanged during an execution: we can consider that each value
occurs at a given instant~$t\in\N$. In this article, we are interested in
understanding how to extend the usual semantics of process networks in order to
consider computations in \emph{continuous} time, by replacing~$\N$ by~$\R^+$ for
the domain of time, so as to embrace computational models such as electric
circuits or hybrid systems, with which it bears many similarities. However, how
would such a semantics relate to the usual discrete semantics of networks? The
fundamental intuition in order to relate the two models is the following: if we
allow ourselves to consider \emph{infinitesimal} durations~$\d t$, then we can
think of continuous time as being somehow ``discrete'', its instants being~$0,\d
t, 2\d t, 3\d t, \ldots$ This idea of \emph{infinitesimal time} originates in
the works of Bliudze and Krob~\cite{bliudze2009modelling}, and was later on
developed by Benveniste, Caillaud and Pouzet~\cite{benveniste2010fundamentals},
who formalized it by using tools provided by \emph{non-standard
  analysis}~\cite{robinson-book, goldblatt_lectureshyperreals_1998} in order to
give a rigorous meaning to infinitesimals.
Here, we develop on these ideas by structuring the resulting notion of stream
into \emph{internal Scott domains}, which are shown to provide a model of
process networks, and explain how the resulting model provides a nice bridge
between discrete and continuous time.

For this purpose, we introduce a new model for Kahn process networks. However,
what is precisely the syntax for these networks that we want to model? Here, we
formalize the definition of the graphs which are often used to informally
represent process networks, by a structure that we call \emph{nets}. We show
here that the resulting category is a \emph{fixpoint category}, \ie a cartesian
category which is traced~\cite{joyal-street-verity:traced-monoidal-categories}
\wrt the monoidal structure provided by products. Moreover, this structure
characterizes the category in the sense that the category of nets is a free
category of such kind.  This result thus provides a complete description of the
axioms that a model of nets should satisfy. We finally use this structure to
show that streams in infinitesimal time form such a model. We elaborate here on
a series of earlier works. The structure of traced monoidal category for the
Kahn model has been introduced by Hildebrandt, Panangaden and
Winskel~\cite{hildebrandt2004relational} and the construction of nets introduced
here is a generalization of the one introduced
in~\cite{hasegawa2008finite}. Properties of fixpoint categories and their
relationship to fixpoint operators have been studied in details by
Hasegawa~\cite{hasegawa1997recursion}.

We begin by defining nets (Section~\ref{sec:def-nets}), show that they are free
fixpoint categories (Section~\ref{sec:fixpoint-cat}) and explain that
Scott-domain semantics can be given for nets (Section~\ref{sec:net-models}). We
then recall basic constructions and properties of non-standard analysis
(Section~\ref{sec:hyperreals}), define the notion of internal domain of which
infinitesimal-time streams are an instance (Section~\ref{sec:internal-dom}) and
relate models in infinitesimal and continuous time (Section~\ref{sec:ct-it}). We
finally conclude in Section~\ref{sec:concl}.

\section{Nets and their semantics}
\label{sec:nets}
\begin{wrapfigure}{r}{2.7cm}
  \vspace{-4ex}
  \begin{tikzpicture}[baseline=(current bounding box.center),xscale=0.50,yscale=0.50,scale=0.80]
\useasboundingbox (-0.5,-0.5) rectangle (6.5,4.5);
\draw[] (0.00,4.00) -- (0.05,4.01) -- (0.11,4.01) -- (0.16,4.02) -- (0.21,4.03) -- (0.26,4.03) -- (0.32,4.04) -- (0.37,4.04) -- (0.42,4.05) -- (0.47,4.05) -- (0.52,4.05) -- (0.57,4.06) -- (0.62,4.06) -- (0.67,4.05) -- (0.72,4.05) -- (0.77,4.05) -- (0.82,4.04) -- (0.86,4.03) -- (0.91,4.03) -- (0.96,4.01) -- (1.00,4.00);
\draw[] (1.00,4.00) -- (1.06,3.98) -- (1.12,3.95) -- (1.18,3.92) -- (1.23,3.88) -- (1.29,3.85) -- (1.34,3.80) -- (1.40,3.76) -- (1.45,3.71) -- (1.50,3.66) -- (1.55,3.61) -- (1.59,3.56) -- (1.64,3.50) -- (1.69,3.44) -- (1.73,3.38) -- (1.78,3.32) -- (1.82,3.26) -- (1.87,3.19) -- (1.91,3.13) -- (1.96,3.06) -- (2.00,3.00);
\draw[] (2.00,3.00) -- (2.05,3.01) -- (2.11,3.01) -- (2.16,3.02) -- (2.21,3.03) -- (2.26,3.03) -- (2.32,3.04) -- (2.37,3.04) -- (2.42,3.05) -- (2.47,3.05) -- (2.52,3.05) -- (2.57,3.06) -- (2.62,3.06) -- (2.67,3.05) -- (2.72,3.05) -- (2.77,3.05) -- (2.82,3.04) -- (2.86,3.03) -- (2.91,3.03) -- (2.96,3.01) -- (3.00,3.00);
\draw[] (3.00,3.00) -- (3.06,2.98) -- (3.12,2.95) -- (3.18,2.92) -- (3.23,2.88) -- (3.29,2.85) -- (3.34,2.80) -- (3.40,2.76) -- (3.45,2.71) -- (3.50,2.66) -- (3.55,2.61) -- (3.59,2.56) -- (3.64,2.50) -- (3.69,2.44) -- (3.73,2.38) -- (3.78,2.32) -- (3.82,2.26) -- (3.87,2.19) -- (3.91,2.13) -- (3.96,2.06) -- (4.00,2.00);
\draw[] (4.00,2.00) -- (4.07,1.95) -- (4.15,1.91) -- (4.22,1.86) -- (4.29,1.81) -- (4.36,1.77) -- (4.43,1.72) -- (4.50,1.67) -- (4.56,1.62) -- (4.62,1.58) -- (4.68,1.53) -- (4.73,1.48) -- (4.78,1.43) -- (4.83,1.38) -- (4.87,1.33) -- (4.91,1.27) -- (4.94,1.22) -- (4.96,1.17) -- (4.98,1.11) -- (4.99,1.06) -- (5.00,1.00);
\draw[] (5.00,1.00) -- (5.00,0.94) -- (4.99,0.88) -- (4.97,0.82) -- (4.95,0.77) -- (4.93,0.71) -- (4.89,0.65) -- (4.85,0.59) -- (4.81,0.53) -- (4.76,0.47) -- (4.71,0.42) -- (4.65,0.37) -- (4.59,0.31) -- (4.53,0.26) -- (4.46,0.22) -- (4.39,0.17) -- (4.31,0.13) -- (4.24,0.09) -- (4.16,0.06) -- (4.08,0.03) -- (4.00,0.00);
\draw[] (4.00,0.00) -- (3.79,-0.05) -- (3.57,-0.07) -- (3.35,-0.07) -- (3.13,-0.05) -- (2.91,-0.00) -- (2.69,0.06) -- (2.47,0.14) -- (2.26,0.24) -- (2.06,0.35) -- (1.87,0.48) -- (1.70,0.61) -- (1.53,0.75) -- (1.39,0.90) -- (1.26,1.06) -- (1.15,1.22) -- (1.07,1.38) -- (1.01,1.54) -- (0.98,1.70) -- (0.97,1.85) -- (1.00,2.00);
\draw[] (1.00,2.00) -- (1.02,2.06) -- (1.04,2.11) -- (1.07,2.17) -- (1.10,2.22) -- (1.14,2.27) -- (1.18,2.33) -- (1.22,2.38) -- (1.27,2.43) -- (1.32,2.48) -- (1.37,2.53) -- (1.43,2.58) -- (1.49,2.62) -- (1.55,2.67) -- (1.61,2.72) -- (1.67,2.77) -- (1.74,2.81) -- (1.80,2.86) -- (1.87,2.91) -- (1.93,2.95) -- (2.00,3.00);
\draw[] (6.00,3.00) -- (5.95,3.01) -- (5.89,3.01) -- (5.84,3.02) -- (5.79,3.03) -- (5.74,3.03) -- (5.68,3.04) -- (5.63,3.04) -- (5.58,3.05) -- (5.53,3.05) -- (5.48,3.05) -- (5.43,3.06) -- (5.38,3.06) -- (5.33,3.05) -- (5.28,3.05) -- (5.23,3.05) -- (5.18,3.04) -- (5.14,3.03) -- (5.09,3.03) -- (5.04,3.01) -- (5.00,3.00);
\draw[] (5.00,3.00) -- (4.94,2.98) -- (4.88,2.95) -- (4.82,2.92) -- (4.77,2.88) -- (4.71,2.85) -- (4.66,2.80) -- (4.60,2.76) -- (4.55,2.71) -- (4.50,2.66) -- (4.45,2.61) -- (4.41,2.56) -- (4.36,2.50) -- (4.31,2.44) -- (4.27,2.38) -- (4.22,2.32) -- (4.18,2.26) -- (4.13,2.19) -- (4.09,2.13) -- (4.04,2.06) -- (4.00,2.00);
\draw[] (0.00,1.00) -- (0.16,0.96) -- (0.33,0.93) -- (0.49,0.89) -- (0.66,0.86) -- (0.82,0.83) -- (0.98,0.80) -- (1.14,0.78) -- (1.30,0.76) -- (1.46,0.74) -- (1.61,0.73) -- (1.76,0.72) -- (1.91,0.72) -- (2.06,0.73) -- (2.21,0.74) -- (2.35,0.76) -- (2.49,0.79) -- (2.62,0.83) -- (2.75,0.88) -- (2.88,0.93) -- (3.00,1.00);
\draw[] (3.00,1.00) -- (3.06,1.04) -- (3.11,1.07) -- (3.17,1.11) -- (3.22,1.15) -- (3.27,1.20) -- (3.32,1.24) -- (3.37,1.29) -- (3.43,1.34) -- (3.48,1.39) -- (3.52,1.44) -- (3.57,1.49) -- (3.62,1.55) -- (3.67,1.60) -- (3.72,1.66) -- (3.77,1.71) -- (3.81,1.77) -- (3.86,1.83) -- (3.91,1.88) -- (3.95,1.94) -- (4.00,2.00);
\draw[] (6.00,2.00) -- (5.95,2.01) -- (5.89,2.01) -- (5.84,2.02) -- (5.79,2.03) -- (5.74,2.03) -- (5.68,2.04) -- (5.63,2.04) -- (5.58,2.05) -- (5.53,2.05) -- (5.48,2.05) -- (5.43,2.06) -- (5.38,2.06) -- (5.33,2.05) -- (5.28,2.05) -- (5.23,2.05) -- (5.18,2.04) -- (5.14,2.03) -- (5.09,2.03) -- (5.04,2.01) -- (5.00,2.00);
\draw[] (5.00,2.00) -- (4.94,1.98) -- (4.88,1.95) -- (4.82,1.92) -- (4.77,1.88) -- (4.71,1.85) -- (4.66,1.80) -- (4.60,1.76) -- (4.55,1.71) -- (4.50,1.66) -- (4.45,1.61) -- (4.41,1.56) -- (4.36,1.50) -- (4.31,1.44) -- (4.27,1.38) -- (4.22,1.32) -- (4.18,1.26) -- (4.13,1.19) -- (4.09,1.13) -- (4.04,1.06) -- (4.00,1.00);
\filldraw[fill=white] (1.50, 2.00)  -- (1.50,4.00) -- (2.50,4.00) -- (2.50,2.00) -- (1.50,2.00);
\filldraw[fill=white] (3.50, 1.00)  -- (3.50,3.00) -- (4.50,3.00) -- (4.50,1.00) -- (3.50,1.00);
\draw (2.00,3.00) node{$\alpha$};
\draw (4.00,2.00) node{$\beta$};
\end{tikzpicture}
  \vspace{-5ex}
\end{wrapfigure}
A Kahn process network~\cite{kahn:semantics-parallel} can be thought as a finite
set of boxes, with inputs and outputs, linked through wires, producing outputs
depending on their inputs which will be asynchronously transmitted over the
wires. Over time, data circulates through the network, producing streams of
data.
The dataflow semantics of these networks has been well-studied in relation with
Scott domains and category theory~\cite{panangaden1990stability,
  hildebrandt2004relational}. However, there is no canonical syntax for them,
even though a graphical notation (as pictured on the right) is often used. Since
the purpose of this paper is to provide a new semantics for process networks, we
take this opportunity to clarify the definition of the syntax, by formalizing
the graphical notation and relating it with the categorical structure of the
models. The ideas developed here in order to develop an axiomatics for Kahn
process networks originate in various previous works in the field. Kahn's
original paper~\cite{kahn:semantics-parallel} mentions results of
decidability of the equivalence of graphs representing networks (which are
called \emph{schemata})
based on earlier works~\cite{courcelle1974algorithmes}. Many subsequent articles
underline the importance of operations on networks such as sequential
composition, parallel composition, tupling (products) and
feedback~\cite{faustini1982operational,panangaden1990stability}, and a traced
monoidal category modeling Kahn networks was constructed
in~\cite{hildebrandt2004relational}. On the categorical side, the ``drawings''
used here have been formalized as \emph{string diagrams} representing morphisms
in monoidal categories~\cite{joyal1991geometry}. Traced monoidal categories were
introduced in~\cite{joyal-street-verity:traced-monoidal-categories} and turned
out to be a fundamental tool in computer
science~\cite{abramsky1996retracing}. Their axiomatics was simplified and
studied in the cartesian case~\cite{hasegawa1997recursion} and constructions of
free traced monoidal categories were provided
in~\cite{abramsky:traced-compact-closed,hasegawa2008finite}.

\subsection{Nets}
\label{sec:def-nets}
A \emph{signature} $\Sigma=(\Sigma,\sigma,\tau)$ consists of a set~$\Sigma$ of
\emph{symbols} and two functions \hbox{$\sigma,\tau:\Sigma\to\N$}, which to
every symbol~$\alpha$ associate its \emph{arity} and~\emph{coarity} respectively
-- we thus sometimes write~$\alpha:\sigma(\alpha)\to\tau(\alpha)$: the symbols
should be thought as possible building blocks for a process network, with
specified number of inputs and outputs. Given such a signature, a net
consists of instances of symbols (called operators) whose inputs and outputs are
linked together through wires (called ports) defined as follows. Given an
integer~$n$, we write~$\intset{n}$ for the set~$\set{0,\ldots,n-1}$.



\begin{definition}[Net]
  A \emph{net} $N=(P,O,\lambda,s,t)$ from~$m$ to~$n$, with $m,n\in\N$, consists
  of
  \begin{itemize}
  \item a finite set~$P$ of \emph{ports},
  \item a finite set~$O$ of \emph{operators},
  \item a \emph{labeling function} $\lambda:O\to\Sigma$,
  \item a \emph{source function}~$s:S_N\to P$ and an injective \emph{target
      function}~$t:T_N\to P$, where
    \[
    S_N=\setof{(x,i)}{x\in O, i\in\intset{\sigma\circ\lambda(x)}}\uplus\intset{n}
    \qquad\qquad
    T_N=\setof{(x,i)}{x\in O, i\in\intset{\tau\circ\lambda(x)}}\uplus\intset{m}
    \]
  \end{itemize}
  We sometimes write~$N:m\to n$ to indicate that~$N$ is a net from~$m$ to~$n$.
\end{definition}

\begin{example}
  Suppose that~$\Sigma$ contains two symbols~$\alpha$ and~$\beta$ whose sources
  (given by~$\sigma$) are both~$2$ and targets (given by~$\tau$) are
  respectively~$1$ and~$2$. The net drawn in the introduction of this section
  can be formalized as a net~$N:2\to 2$ defined by $P=\set{p_0,\ldots,p_4}$,
  $O=\set{x_0,x_1}$, $\lambda(x_0)=\alpha$, $\lambda(x_1)=\beta$, $s$ and~$t$
  are defined by
\end{example} 
\[
s(x_0,0)=p_0
\qquad
s(x_0,1)=p_4
\qquad
s(x_1,0)=p_2
\qquad
s(x_1,1)=p_1
\qquad
s(0)=p_3
\qquad
s(1)=p_4
\]
\[
\text{and}\qquad
t(x_0,0)=p_2
\qquad
t(x_1,0)=p_3
\qquad
t(x_1,1)=p_4
\qquad
t(0)=p_0
\qquad
t(1)=p_1
\]

\begin{wrapfigure}{r}{4cm}
  \vspace{-4ex}
  \begin{tikzpicture}[baseline=(current bounding box.center),xscale=0.50,yscale=0.50,scale=0.80]
\useasboundingbox (-0.5,-0.5) rectangle (9.5,4.5);
\draw[,,dotted] (0.00,4.00) -- (0.90,4.00);
\draw[] (1.00,4.00) -- (1.05,4.01) -- (1.11,4.01) -- (1.16,4.02) -- (1.21,4.03) -- (1.26,4.03) -- (1.32,4.04) -- (1.37,4.04) -- (1.42,4.05) -- (1.47,4.05) -- (1.52,4.05) -- (1.57,4.06) -- (1.62,4.06) -- (1.67,4.05) -- (1.72,4.05) -- (1.77,4.05) -- (1.82,4.04) -- (1.86,4.03) -- (1.91,4.03) -- (1.96,4.01) -- (2.00,4.00);
\draw[] (2.00,4.00) -- (2.06,3.98) -- (2.12,3.95) -- (2.18,3.92) -- (2.23,3.88) -- (2.29,3.85) -- (2.34,3.80) -- (2.40,3.76) -- (2.45,3.71) -- (2.50,3.66) -- (2.55,3.61) -- (2.59,3.56) -- (2.64,3.50) -- (2.69,3.44) -- (2.73,3.38) -- (2.78,3.32) -- (2.82,3.26) -- (2.87,3.19) -- (2.91,3.13) -- (2.96,3.06) -- (3.00,3.00);
\draw[] (3.00,3.00) -- (3.05,3.01) -- (3.11,3.01) -- (3.16,3.02) -- (3.21,3.03) -- (3.26,3.03) -- (3.32,3.04) -- (3.37,3.04) -- (3.42,3.05) -- (3.47,3.05) -- (3.52,3.05) -- (3.57,3.06) -- (3.62,3.06) -- (3.67,3.05) -- (3.72,3.05) -- (3.77,3.05) -- (3.82,3.04) -- (3.86,3.03) -- (3.91,3.03) -- (3.96,3.01) -- (4.00,3.00);
\draw[] (4.00,3.00) -- (4.06,2.98) -- (4.12,2.95) -- (4.18,2.92) -- (4.23,2.88) -- (4.29,2.85) -- (4.34,2.80) -- (4.40,2.76) -- (4.45,2.71) -- (4.50,2.66) -- (4.55,2.61) -- (4.59,2.56) -- (4.64,2.50) -- (4.69,2.44) -- (4.73,2.38) -- (4.78,2.32) -- (4.82,2.26) -- (4.87,2.19) -- (4.91,2.13) -- (4.96,2.06) -- (5.00,2.00);
\draw[] (5.00,2.00) -- (5.04,1.93) -- (5.09,1.86) -- (5.13,1.80) -- (5.18,1.73) -- (5.22,1.67) -- (5.27,1.60) -- (5.32,1.54) -- (5.36,1.48) -- (5.41,1.42) -- (5.46,1.37) -- (5.51,1.31) -- (5.56,1.26) -- (5.61,1.22) -- (5.66,1.17) -- (5.71,1.13) -- (5.77,1.10) -- (5.82,1.07) -- (5.88,1.04) -- (5.94,1.02) -- (6.00,1.00);
\draw[] (6.00,1.00) -- (6.04,0.99) -- (6.09,0.99) -- (6.13,0.98) -- (6.18,0.98) -- (6.23,0.98) -- (6.27,0.98) -- (6.32,0.98) -- (6.37,0.98) -- (6.42,0.99) -- (6.47,0.99) -- (6.52,1.00) -- (6.57,1.00) -- (6.62,1.00) -- (6.68,1.01) -- (6.73,1.01) -- (6.78,1.01) -- (6.84,1.01) -- (6.89,1.01) -- (6.94,1.01) -- (7.00,1.00);
\draw[] (7.00,1.00) -- (7.06,0.99) -- (7.12,0.98) -- (7.19,0.96) -- (7.25,0.95) -- (7.31,0.93) -- (7.37,0.91) -- (7.43,0.88) -- (7.49,0.86) -- (7.55,0.83) -- (7.60,0.81) -- (7.65,0.78) -- (7.70,0.75) -- (7.75,0.72) -- (7.80,0.69) -- (7.84,0.66) -- (7.88,0.62) -- (7.92,0.59) -- (7.95,0.56) -- (7.98,0.53) -- (8.00,0.50);
\draw[] (8.00,0.50) -- (8.04,0.42) -- (8.05,0.35) -- (8.04,0.28) -- (7.99,0.22) -- (7.92,0.16) -- (7.82,0.11) -- (7.70,0.06) -- (7.57,0.02) -- (7.41,-0.01) -- (7.24,-0.04) -- (7.05,-0.06) -- (6.85,-0.08) -- (6.63,-0.09) -- (6.41,-0.10) -- (6.19,-0.10) -- (5.95,-0.09) -- (5.71,-0.08) -- (5.48,-0.06) -- (5.24,-0.03) -- (5.00,0.00);
\draw[] (5.00,0.00) -- (4.72,0.05) -- (4.45,0.10) -- (4.19,0.16) -- (3.94,0.23) -- (3.70,0.31) -- (3.46,0.39) -- (3.24,0.48) -- (3.04,0.58) -- (2.84,0.68) -- (2.67,0.78) -- (2.51,0.89) -- (2.36,1.01) -- (2.24,1.12) -- (2.14,1.24) -- (2.05,1.37) -- (1.99,1.49) -- (1.96,1.62) -- (1.95,1.74) -- (1.96,1.87) -- (2.00,2.00);
\draw[] (2.00,2.00) -- (2.02,2.05) -- (2.05,2.10) -- (2.08,2.15) -- (2.12,2.20) -- (2.16,2.25) -- (2.20,2.30) -- (2.24,2.35) -- (2.29,2.40) -- (2.34,2.45) -- (2.39,2.50) -- (2.45,2.55) -- (2.50,2.60) -- (2.56,2.65) -- (2.62,2.70) -- (2.68,2.75) -- (2.74,2.80) -- (2.81,2.85) -- (2.87,2.90) -- (2.94,2.95) -- (3.00,3.00);
\draw[] (7.00,3.00) -- (6.95,3.01) -- (6.89,3.01) -- (6.84,3.02) -- (6.79,3.03) -- (6.74,3.03) -- (6.68,3.04) -- (6.63,3.04) -- (6.58,3.05) -- (6.53,3.05) -- (6.48,3.05) -- (6.43,3.06) -- (6.38,3.06) -- (6.33,3.05) -- (6.28,3.05) -- (6.23,3.05) -- (6.18,3.04) -- (6.14,3.03) -- (6.09,3.03) -- (6.04,3.01) -- (6.00,3.00);
\draw[] (6.00,3.00) -- (5.94,2.98) -- (5.88,2.95) -- (5.82,2.92) -- (5.77,2.88) -- (5.71,2.85) -- (5.66,2.80) -- (5.60,2.76) -- (5.55,2.71) -- (5.50,2.66) -- (5.45,2.61) -- (5.41,2.56) -- (5.36,2.50) -- (5.31,2.44) -- (5.27,2.38) -- (5.22,2.32) -- (5.18,2.26) -- (5.13,2.19) -- (5.09,2.13) -- (5.04,2.06) -- (5.00,2.00);
\draw[,,dotted] (9.00,3.00) -- (7.10,3.00);
\draw[] (1.00,1.00) -- (1.16,0.96) -- (1.33,0.93) -- (1.49,0.89) -- (1.66,0.86) -- (1.82,0.83) -- (1.98,0.80) -- (2.14,0.78) -- (2.30,0.76) -- (2.46,0.74) -- (2.61,0.73) -- (2.76,0.72) -- (2.91,0.72) -- (3.06,0.73) -- (3.21,0.74) -- (3.35,0.76) -- (3.49,0.79) -- (3.62,0.83) -- (3.75,0.88) -- (3.88,0.93) -- (4.00,1.00);
\draw[] (4.00,1.00) -- (4.06,1.04) -- (4.11,1.07) -- (4.17,1.11) -- (4.22,1.15) -- (4.27,1.20) -- (4.32,1.24) -- (4.37,1.29) -- (4.43,1.34) -- (4.48,1.39) -- (4.52,1.44) -- (4.57,1.49) -- (4.62,1.55) -- (4.67,1.60) -- (4.72,1.66) -- (4.77,1.71) -- (4.81,1.77) -- (4.86,1.83) -- (4.91,1.88) -- (4.95,1.94) -- (5.00,2.00);
\draw[,,dotted] (0.00,1.00) -- (0.90,1.00);
\draw[,,dotted] (9.00,1.00) -- (7.10,1.00);
\filldraw[fill=white] (2.50, 2.00)  -- (2.50,4.00) -- (3.50,4.00) -- (3.50,2.00) -- (2.50,2.00);
\filldraw[fill=white] (4.50, 1.00)  -- (4.50,3.00) -- (5.50,3.00) -- (5.50,1.00) -- (4.50,1.00);
\draw (0.00,4.00) node{$\bullet$};
\draw (1.00,3.50) node{$p_0$};
\draw (1.00,4.00) node{$\bullet$};
\draw (4.00,3.50) node{$p_2$};
\draw (3.00,3.00) node{$x_0$};
\draw (4.00,3.00) node{$\bullet$};
\draw (7.00,3.00) node{$\bullet$};
\draw (7.00,3.50) node{$p_3$};
\draw (9.00,3.00) node{$\bullet$};
\draw (5.00,2.00) node{$x_1$};
\draw (0.00,1.00) node{$\bullet$};
\draw (1.00,0.50) node{$p_1$};
\draw (1.00,1.00) node{$\bullet$};
\draw (7.00,1.00) node{$\bullet$};
\draw (7.00,1.50) node{$p_4$};
\draw (9.00,1.00) node{$\bullet$};
\end{tikzpicture}
  \vspace{-5ex}
\end{wrapfigure}
\noindent
Graphically, this can be pictured as on the right. The bullets on the left and
on the right indicate the source and target~$m$ and~$n$ and the dotted lines
represent the induced source and target functions~$\intset{m}\to P$
and~$\intset{n}\to P$ respectively. Notice that a port can be used as input for
multiple wires as it is the case for the port~$p_4$ in the example. However, $t$
being injective, two wires cannot have the same output port.

\begin{definition}
  A \emph{morphism}~$\varphi:M\to N$ between two nets~$M,N:m\to n$ (with the
  same source and target) consists of a pair of functions~$\varphi_P:P_M\to P_N$
  and \hbox{$\varphi_O:O_M\to O_N$} such that for every operator~$x\in O_M$,
  $\lambda_N(\varphi_O(x))=\lambda_M(x)$, for every source~$(x,i)\in S_M$,
  \hbox{$\varphi_P(s_M(x,i))=s_N(\varphi_O(x),i)$}, for every~$k\in\intset{n}$,
  $\varphi_P(s_N(k))=s_M(k)$ and similar equations for target functions. Two
  nets~$M$ and~$N$ are isomorphic when there exists an invertible morphism
  between them, which we write~$M\approx N$.
\end{definition}

\begin{definition}
  \label{def:net-constr}
  In order to define a category whose objects are integers and morphisms are
  nets (considered up to isomorphism), we introduce the following constructions:
  \begin{itemize}
  \item \emph{Identity}. The identity net $N:n\to n$ is the net such
    that~$P=\intset{n}$, $O=\emptyset$ and $s,t:\intset{n}\to P$ are both
    the identity function.
  \item \emph{Composition}. Given two nets~$M:n_1\to n_2$ and~$N:n_2\to n_3$,
    their composite \hbox{$N\circ M:n_1\to n_3$} is the net defined by
    $P=P_M\uplus P_N/\!\sim$ where~$\sim$ is the smallest equivalence relation
    such that $s_M(k)\sim t_N(k)$ for every $k\in\intset{n_2}$, $O=O_M\uplus
    O_N$, $\lambda=\lambda_M\uplus\lambda_N$, $s$ is defined by
    $s(x,i)=s_M(x,i)$ if $x\in O_M$, $s(x,i)=s_N(x,i)$ if $x\in O_N$ and
    $s(k)=s_N(k)$ if $k\in\intset{n_3}$, and $t$ is defined similarly.
  \item \emph{Tensor}. Given two nets~$M:m\to m'$ and~$N:n\to n'$, their tensor
    product net \hbox{$M\otimes N:m+n\to m'+n'$} is the net which is defined by
    $P=P_M\uplus P_N$, $O=O_M\uplus O_N$,
    \hbox{$\lambda=\lambda_M\uplus\lambda_N$}, $s$ is defined by
    $s(x,i)=s_M(x,i)$ if $x\in O_M$ and $s(x,i)=s_N(x,i)$ if $x\in O_N$,
    $s(k)=s_M(k)$ if $k\in\intset{m'}$ and $s(k)=s_N(k-m')$
    if~$k\in\intset{n'}$, and $t$ is defined similarly.
  \item \emph{Trace}. Given a net~$N:n_1+n\to n_2+n$, we define the
    net~$\Tr{n_1,n_2}n(N):n_1\to n_2$ by $P=P_N/\sim$ where~$\sim$ is the
    smallest equivalence relation such that $s_N(n_2+k)=t_N(n_1+k)$ for every
    $k\in\intset{n}$, $O=O_N$, $\lambda=\lambda_N$, $s=q\circ s_N$ and $t=q\circ
    t_N$ where~$q:P_N\to P$ is the canonical quotient map.
  \end{itemize}
\end{definition}

It can easily be shown that the constructions above are compatible with
isomorphisms of nets (\eg if $M\approx M'$ and~$N\approx N'$ then~$M\otimes
N\approx M'\otimes N'$). It thus makes sense to define the following category:

\begin{definition}
  We write~$\Net_\Sigma$ for the category~$\Net_\Sigma$ whose objects are
  natural integers, morphisms $N:m\to n$ are isomorphism classes of nets,
  identities and composition are given by the constructions of
  Definition~\ref{def:net-constr}.
\end{definition}

\begin{lemma}
  The category~$\Net_\Sigma$ is well-defined and is monoidal with the tensor
  product of Definition~\ref{def:net-constr} and $0$ as unit.
\end{lemma}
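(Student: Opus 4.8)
The plan is to verify the two assertions of the lemma separately: first that $\Net_\Sigma$ is a well-defined category, and then that the tensor product endows it with a monoidal structure. For well-definedness, I would first observe that all the constructions of Definition~\ref{def:net-constr} are stated on representative nets, so the real content (already claimed in the preceding paragraph) is that they descend to isomorphism classes; I would make this precise by checking that the source, target and labeling data of a composite are determined, up to the canonical bijection of ports induced by a pair of isomorphisms $\varphi:M\to M'$ and $\psi:N\to N'$, independently of the chosen representatives. Once composition is well-defined on classes, I would establish the category axioms: that the identity net acts as a two-sided unit, and that composition is associative. For the unit laws, composing with the identity net $\id_n$ merely introduces fresh ports $\intset n$ that are immediately identified, via $\sim$, with the ports they are glued to, so the quotient $P_M\uplus\intset n/\!\sim$ is in canonical bijection with $P_M$, yielding an isomorphism $N\circ\id_n\approx N$ and symmetrically $\id_n\circ N\approx N$.

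The main work is associativity, $(N\circ M)\circ L\approx N\circ(M\circ L)$ for composable nets $L:n_0\to n_1$, $M:n_1\to n_2$, $N:n_2\to n_3$. The key observation is that in both bracketings the underlying operator set is the disjoint union $O_L\uplus O_M\uplus O_N$ and the port set is a quotient of $P_L\uplus P_M\uplus P_N$. I would argue that the two equivalence relations obtained by the two orders of gluing in fact coincide: each is the smallest equivalence relation identifying $s(k)$ with the matching $t(k)$ across every internal interface $\intset{n_1}$ and $\intset{n_2}$, and since these interfaces are disjoint the order in which the two families of identifications are imposed does not affect the generated equivalence. Hence the two port quotients are canonically isomorphic, and one checks that $s$, $t$ and $\lambda$ agree under this isomorphism, giving the desired isomorphism of nets.

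For the monoidal structure I would proceed analogously. First I would verify that $\otimes$ is a bifunctor: that it is compatible with isomorphisms (already noted above), that $\id_m\otimes\id_n\approx\id_{m+n}$, and that the interchange law $(N'\circ N)\otimes(M'\circ M)\approx(N'\otimes M')\circ(N\otimes M)$ holds, again by comparing disjoint unions of operators and suitably-indexed port quotients and exhibiting the evident reindexing bijection on $\intset{m'+n'}$. The object $0$ is a strict unit, since tensoring with the empty net $\emptyset\to\emptyset$ changes neither ports, operators nor the labeling, so $N\otimes\id_0=N=\id_0\otimes N$ on the nose. Associativity of $\otimes$ on objects, $(m+n)+p=m+(n+p)$, is strict because addition of integers is associative, and the corresponding associator on nets is the identity after the canonical reindexing of the boundary sets $\intset{m'+n'+p'}$; similarly the left and right unitors are identities. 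I would then note that because associator and unitors are identities (the structure is strict monoidal), the pentagon and triangle coherence conditions hold trivially.

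The step I expect to be the main obstacle is the careful bookkeeping of the boundary-index bijections in the interchange law and in associativity of $\otimes$: the source function on the outer boundary is defined by the case split $s(k)=s_M(k)$ for $k\in\intset{m'}$ and $s(k)=s_N(k-m')$ for $k\in\intset{n'}$, so verifying that tensor commutes with composition requires tracking how these shifted indices are relabeled when ports are identified, and confirming that the reindexing is genuinely a bijection respecting $s$, $t$ and $\lambda$. None of this is deep, but it is exactly the kind of index arithmetic where an off-by-$m'$ error is easy to make, so I would set up explicit notation for the quotient maps and reindexings and check each equation pointwise rather than appealing to a picture.
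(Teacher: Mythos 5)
Your proposal is correct, and it matches the paper exactly insofar as there is anything to match: the paper states this lemma without proof, having just remarked that the constructions of Definition~\ref{def:net-constr} are compatible with isomorphisms of nets, so the routine verifications you spell out -- descent to isomorphism classes, the unit laws via collapsing the glued boundary ports, associativity of composition via the coincidence of the two generated equivalence relations on $P_L\uplus P_M\uplus P_N$, the interchange law, and strictness of $\otimes$ with $0$ as unit -- are precisely the checks the authors leave implicit, and your identification of the boundary-index bookkeeping as the only delicate point is accurate.

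One small check you omit, which is genuinely part of ``well-defined'': the composite $N\circ M$ must again be a \emph{net}, i.e., its target function must remain injective after the quotient by $\sim$ (for the tensor this is trivial, since no quotient is taken). This does hold, but it deserves a line: the only identifications are $s_M(k)\sim t_N(k)$ for $k\in\intset{n_2}$, and these boundary targets $t_N(k)$ are precisely the elements removed from the domain of the composite's target function. An equivalence class therefore contains at most one port of $P_M$ together with some ports $t_N(k)$, $k\in\intset{n_2}$, of $P_N$; since $t_N$ is injective, no operator-output target $t_N(x,i)$ lies in such a class, and since no two ports of $P_M$ are ever identified, injectivity of $t_M$ and $t_N$ on the remaining domain passes to the quotient. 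With that check added, your argument is complete.
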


\begin{remark}
  In order to make a more fine-grained study of the categorical structure of
  nets, we could have avoided quotienting morphisms by isomorphisms of net and
  defined a bicategory~\cite{benabou:bicategories} whose 0-cells are integers,
  1-cells are nets and 2-cells are morphisms of nets. We did not do this here to
  simplify the presentation.
\end{remark}

\begin{remark}
  This construction, as well as the following, can be extended without
  difficulty to define multisorted nets (\ie where the various inputs of
  operators have different types), see~\cite{hasegawa2008finite} for a similar
  construction. A nice and abstract description of this construction can be
  carried on using polygraphs~\cite{burroni1993higher}, in a way similar
  to~\cite{mimram:ccp-rs}.
\end{remark}

Even though the output of an operator can be duplicated or erased, the
category~$\Net_\Sigma$ fails to have finite products. This is essentially due to
the fact that duplication and erasure are not natural, \eg the following nets
(of type~$1\to 2$ and~$1\to 0$ respectively) are different:
\begin{center}
  \begin{tikzpicture}[baseline=(current bounding box.center),xscale=0.50,yscale=0.50,scale=0.80]
\useasboundingbox (-0.5,-0.5) rectangle (4.5,2.5);
\draw[,,] (2.00,2.00) -- (3.00,2.00);
\draw[] (2.00,0.00) -- (1.93,0.05) -- (1.85,0.10) -- (1.78,0.15) -- (1.70,0.20) -- (1.63,0.25) -- (1.56,0.30) -- (1.50,0.35) -- (1.43,0.40) -- (1.37,0.45) -- (1.31,0.50) -- (1.26,0.55) -- (1.21,0.60) -- (1.16,0.65) -- (1.12,0.70) -- (1.09,0.75) -- (1.06,0.80) -- (1.03,0.85) -- (1.01,0.90) -- (1.00,0.95) -- (1.00,1.00);
\draw[] (1.00,1.00) -- (1.00,1.05) -- (1.01,1.10) -- (1.03,1.15) -- (1.06,1.20) -- (1.09,1.25) -- (1.12,1.30) -- (1.16,1.35) -- (1.21,1.40) -- (1.26,1.45) -- (1.31,1.50) -- (1.37,1.55) -- (1.43,1.60) -- (1.50,1.65) -- (1.56,1.70) -- (1.63,1.75) -- (1.70,1.80) -- (1.78,1.85) -- (1.85,1.90) -- (1.93,1.95) -- (2.00,2.00);
\draw[,,dotted] (4.00,2.00) -- (3.10,2.00);
\draw[,,dotted] (0.00,1.00) -- (0.90,1.00);
\draw[,,] (2.00,0.00) -- (3.00,0.00);
\draw[,,dotted] (4.00,0.00) -- (3.10,0.00);
\filldraw[fill=white] (1.50, 1.50)  -- (1.50,2.50) -- (2.50,2.50) -- (2.50,1.50) -- (1.50,1.50);
\filldraw[fill=white] (1.50, -0.50)  -- (1.50,0.50) -- (2.50,0.50) -- (2.50,-0.50) -- (1.50,-0.50);
\draw (2.00,2.00) node{$\alpha$};
\draw (3.00,2.00) node{$\bullet$};
\draw (4.00,2.00) node{$\bullet$};
\draw (0.00,1.00) node{$\bullet$};
\draw (1.00,1.00) node{$\bullet$};
\draw (2.00,0.00) node{$\alpha$};
\draw (3.00,0.00) node{$\bullet$};
\draw (4.00,0.00) node{$\bullet$};
\end{tikzpicture}
  \quad$\neq$\quad
  \begin{tikzpicture}[baseline=(current bounding box.center),xscale=0.50,yscale=0.50,scale=0.80]
\useasboundingbox (-0.5,-0.5) rectangle (4.5,2.5);
\draw[,,dotted] (4.00,2.00) -- (3.10,1.00);
\draw[,,dotted] (0.00,1.00) -- (0.90,1.00);
\draw[,,] (2.00,1.00) -- (3.00,1.00);
\draw[,,] (1.00,1.00) -- (2.00,1.00);
\draw[,,dotted] (4.00,0.00) -- (3.10,0.90);
\filldraw[fill=white] (1.50, 0.50)  -- (1.50,1.50) -- (2.50,1.50) -- (2.50,0.50) -- (1.50,0.50);
\draw (4.00,2.00) node{$\bullet$};
\draw (0.00,1.00) node{$\bullet$};
\draw (1.00,1.00) node{$\bullet$};
\draw (2.00,1.00) node{$\alpha$};
\draw (3.00,1.00) node{$\bullet$};
\draw (4.00,0.00) node{$\bullet$};
\end{tikzpicture}
  \qquad\qquad\qquad
  \begin{tikzpicture}[baseline=(current bounding box.center),xscale=0.50,yscale=0.50,scale=0.80]
\useasboundingbox (-0.5,-0.5) rectangle (3.5,0.5);
\draw[,,dotted] (0.00,0.00) -- (0.90,0.00);
\draw[,,] (2.00,0.00) -- (3.00,0.00);
\draw[,,] (1.00,0.00) -- (2.00,0.00);
\filldraw[fill=white] (1.50, -0.50)  -- (1.50,0.50) -- (2.50,0.50) -- (2.50,-0.50) -- (1.50,-0.50);
\draw (0.00,0.00) node{$\bullet$};
\draw (1.00,0.00) node{$\bullet$};
\draw (2.00,0.00) node{$\alpha$};
\draw (3.00,0.00) node{$\bullet$};
\end{tikzpicture}
  \quad$\neq$\quad
  \begin{tikzpicture}[baseline=(current bounding box.center),xscale=0.50,yscale=0.50,scale=0.80]
\useasboundingbox (-0.5,-0.5) rectangle (1.5,0.5);
\draw[,,dotted] (0.00,0.00) -- (0.90,0.00);
\draw (0.00,0.00) node{$\bullet$};
\draw (1.00,0.00) node{$\bullet$};
\end{tikzpicture}

\end{center}
We can however recover products by considering the two inequalities above as
rewriting rules on nets from left to right as follows.
\begin{itemize}
\item \emph{Sharing}. Given a net \hbox{$N:m\to n$}, suppose that there exists
  two operators~$x,y\in O$ with the same label and the same inputs, \ie
  $\lambda(x)=\lambda(y)$ and for every $i\in\intset{\sigma\circ\lambda(x)}$,
  $s(x,i)=s(y,i)$. We define a net \hbox{$N':m\to n$} by $P=P_N/\sim$ where
  $\sim$ is the smallest equivalence relation such that \hbox{$t(x,i)\sim
    t(y,i)$} for every $i\in\intset{\tau\circ\lambda(x)}$, $O=O_N/\sim'$
  where~$\sim'$ is the smallest equivalence relation identifying~$x$ and~$y$,
  and $\lambda$, $s$ and~$t$ are obtained by quotienting the corresponding
  functions of~$N$. The net~$N'$ is said to be obtained from~$N$ by
  \emph{sharing}.
\item \emph{Erasing}. Given a net \hbox{$N:m\to n$}, suppose that there exists
  an operator~$x\in O$ such that for every $i\in\intset{\tau\circ\lambda(x)}$,
  $s^{-1}(t(x,i))=\emptyset$. Informally, none of the outputs of the
  operator~$x$ is used as an input for some other operator. We write
  \hbox{$N':m\to n$} for the net obtained from~$N$ by removing the operator~$x$
  as well as all the ports~$t(x,i)$ for $i\in\intset{\tau\circ\lambda(x)}$. The
  net~$N'$ is said to be obtained from~$N$ by \emph{erasing}.
\end{itemize}
We say that $N$ \emph{se-rewrites} to~$N'$ when~$N'$ can be obtained from~$N$ by
sharing or erasing. The \emph{se-equivalence} is the smallest equivalence
relation containing the se-rewriting relation.

\begin{proposition}
  The category~$\sNet_\Sigma$ obtained from~$\Net_\Sigma$ by quotienting
  morphisms by se-equivalence has finite products, given on objects by the
  tensor product of~$\Net_\Sigma$.
\end{proposition}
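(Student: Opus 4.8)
The plan is to recognize $\sNet_\Sigma$ as a \emph{cartesian} symmetric monoidal category via the standard characterization of such categories (due to Fox): a symmetric monoidal category is cartesian, with the tensor as categorical product and the unit as terminal object, exactly when each object carries a natural commutative comonoid structure whose comultiplication and counit are moreover monoidal natural transformations. So the task splits into exhibiting such a structure and then verifying the comonoid laws and the naturality of the comultiplication and counit. Before any of this I would check that se-equivalence is a \emph{congruence}, i.e. compatible with composition, tensor, symmetries and trace, so that the quotient $\sNet_\Sigma$ is a well-defined symmetric monoidal category inheriting its structure from $\Net_\Sigma$; this is routine, since each se-rewriting step performed on a subnet can be transported through the gluing operations of Definition~\ref{def:net-constr}.

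The comonoid structure is supplied by the duplication and erasure nets already pictured: the comultiplication $\Delta_n\colon n\to n+n$ is the net with $n$ ports each read twice, and the counit $\diamond_n\colon n\to 0$ is the net with $n$ unread ports, both assembled from the $n=1$ cases by tensoring and symmetries. I would first dispatch the \emph{equational} part, namely coassociativity, counitality, cocommutativity, together with the monoidal compatibilities ($\Delta_{m+n}$ decomposing as $\Delta_m\otimes\Delta_n$ up to a symmetry, and $\diamond_{m+n}=\diamond_m\otimes\diamond_n$). All of these already hold \emph{up to isomorphism} in $\Net_\Sigma$: they merely rearrange wires and shared or unread ports, so no rewriting is required and they descend to $\sNet_\Sigma$. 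Granting naturality, the projections are then $\id\otimes\diamond$ and $\diamond\otimes\id$ and the pairing of $f$ and $g$ is $(f\otimes g)\circ\Delta$, and the universal property of the product follows formally from the comonoid laws.

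The heart of the argument, and the step I expect to be the main obstacle, is \emph{naturality} of $\Delta$ and $\diamond$, which is precisely what the two rewriting rules are designed to force. Naturality of $\Delta$ asks that $\Delta_n\circ f = (f\otimes f)\circ\Delta_m$ for every net $f\colon m\to n$: the right-hand net is two disjoint copies of $f$ fed by duplicated inputs, and collapsing it onto the left-hand net is exactly an iterated application of \emph{sharing}, merging each operator of the first copy with its twin in the second. Dually, naturality of $\diamond$ asks $\diamond_n\circ f=\diamond_m$, realized by iterated \emph{erasing} of operators once their outputs have been discarded. I would prove both by induction on the structure of $f$ (equivalently, directly on its set of operators): naturality is preserved by composition and tensor, using the comonoid and compatibility laws just established, and it holds for a single operator by one rewriting step. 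The delicate case is the \emph{trace}: for a net with a feedback loop, the two copies of an operator lying on the cycle do not yet read equal input ports, so a single sharing (respectively erasing) cannot simply be applied in dataflow order and the naive one-at-a-time procedure appears to deadlock on cycles. Making this case precise—showing that the simultaneous identification of the two copies is nonetheless realizable as a legitimate sequence of se-rewritings, or equivalently that naturality passes through $\Tr{n_1,n_2}{n}(-)$—is where the genuine combinatorial work lies, and I would isolate it as a separate lemma describing how sharing and erasing interact with the quotient defining the trace.
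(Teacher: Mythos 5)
Your overall strategy---recognizing $\sNet_\Sigma$ as cartesian via Fox's characterization (natural cocommutative comonoids on every object) rather than verifying the universal property of $m+n$ directly---is a genuinely different route from the paper's proof, which just exhibits $0$ as terminal and the projection nets out of $m+n$ and declares ``all required axioms are easily verified''. It is, in fact, the viewpoint the paper itself adopts later in the proof of Theorem~\ref{thm:free-sigma}, where quotienting by se-equivalence is said to amount to imposing naturality of the comonoid structure. Your preliminary steps are sound: se-equivalence must indeed be checked to be a congruence, the comonoid equations (coassociativity, counitality, cocommutativity, monoidal compatibility) hold already up to isomorphism in $\Net_\Sigma$, and naturality of $\Delta$ and $\diamond$ at a single operator is exactly one sharing (respectively erasing) step.

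However, the lemma you isolate and leave unproven---naturality through the trace---is a genuine gap, and your suspicion of a ``deadlock'' on cycles is not just an obstacle to the naive procedure: with se-equivalence as literally defined, the needed identities are \emph{false}. Take $\alpha:1\to 1$ and let $f:0\to 1$ be the net with one operator $x$ and one port $p$, with $s(x,0)=t(x,0)=p$ and $s(0)=p$ (the output of $\alpha$ fed back to its own input and exported). Then $\Delta_1\circ f$ is a single self-loop read by both outputs, while $(f\otimes f)\circ\Delta_0=f\otimes f$ is two disjoint self-loops. Sharing never applies to the latter, since the two copies of $x$ read distinct ports, and erasing never applies to an operator on a cycle, since its output is read by itself; so both nets are se-normal forms, non-isomorphic, hence (by the termination and confluence the paper asserts) not se-equivalent---yet $\Delta_B\circ u=(u\times u)\circ\Delta_A$ holds for \emph{every} morphism in any category whose tensor is a product. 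The same literal reading of erasing even obstructs terminality of $0$: the net $0\to 0$ consisting of one orphan self-loop is a normal form distinct from the empty net, and similarly $\diamond_1\circ f=\diamond_0$ fails because the discarded loop cannot be garbage-collected. So completing your argument is not a matter of more combinatorics inside the given rewriting system: one must either strengthen erasing to reachability-based garbage collection \emph{and} impose the naturality equations $\Delta_n\circ f=(f\otimes f)\circ\Delta_m$ and $\diamond_n\circ f=\diamond_m$ as a congruence (strictly coarser than the equivalence generated by the two local rules, precisely on nets with cycles), or restrict the claim. The paper's one-line proof silently skips this verification; your proposal has the merit of locating exactly where it breaks, but as written the key lemma is unproven and, for the rules as stated, unprovable.
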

\begin{proof}
  The terminal object is~$0$ and the product of two objects~$m$ and~$n$ is~$m+n$
  with the projection of~$m$ defined as the net~$N:m+n\to m$ such
  that~$P=\intset{m+n}$, $O=\emptyset$, $s:\intset{m}\to P$ is the canonical
  injection and~$t:\intset{m+n}\to P$ is the identity (and the projection on~$n$
  is defined similarly). All required axioms are easily verified.
\end{proof}

\noindent
It can be shown that the se-rewriting rules form a terminating (they decrease
the number of operators) and confluent rewriting system. The normal forms are
nets which do not contain two operators with the same label and input ports, and
do not contain operators such that none of the outputs are inputs for some other
operator. A direct alternative description of nets modulo se-equivalence, called
shared nets, can thus be defined as follows.

\begin{definition}
  A \emph{shared net} $N=(P,O,s,t)$ from~$m$ to~$n$ consists of
  \begin{itemize}
  \item a finite set~$P$ of \emph{ports},
  \item a finite set~$O$ of \emph{operators} which are pairs
    $(\alpha,(s_i)_{i\in\intset{\sigma(\alpha)}})$ where~$\alpha\in\Sigma$ is a
    symbol and $(s_i)_{i\in\intset{\sigma(\alpha)}}$ is a family of ports called
    the \emph{sources} of the operators,
  \item a \emph{source function}~$s:\intset{n}\to P$,
  \item an injective \emph{target function}~$t:T_N\to P$, where $T_N=\setof{(x,i)}{x\in
      O,i\in\intset{\tau\circ\lambda(x)}}\uplus\intset{m}$,
  \end{itemize}
  such that for every operator~$x\in O$, $s^{-1}(t(T_x))\neq\emptyset$ where
  $T_x=\setof{(x,i)}{i\in\intset{\tau\circ\lambda(x)}}$.
\end{definition}

\begin{proposition}
  A category whose objects are integers and morphisms are shared nets modulo
  (suitably defined) isomorphism can be defined in a similar way as previously,
  and this category can be shown to be equivalent to~$\sNet_\Sigma$ through
  product-preserving functors.
\end{proposition}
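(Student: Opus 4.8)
The plan is to establish an equivalence of categories between the category of shared nets and $\sNet_\Sigma$ by constructing explicit functors in both directions and showing they are mutually pseudo-inverse while preserving products. First I would make precise the category of shared nets: objects are integers, morphisms are isomorphism classes of shared nets (where an isomorphism is a bijection on ports and on operators commuting with the source and target data in the evident way), and identity, composition, tensor and trace are defined by transcribing the constructions of Definition~\ref{def:net-constr} to the shared-net data and then re-normalizing — that is, after composing one must again quotient out any operators that become duplicates (same label and same source ports) and discard operators all of whose outputs feed nowhere, so as to land back in the shared-net format. Verifying that these operations are well-defined on isomorphism classes and satisfy the category axioms is routine bookkeeping, analogous to the already-asserted well-definedness for $\Net_\Sigma$.

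Next I would build the comparison functor. The discussion preceding the statement already observes that se-rewriting is terminating (it strictly decreases the number of operators) and confluent, so every net has a unique se-normal form, and the normal forms are exactly the shared nets. This gives a canonical choice of representative in each se-equivalence class. I would define a functor $F\colon\sNet_\Sigma\to\category{SharedNet}_\Sigma$ sending a se-equivalence class to the shared net underlying its normal form, and a functor $G$ in the other direction sending a shared net to its se-equivalence class under the obvious inclusion of shared nets into nets. The key points to check are functoriality — that $F$ of a composite equals the composite of the $F$-images, which amounts to observing that normalizing a composite agrees with composing-then-normalizing, and this follows from confluence together with the fact that se-rewrites performed inside one factor commute with the gluing of ports along the shared boundary. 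One then verifies $F\circ G\cong\id$ and $G\circ F\cong\id$; the second isomorphism is exactly the statement that normalization does not change the se-equivalence class, and the first holds because a shared net is already in normal form, so $F\circ G$ is literally the identity on objects and morphisms up to the chosen isomorphism representatives.

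Finally I would check that both functors preserve finite products. By the preceding proposition $\sNet_\Sigma$ has products given on objects by $+$ with the projections described there; I would exhibit the same objects and projection nets, already in shared-net form, as the product structure on $\category{SharedNet}_\Sigma$, and confirm that $F$ and $G$ send projections to projections and commute with pairing. Since $G$ is the identity on objects and $F(m+n)=m+n$, product preservation reduces to checking that the universal maps agree, which follows once functoriality is in place.

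The main obstacle I expect is functoriality of $F$ under composition, \ie showing that se-normalization is a congruence compatible with the gluing step of net composition. Concretely, when two nets are composed their boundary ports are identified, and a sharing or erasing redex present in one factor may only become enabled, or may interact with duplicated operators, after this identification; the delicate verification is that the confluent normal form of the glued net does not depend on whether one normalizes the factors before or after gluing. I would handle this by a local-confluence argument showing that boundary gluing commutes with the se-rewriting steps up to the equivalence, so that the unique normal forms coincide, rather than by an explicit case analysis on redexes.
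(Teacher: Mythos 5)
Your proposal is correct and follows essentially the same route as the paper, whose entire proof is the one-sentence observation that the canonical forms of nets with respect to se-rewriting are in bijection with shared nets; your normalization functor $F$ and inclusion $G$, with $F\circ G$ and $G\circ F$ identities, are exactly the detailed unpacking of that bijection (in fact into an isomorphism, not merely an equivalence, of categories). The ``main obstacle'' you flag---compatibility of normalization with composition---reduces to se-equivalence being a congruence for the operations of Definition~\ref{def:net-constr}, which is already presupposed when $\sNet_\Sigma$ is defined as a quotient category, so the paper treats it as settled.
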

\begin{proof}
  The canonical forms of nets \wrt se-rewriting are in bijection with shared
  nets.
\end{proof}

\noindent
Next section justifies why the category $\sNet$ provides a convincing definition
of the networks.

\subsection{Nets as free fixpoint categories}
\label{sec:fixpoint-cat}
We now study the structure of the category~$\sNet_\Sigma$ in order to define a
proper denotational model this category. Recall that a (strict) \emph{monoidal
  category} $(\C,\otimes,I)$ consists of a category~$\C$ together with a
bifunctor $\otimes:\C\times\C\to\C$, called \emph{tensor}, and an object~$I$,
called \emph{unit}, such that the tensor is strictly associative and admits
units as neutral elements. A (strict) \emph{symmetric monoidal category}
consists of a monoidal category $(\C,\otimes,I)$ equipped with a natural family
\hbox{$\gamma_{A,B}:A\otimes B\to B\otimes A$} of isomorphisms satisfying
$\gamma_{B,A}\circ\gamma_{A,B}=\id_{A\otimes B}$ as well as other coherence
axioms, see~\cite{maclane:cwm} for details. Any category~$\C$ with finite
products admits a structure of symmetric monoidal category with the cartesian
product~$\times$ as tensor and the terminal object~$1$ as unit, and this
structure can be chosen to be strict in the case of~$\sNet_\Sigma$ (thus we only
consider strict monoidal categories in the following for simplicity). A natural
notion of ``feedback'' was formalized in monoidal categories by Joyal, Street
and Verity~\cite{joyal-street-verity:traced-monoidal-categories} as follows:

\begin{definition}[Trace]
  \label{def:trace}
  A \emph{trace} on a symmetric monoidal category~$\C$ consists of a
  \emph{natural}
\end{definition}

\begin{wrapfigure}{r}{2.5cm}
  \vspace{-4.5ex}
  \begin{tikzpicture}[baseline=(current bounding box.center),xscale=0.50,yscale=0.50,scale=0.80]
\useasboundingbox (-0.5,-0.5) rectangle (6.5,3.5);
\draw[,,] (1.00,2.00) -- (3.00,2.00);
\draw[] (3.00,1.00) -- (3.05,1.01) -- (3.09,1.01) -- (3.14,1.02) -- (3.19,1.02) -- (3.24,1.03) -- (3.28,1.03) -- (3.33,1.04) -- (3.38,1.04) -- (3.43,1.04) -- (3.48,1.04) -- (3.53,1.04) -- (3.58,1.04) -- (3.63,1.04) -- (3.68,1.04) -- (3.73,1.04) -- (3.78,1.03) -- (3.84,1.03) -- (3.89,1.02) -- (3.94,1.01) -- (4.00,1.00);
\draw[] (4.00,1.00) -- (4.06,0.99) -- (4.13,0.97) -- (4.19,0.95) -- (4.25,0.93) -- (4.32,0.91) -- (4.38,0.89) -- (4.44,0.86) -- (4.50,0.84) -- (4.56,0.81) -- (4.62,0.79) -- (4.67,0.76) -- (4.72,0.73) -- (4.77,0.70) -- (4.82,0.67) -- (4.86,0.64) -- (4.90,0.61) -- (4.93,0.58) -- (4.96,0.56) -- (4.98,0.53) -- (5.00,0.50);
\draw[] (5.00,0.50) -- (5.02,0.45) -- (5.02,0.40) -- (5.00,0.36) -- (4.97,0.32) -- (4.92,0.28) -- (4.86,0.24) -- (4.78,0.21) -- (4.69,0.18) -- (4.59,0.15) -- (4.48,0.12) -- (4.36,0.10) -- (4.23,0.08) -- (4.10,0.06) -- (3.95,0.04) -- (3.80,0.03) -- (3.65,0.02) -- (3.49,0.01) -- (3.33,0.00) -- (3.16,0.00) -- (3.00,0.00);
\draw[] (3.00,0.00) -- (2.84,0.00) -- (2.67,0.00) -- (2.51,0.01) -- (2.35,0.02) -- (2.20,0.03) -- (2.05,0.04) -- (1.90,0.06) -- (1.77,0.08) -- (1.64,0.10) -- (1.52,0.12) -- (1.41,0.15) -- (1.31,0.18) -- (1.22,0.21) -- (1.14,0.24) -- (1.08,0.28) -- (1.03,0.32) -- (1.00,0.36) -- (0.98,0.40) -- (0.98,0.45) -- (1.00,0.50);
\draw[] (1.00,0.50) -- (1.02,0.53) -- (1.04,0.56) -- (1.07,0.58) -- (1.10,0.61) -- (1.14,0.64) -- (1.18,0.67) -- (1.23,0.70) -- (1.28,0.73) -- (1.33,0.76) -- (1.38,0.79) -- (1.44,0.81) -- (1.50,0.84) -- (1.56,0.86) -- (1.62,0.89) -- (1.68,0.91) -- (1.75,0.93) -- (1.81,0.95) -- (1.87,0.97) -- (1.94,0.99) -- (2.00,1.00);
\draw[] (2.00,1.00) -- (2.06,1.01) -- (2.11,1.02) -- (2.16,1.03) -- (2.22,1.03) -- (2.27,1.04) -- (2.32,1.04) -- (2.37,1.04) -- (2.42,1.04) -- (2.47,1.04) -- (2.52,1.04) -- (2.57,1.04) -- (2.62,1.04) -- (2.67,1.04) -- (2.72,1.03) -- (2.76,1.03) -- (2.81,1.02) -- (2.86,1.02) -- (2.91,1.01) -- (2.95,1.01) -- (3.00,1.00);
\draw[,,] (5.00,2.00) -- (3.00,2.00);
\draw[dashed,] (2.00,2.75) rectangle (4.00,0.30);
\filldraw[,fill=white] (2.50,0.50) rectangle (3.50,2.50);
\draw (0.50,2.00) node{$A$};
\draw (3.00,2.00) node{$f$};
\draw (5.50,2.00) node{$B$};
\draw (0.90,1.00) node{$X$};
\end{tikzpicture}
  \vspace{-5ex}
\end{wrapfigure}
\vspace{-1ex}
\noindent
family of functions~$\Tr{A,B}X:\C(A\otimes X,B\otimes X)\to\C(A,B)$. Given a
morphism \hbox{$f:A\otimes X\to B\otimes X$}, the morphism~$\Tr{A,B}X(f):A\to B$
is often drawn as on the right. A trace should satisfy the following axioms.
  \begin{enumerate}
  \item \emph{Vanishing}: for every~$f:A\otimes X\otimes Y\to B\otimes X\otimes
    Y$, $\Tr{A,B}{X\otimes Y}(f)=\Tr{A,B}X(\Tr{A\otimes X,B\otimes X}Y(f))$
    \begin{center}
      \begin{tikzpicture}[baseline=(current bounding box.center),xscale=0.50,yscale=0.50,scale=0.80]
\useasboundingbox (-0.5,-0.5) rectangle (6.5,4.5);
\draw[] (3.00,2.00) -- (2.95,2.01) -- (2.90,2.02) -- (2.85,2.02) -- (2.80,2.03) -- (2.74,2.04) -- (2.69,2.05) -- (2.64,2.05) -- (2.59,2.06) -- (2.54,2.06) -- (2.49,2.06) -- (2.44,2.07) -- (2.39,2.07) -- (2.34,2.06) -- (2.29,2.06) -- (2.24,2.06) -- (2.19,2.05) -- (2.14,2.04) -- (2.10,2.03) -- (2.05,2.02) -- (2.00,2.00);
\draw[] (2.00,2.00) -- (1.91,1.97) -- (1.83,1.92) -- (1.75,1.87) -- (1.67,1.82) -- (1.59,1.76) -- (1.52,1.69) -- (1.45,1.62) -- (1.38,1.54) -- (1.32,1.46) -- (1.26,1.38) -- (1.21,1.29) -- (1.16,1.20) -- (1.11,1.11) -- (1.08,1.02) -- (1.05,0.93) -- (1.02,0.84) -- (1.01,0.75) -- (1.00,0.67) -- (0.99,0.58) -- (1.00,0.50);
\draw[] (1.00,0.50) -- (1.02,0.40) -- (1.05,0.31) -- (1.09,0.22) -- (1.15,0.14) -- (1.21,0.06) -- (1.29,-0.01) -- (1.37,-0.08) -- (1.47,-0.14) -- (1.57,-0.20) -- (1.67,-0.25) -- (1.79,-0.30) -- (1.91,-0.34) -- (2.03,-0.38) -- (2.16,-0.41) -- (2.30,-0.44) -- (2.43,-0.46) -- (2.57,-0.48) -- (2.71,-0.49) -- (2.86,-0.50) -- (3.00,-0.50);
\draw[] (3.00,-0.50) -- (3.14,-0.50) -- (3.29,-0.49) -- (3.43,-0.48) -- (3.57,-0.46) -- (3.70,-0.44) -- (3.84,-0.41) -- (3.97,-0.38) -- (4.09,-0.34) -- (4.21,-0.30) -- (4.33,-0.25) -- (4.43,-0.20) -- (4.53,-0.14) -- (4.63,-0.08) -- (4.71,-0.01) -- (4.79,0.06) -- (4.85,0.14) -- (4.91,0.22) -- (4.95,0.31) -- (4.98,0.40) -- (5.00,0.50);
\draw[] (5.00,0.50) -- (5.01,0.58) -- (5.00,0.67) -- (4.99,0.75) -- (4.98,0.84) -- (4.95,0.93) -- (4.92,1.02) -- (4.89,1.11) -- (4.84,1.20) -- (4.79,1.29) -- (4.74,1.38) -- (4.68,1.46) -- (4.62,1.54) -- (4.55,1.62) -- (4.48,1.69) -- (4.41,1.76) -- (4.33,1.82) -- (4.25,1.87) -- (4.17,1.92) -- (4.09,1.97) -- (4.00,2.00);
\draw[] (4.00,2.00) -- (3.95,2.02) -- (3.90,2.03) -- (3.86,2.04) -- (3.81,2.05) -- (3.76,2.06) -- (3.71,2.06) -- (3.66,2.06) -- (3.61,2.07) -- (3.56,2.07) -- (3.51,2.06) -- (3.46,2.06) -- (3.41,2.06) -- (3.36,2.05) -- (3.31,2.05) -- (3.26,2.04) -- (3.20,2.03) -- (3.15,2.02) -- (3.10,2.02) -- (3.05,2.01) -- (3.00,2.00);
\draw[,,] (1.00,3.00) -- (3.00,3.00);
\draw[] (3.00,1.00) -- (3.05,1.01) -- (3.10,1.02) -- (3.15,1.03) -- (3.20,1.04) -- (3.25,1.05) -- (3.30,1.06) -- (3.35,1.07) -- (3.40,1.08) -- (3.45,1.08) -- (3.50,1.09) -- (3.55,1.09) -- (3.60,1.09) -- (3.65,1.09) -- (3.70,1.08) -- (3.75,1.08) -- (3.80,1.07) -- (3.85,1.05) -- (3.90,1.04) -- (3.95,1.02) -- (4.00,1.00);
\draw[] (4.00,1.00) -- (4.03,0.98) -- (4.07,0.96) -- (4.10,0.94) -- (4.14,0.92) -- (4.17,0.90) -- (4.20,0.88) -- (4.23,0.85) -- (4.26,0.83) -- (4.29,0.80) -- (4.32,0.77) -- (4.35,0.75) -- (4.37,0.72) -- (4.39,0.69) -- (4.42,0.66) -- (4.44,0.63) -- (4.45,0.61) -- (4.47,0.58) -- (4.48,0.55) -- (4.49,0.53) -- (4.50,0.50);
\draw[] (4.50,0.50) -- (4.51,0.44) -- (4.50,0.39) -- (4.49,0.35) -- (4.46,0.30) -- (4.42,0.26) -- (4.37,0.22) -- (4.31,0.19) -- (4.25,0.16) -- (4.17,0.13) -- (4.09,0.11) -- (4.00,0.09) -- (3.90,0.07) -- (3.80,0.05) -- (3.70,0.04) -- (3.59,0.02) -- (3.47,0.02) -- (3.36,0.01) -- (3.24,0.00) -- (3.12,0.00) -- (3.00,0.00);
\draw[] (3.00,0.00) -- (2.88,0.00) -- (2.76,0.00) -- (2.64,0.01) -- (2.53,0.02) -- (2.41,0.02) -- (2.30,0.04) -- (2.20,0.05) -- (2.10,0.07) -- (2.00,0.09) -- (1.91,0.11) -- (1.83,0.13) -- (1.75,0.16) -- (1.69,0.19) -- (1.63,0.22) -- (1.58,0.26) -- (1.54,0.30) -- (1.51,0.35) -- (1.50,0.39) -- (1.49,0.44) -- (1.50,0.50);
\draw[] (1.50,0.50) -- (1.51,0.53) -- (1.52,0.55) -- (1.53,0.58) -- (1.55,0.61) -- (1.56,0.63) -- (1.58,0.66) -- (1.61,0.69) -- (1.63,0.72) -- (1.65,0.75) -- (1.68,0.77) -- (1.71,0.80) -- (1.74,0.83) -- (1.77,0.85) -- (1.80,0.88) -- (1.83,0.90) -- (1.86,0.92) -- (1.90,0.94) -- (1.93,0.96) -- (1.97,0.98) -- (2.00,1.00);
\draw[] (2.00,1.00) -- (2.05,1.02) -- (2.10,1.04) -- (2.15,1.05) -- (2.20,1.07) -- (2.25,1.08) -- (2.30,1.08) -- (2.35,1.09) -- (2.40,1.09) -- (2.45,1.09) -- (2.50,1.09) -- (2.55,1.08) -- (2.60,1.08) -- (2.65,1.07) -- (2.70,1.06) -- (2.75,1.05) -- (2.80,1.04) -- (2.85,1.03) -- (2.90,1.02) -- (2.95,1.01) -- (3.00,1.00);
\draw[,,] (5.00,3.00) -- (3.00,3.00);
\draw[dashed,] (2.00,4.00) rectangle (4.00,0.30);
\filldraw[,fill=white] (2.50,0.50) rectangle (3.50,3.50);
\draw (0.50,3.00) node{$A$};
\draw (5.50,3.00) node{$B$};
\draw (3.00,2.00) node{$f$};
\end{tikzpicture}
      \qeq
      \begin{tikzpicture}[baseline=(current bounding box.center),xscale=0.50,yscale=0.50,scale=0.80]
\useasboundingbox (-0.5,-0.5) rectangle (6.5,4.5);
\draw[] (3.00,2.00) -- (2.95,2.01) -- (2.90,2.02) -- (2.84,2.02) -- (2.79,2.03) -- (2.74,2.04) -- (2.69,2.04) -- (2.64,2.05) -- (2.59,2.06) -- (2.54,2.06) -- (2.49,2.06) -- (2.44,2.06) -- (2.39,2.06) -- (2.34,2.06) -- (2.29,2.06) -- (2.24,2.05) -- (2.19,2.05) -- (2.14,2.04) -- (2.09,2.03) -- (2.05,2.02) -- (2.00,2.00);
\draw[] (2.00,2.00) -- (1.90,1.96) -- (1.80,1.90) -- (1.70,1.84) -- (1.61,1.76) -- (1.52,1.68) -- (1.44,1.59) -- (1.36,1.49) -- (1.29,1.39) -- (1.23,1.28) -- (1.17,1.17) -- (1.11,1.05) -- (1.07,0.93) -- (1.03,0.81) -- (1.00,0.69) -- (0.98,0.57) -- (0.96,0.45) -- (0.96,0.33) -- (0.96,0.22) -- (0.98,0.11) -- (1.00,0.00);
\draw[] (1.00,0.00) -- (1.03,-0.10) -- (1.08,-0.20) -- (1.13,-0.28) -- (1.20,-0.37) -- (1.27,-0.45) -- (1.35,-0.52) -- (1.44,-0.59) -- (1.53,-0.65) -- (1.63,-0.71) -- (1.74,-0.76) -- (1.85,-0.80) -- (1.96,-0.85) -- (2.08,-0.88) -- (2.21,-0.91) -- (2.33,-0.94) -- (2.46,-0.96) -- (2.60,-0.98) -- (2.73,-0.99) -- (2.86,-1.00) -- (3.00,-1.00);
\draw[] (3.00,-1.00) -- (3.14,-1.00) -- (3.27,-0.99) -- (3.40,-0.98) -- (3.54,-0.96) -- (3.67,-0.94) -- (3.79,-0.91) -- (3.92,-0.88) -- (4.04,-0.85) -- (4.15,-0.80) -- (4.26,-0.76) -- (4.37,-0.71) -- (4.47,-0.65) -- (4.56,-0.59) -- (4.65,-0.52) -- (4.73,-0.45) -- (4.80,-0.37) -- (4.87,-0.28) -- (4.92,-0.20) -- (4.97,-0.10) -- (5.00,0.00);
\draw[] (5.00,0.00) -- (5.02,0.11) -- (5.04,0.22) -- (5.04,0.33) -- (5.04,0.45) -- (5.02,0.57) -- (5.00,0.69) -- (4.97,0.81) -- (4.93,0.93) -- (4.89,1.05) -- (4.83,1.17) -- (4.77,1.28) -- (4.71,1.39) -- (4.64,1.49) -- (4.56,1.59) -- (4.48,1.68) -- (4.39,1.76) -- (4.30,1.84) -- (4.20,1.90) -- (4.10,1.96) -- (4.00,2.00);
\draw[] (4.00,2.00) -- (3.95,2.02) -- (3.91,2.03) -- (3.86,2.04) -- (3.81,2.05) -- (3.76,2.05) -- (3.71,2.06) -- (3.66,2.06) -- (3.61,2.06) -- (3.56,2.06) -- (3.51,2.06) -- (3.46,2.06) -- (3.41,2.06) -- (3.36,2.05) -- (3.31,2.04) -- (3.26,2.04) -- (3.21,2.03) -- (3.16,2.02) -- (3.10,2.02) -- (3.05,2.01) -- (3.00,2.00);
\draw[,,] (1.00,3.00) -- (3.00,3.00);
\draw[] (3.00,1.00) -- (3.05,1.01) -- (3.10,1.02) -- (3.15,1.03) -- (3.20,1.04) -- (3.25,1.05) -- (3.30,1.06) -- (3.35,1.07) -- (3.40,1.08) -- (3.45,1.08) -- (3.50,1.09) -- (3.55,1.09) -- (3.60,1.09) -- (3.65,1.09) -- (3.70,1.08) -- (3.75,1.08) -- (3.80,1.07) -- (3.85,1.05) -- (3.90,1.04) -- (3.95,1.02) -- (4.00,1.00);
\draw[] (4.00,1.00) -- (4.03,0.98) -- (4.07,0.96) -- (4.10,0.94) -- (4.14,0.92) -- (4.17,0.90) -- (4.20,0.88) -- (4.23,0.85) -- (4.26,0.83) -- (4.29,0.80) -- (4.32,0.77) -- (4.35,0.75) -- (4.37,0.72) -- (4.39,0.69) -- (4.42,0.66) -- (4.44,0.63) -- (4.45,0.61) -- (4.47,0.58) -- (4.48,0.55) -- (4.49,0.53) -- (4.50,0.50);
\draw[] (4.50,0.50) -- (4.51,0.44) -- (4.50,0.39) -- (4.49,0.35) -- (4.46,0.30) -- (4.42,0.26) -- (4.37,0.22) -- (4.31,0.19) -- (4.25,0.16) -- (4.17,0.13) -- (4.09,0.11) -- (4.00,0.09) -- (3.90,0.07) -- (3.80,0.05) -- (3.70,0.04) -- (3.59,0.02) -- (3.47,0.02) -- (3.36,0.01) -- (3.24,0.00) -- (3.12,0.00) -- (3.00,0.00);
\draw[] (3.00,0.00) -- (2.88,0.00) -- (2.76,0.00) -- (2.64,0.01) -- (2.53,0.02) -- (2.41,0.02) -- (2.30,0.04) -- (2.20,0.05) -- (2.10,0.07) -- (2.00,0.09) -- (1.91,0.11) -- (1.83,0.13) -- (1.75,0.16) -- (1.69,0.19) -- (1.63,0.22) -- (1.58,0.26) -- (1.54,0.30) -- (1.51,0.35) -- (1.50,0.39) -- (1.49,0.44) -- (1.50,0.50);
\draw[] (1.50,0.50) -- (1.51,0.53) -- (1.52,0.55) -- (1.53,0.58) -- (1.55,0.61) -- (1.56,0.63) -- (1.58,0.66) -- (1.61,0.69) -- (1.63,0.72) -- (1.65,0.75) -- (1.68,0.77) -- (1.71,0.80) -- (1.74,0.83) -- (1.77,0.85) -- (1.80,0.88) -- (1.83,0.90) -- (1.86,0.92) -- (1.90,0.94) -- (1.93,0.96) -- (1.97,0.98) -- (2.00,1.00);
\draw[] (2.00,1.00) -- (2.05,1.02) -- (2.10,1.04) -- (2.15,1.05) -- (2.20,1.07) -- (2.25,1.08) -- (2.30,1.08) -- (2.35,1.09) -- (2.40,1.09) -- (2.45,1.09) -- (2.50,1.09) -- (2.55,1.08) -- (2.60,1.08) -- (2.65,1.07) -- (2.70,1.06) -- (2.75,1.05) -- (2.80,1.04) -- (2.85,1.03) -- (2.90,1.02) -- (2.95,1.01) -- (3.00,1.00);
\draw[,,] (5.00,3.00) -- (3.00,3.00);
\draw[dashed,] (1.30,4.00) rectangle (4.70,-0.20);
\draw[dashed,] (2.00,3.75) rectangle (4.00,0.30);
\filldraw[,fill=white] (2.50,0.50) rectangle (3.50,3.50);
\draw (0.50,3.00) node{$A$};
\draw (5.50,3.00) node{$B$};
\draw (3.00,2.00) node{$f$};
\end{tikzpicture}
    \end{center}
  \item \emph{Superposing}:\\
    for every~$f:A\otimes X\to B\otimes X$ and~$g:C\to D$,
    $g\otimes\Tr{A,B}X(f)=\Tr{C\otimes A,D\otimes B}X(g\otimes f)$
    \begin{center}
      \begin{tikzpicture}[baseline=(current bounding box.center),xscale=0.50,yscale=0.50,scale=0.80]
\useasboundingbox (-0.5,-0.5) rectangle (6.5,5.5);
\draw[,,] (1.00,4.00) -- (3.00,4.00);
\draw[,,] (5.00,4.00) -- (3.00,4.00);
\draw[,,] (1.00,2.00) -- (3.00,2.00);
\draw[] (3.00,1.00) -- (3.05,1.01) -- (3.09,1.01) -- (3.14,1.02) -- (3.19,1.02) -- (3.24,1.03) -- (3.28,1.03) -- (3.33,1.04) -- (3.38,1.04) -- (3.43,1.04) -- (3.48,1.04) -- (3.53,1.04) -- (3.58,1.04) -- (3.63,1.04) -- (3.68,1.04) -- (3.73,1.04) -- (3.78,1.03) -- (3.84,1.03) -- (3.89,1.02) -- (3.94,1.01) -- (4.00,1.00);
\draw[] (4.00,1.00) -- (4.06,0.99) -- (4.13,0.97) -- (4.19,0.95) -- (4.25,0.93) -- (4.32,0.91) -- (4.38,0.89) -- (4.44,0.86) -- (4.50,0.84) -- (4.56,0.81) -- (4.62,0.79) -- (4.67,0.76) -- (4.72,0.73) -- (4.77,0.70) -- (4.82,0.67) -- (4.86,0.64) -- (4.90,0.61) -- (4.93,0.58) -- (4.96,0.56) -- (4.98,0.53) -- (5.00,0.50);
\draw[] (5.00,0.50) -- (5.02,0.45) -- (5.02,0.40) -- (5.00,0.36) -- (4.97,0.32) -- (4.92,0.28) -- (4.86,0.24) -- (4.78,0.21) -- (4.69,0.18) -- (4.59,0.15) -- (4.48,0.12) -- (4.36,0.10) -- (4.23,0.08) -- (4.10,0.06) -- (3.95,0.04) -- (3.80,0.03) -- (3.65,0.02) -- (3.49,0.01) -- (3.33,0.00) -- (3.16,0.00) -- (3.00,0.00);
\draw[] (3.00,0.00) -- (2.84,0.00) -- (2.67,0.00) -- (2.51,0.01) -- (2.35,0.02) -- (2.20,0.03) -- (2.05,0.04) -- (1.90,0.06) -- (1.77,0.08) -- (1.64,0.10) -- (1.52,0.12) -- (1.41,0.15) -- (1.31,0.18) -- (1.22,0.21) -- (1.14,0.24) -- (1.08,0.28) -- (1.03,0.32) -- (1.00,0.36) -- (0.98,0.40) -- (0.98,0.45) -- (1.00,0.50);
\draw[] (1.00,0.50) -- (1.02,0.53) -- (1.04,0.56) -- (1.07,0.58) -- (1.10,0.61) -- (1.14,0.64) -- (1.18,0.67) -- (1.23,0.70) -- (1.28,0.73) -- (1.33,0.76) -- (1.38,0.79) -- (1.44,0.81) -- (1.50,0.84) -- (1.56,0.86) -- (1.62,0.89) -- (1.68,0.91) -- (1.75,0.93) -- (1.81,0.95) -- (1.87,0.97) -- (1.94,0.99) -- (2.00,1.00);
\draw[] (2.00,1.00) -- (2.06,1.01) -- (2.11,1.02) -- (2.16,1.03) -- (2.22,1.03) -- (2.27,1.04) -- (2.32,1.04) -- (2.37,1.04) -- (2.42,1.04) -- (2.47,1.04) -- (2.52,1.04) -- (2.57,1.04) -- (2.62,1.04) -- (2.67,1.04) -- (2.72,1.03) -- (2.76,1.03) -- (2.81,1.02) -- (2.86,1.02) -- (2.91,1.01) -- (2.95,1.01) -- (3.00,1.00);
\draw[,,] (5.00,2.00) -- (3.00,2.00);
\filldraw[,fill=white] (2.50,3.50) rectangle (3.50,4.50);
\draw[dashed,] (2.00,3.00) rectangle (4.00,0.30);
\filldraw[,fill=white] (2.50,0.50) rectangle (3.50,2.50);
\draw (0.50,4.00) node{$C$};
\draw (3.00,4.00) node{$g$};
\draw (5.50,4.00) node{$D$};
\draw (0.50,2.00) node{$A$};
\draw (3.00,2.00) node{$f$};
\draw (5.50,2.00) node{$B$};
\end{tikzpicture}
      \qeq
      \begin{tikzpicture}[baseline=(current bounding box.center),xscale=0.50,yscale=0.50,scale=0.80]
\useasboundingbox (-0.5,-0.5) rectangle (6.5,5.5);
\draw[,,] (1.00,4.00) -- (3.00,4.00);
\draw[,,] (5.00,4.00) -- (3.00,4.00);
\draw[,,] (1.00,2.00) -- (3.00,2.00);
\draw[] (3.00,1.00) -- (3.05,1.01) -- (3.09,1.01) -- (3.14,1.02) -- (3.19,1.02) -- (3.24,1.03) -- (3.28,1.03) -- (3.33,1.04) -- (3.38,1.04) -- (3.43,1.04) -- (3.48,1.04) -- (3.53,1.04) -- (3.58,1.04) -- (3.63,1.04) -- (3.68,1.04) -- (3.73,1.04) -- (3.78,1.03) -- (3.84,1.03) -- (3.89,1.02) -- (3.94,1.01) -- (4.00,1.00);
\draw[] (4.00,1.00) -- (4.06,0.99) -- (4.13,0.97) -- (4.19,0.95) -- (4.25,0.93) -- (4.32,0.91) -- (4.38,0.89) -- (4.44,0.86) -- (4.50,0.84) -- (4.56,0.81) -- (4.62,0.79) -- (4.67,0.76) -- (4.72,0.73) -- (4.77,0.70) -- (4.82,0.67) -- (4.86,0.64) -- (4.90,0.61) -- (4.93,0.58) -- (4.96,0.56) -- (4.98,0.53) -- (5.00,0.50);
\draw[] (5.00,0.50) -- (5.02,0.45) -- (5.02,0.40) -- (5.00,0.36) -- (4.97,0.32) -- (4.92,0.28) -- (4.86,0.24) -- (4.78,0.21) -- (4.69,0.18) -- (4.59,0.15) -- (4.48,0.12) -- (4.36,0.10) -- (4.23,0.08) -- (4.10,0.06) -- (3.95,0.04) -- (3.80,0.03) -- (3.65,0.02) -- (3.49,0.01) -- (3.33,0.00) -- (3.16,0.00) -- (3.00,0.00);
\draw[] (3.00,0.00) -- (2.84,0.00) -- (2.67,0.00) -- (2.51,0.01) -- (2.35,0.02) -- (2.20,0.03) -- (2.05,0.04) -- (1.90,0.06) -- (1.77,0.08) -- (1.64,0.10) -- (1.52,0.12) -- (1.41,0.15) -- (1.31,0.18) -- (1.22,0.21) -- (1.14,0.24) -- (1.08,0.28) -- (1.03,0.32) -- (1.00,0.36) -- (0.98,0.40) -- (0.98,0.45) -- (1.00,0.50);
\draw[] (1.00,0.50) -- (1.02,0.53) -- (1.04,0.56) -- (1.07,0.58) -- (1.10,0.61) -- (1.14,0.64) -- (1.18,0.67) -- (1.23,0.70) -- (1.28,0.73) -- (1.33,0.76) -- (1.38,0.79) -- (1.44,0.81) -- (1.50,0.84) -- (1.56,0.86) -- (1.62,0.89) -- (1.68,0.91) -- (1.75,0.93) -- (1.81,0.95) -- (1.87,0.97) -- (1.94,0.99) -- (2.00,1.00);
\draw[] (2.00,1.00) -- (2.06,1.01) -- (2.11,1.02) -- (2.16,1.03) -- (2.22,1.03) -- (2.27,1.04) -- (2.32,1.04) -- (2.37,1.04) -- (2.42,1.04) -- (2.47,1.04) -- (2.52,1.04) -- (2.57,1.04) -- (2.62,1.04) -- (2.67,1.04) -- (2.72,1.03) -- (2.76,1.03) -- (2.81,1.02) -- (2.86,1.02) -- (2.91,1.01) -- (2.95,1.01) -- (3.00,1.00);
\draw[,,] (5.00,2.00) -- (3.00,2.00);
\draw[dashed,] (2.00,4.75) rectangle (4.00,0.30);
\filldraw[,fill=white] (2.50,3.50) rectangle (3.50,4.50);
\filldraw[,fill=white] (2.50,0.50) rectangle (3.50,2.50);
\draw (0.50,4.00) node{$C$};
\draw (3.00,4.00) node{$g$};
\draw (5.50,4.00) node{$D$};
\draw (0.50,2.00) node{$A$};
\draw (3.00,2.00) node{$f$};
\draw (5.50,2.00) node{$B$};
\end{tikzpicture}
    \end{center}
  \item \emph{Yanking}: for every object~$X$, $\Tr{X,X}X(\gamma_{X,X})=\id_X$
    \begin{center}
      \begin{tikzpicture}[baseline=(current bounding box.center),xscale=0.50,yscale=0.50,scale=0.80]
\useasboundingbox (-0.5,-0.5) rectangle (6.5,4.5);
\draw[] (5.00,3.00) -- (4.95,3.01) -- (4.89,3.02) -- (4.84,3.02) -- (4.79,3.03) -- (4.74,3.04) -- (4.68,3.04) -- (4.63,3.05) -- (4.58,3.05) -- (4.53,3.06) -- (4.48,3.06) -- (4.43,3.06) -- (4.38,3.06) -- (4.33,3.06) -- (4.28,3.06) -- (4.23,3.05) -- (4.18,3.04) -- (4.14,3.04) -- (4.09,3.03) -- (4.04,3.01) -- (4.00,3.00);
\draw[] (4.00,3.00) -- (3.94,2.98) -- (3.88,2.95) -- (3.82,2.91) -- (3.76,2.88) -- (3.71,2.84) -- (3.65,2.80) -- (3.60,2.75) -- (3.55,2.70) -- (3.50,2.65) -- (3.45,2.59) -- (3.40,2.54) -- (3.35,2.48) -- (3.31,2.42) -- (3.26,2.36) -- (3.21,2.30) -- (3.17,2.24) -- (3.13,2.18) -- (3.08,2.12) -- (3.04,2.06) -- (3.00,2.00);
\draw[] (3.00,2.00) -- (2.96,1.94) -- (2.92,1.88) -- (2.88,1.82) -- (2.83,1.77) -- (2.79,1.71) -- (2.75,1.66) -- (2.71,1.60) -- (2.66,1.55) -- (2.62,1.50) -- (2.57,1.45) -- (2.52,1.40) -- (2.48,1.35) -- (2.42,1.30) -- (2.37,1.25) -- (2.32,1.21) -- (2.26,1.17) -- (2.20,1.12) -- (2.13,1.08) -- (2.07,1.04) -- (2.00,1.00);
\draw[] (2.00,1.00) -- (1.94,0.97) -- (1.88,0.94) -- (1.83,0.91) -- (1.76,0.88) -- (1.70,0.86) -- (1.64,0.83) -- (1.58,0.80) -- (1.52,0.78) -- (1.46,0.75) -- (1.40,0.73) -- (1.35,0.70) -- (1.30,0.68) -- (1.25,0.66) -- (1.20,0.63) -- (1.15,0.61) -- (1.11,0.59) -- (1.08,0.57) -- (1.05,0.54) -- (1.02,0.52) -- (1.00,0.50);
\draw[] (1.00,0.50) -- (0.98,0.46) -- (0.97,0.42) -- (0.99,0.38) -- (1.02,0.35) -- (1.06,0.31) -- (1.13,0.28) -- (1.20,0.24) -- (1.29,0.21) -- (1.39,0.18) -- (1.50,0.15) -- (1.62,0.12) -- (1.75,0.10) -- (1.89,0.08) -- (2.04,0.06) -- (2.19,0.04) -- (2.35,0.03) -- (2.51,0.02) -- (2.67,0.01) -- (2.83,0.00) -- (3.00,0.00);
\draw[] (3.00,0.00) -- (3.17,0.00) -- (3.33,0.01) -- (3.49,0.02) -- (3.65,0.03) -- (3.81,0.04) -- (3.96,0.06) -- (4.11,0.08) -- (4.25,0.10) -- (4.38,0.12) -- (4.50,0.15) -- (4.61,0.18) -- (4.71,0.21) -- (4.80,0.24) -- (4.87,0.28) -- (4.94,0.31) -- (4.98,0.35) -- (5.01,0.38) -- (5.03,0.42) -- (5.02,0.46) -- (5.00,0.50);
\draw[] (5.00,0.50) -- (4.98,0.52) -- (4.95,0.54) -- (4.92,0.57) -- (4.89,0.59) -- (4.85,0.61) -- (4.80,0.63) -- (4.75,0.66) -- (4.70,0.68) -- (4.65,0.70) -- (4.60,0.73) -- (4.54,0.75) -- (4.48,0.78) -- (4.42,0.80) -- (4.36,0.83) -- (4.30,0.86) -- (4.24,0.88) -- (4.17,0.91) -- (4.12,0.94) -- (4.06,0.97) -- (4.00,1.00);
\draw[] (4.00,1.00) -- (3.93,1.04) -- (3.87,1.08) -- (3.80,1.12) -- (3.74,1.17) -- (3.68,1.21) -- (3.63,1.25) -- (3.58,1.30) -- (3.52,1.35) -- (3.48,1.40) -- (3.43,1.45) -- (3.38,1.50) -- (3.34,1.55) -- (3.29,1.60) -- (3.25,1.66) -- (3.21,1.71) -- (3.17,1.77) -- (3.12,1.82) -- (3.08,1.88) -- (3.04,1.94) -- (3.00,2.00);
\draw[] (3.00,2.00) -- (2.96,2.06) -- (2.92,2.12) -- (2.87,2.18) -- (2.83,2.24) -- (2.79,2.30) -- (2.74,2.36) -- (2.69,2.42) -- (2.65,2.48) -- (2.60,2.54) -- (2.55,2.59) -- (2.50,2.65) -- (2.45,2.70) -- (2.40,2.75) -- (2.35,2.80) -- (2.29,2.84) -- (2.24,2.88) -- (2.18,2.91) -- (2.12,2.95) -- (2.06,2.98) -- (2.00,3.00);
\draw[] (2.00,3.00) -- (1.96,3.01) -- (1.91,3.03) -- (1.86,3.04) -- (1.82,3.04) -- (1.77,3.05) -- (1.72,3.06) -- (1.67,3.06) -- (1.62,3.06) -- (1.57,3.06) -- (1.52,3.06) -- (1.47,3.06) -- (1.42,3.05) -- (1.37,3.05) -- (1.32,3.04) -- (1.26,3.04) -- (1.21,3.03) -- (1.16,3.02) -- (1.11,3.02) -- (1.05,3.01) -- (1.00,3.00);
\draw[dashed,] (2.00,3.50) rectangle (4.00,0.50);
\draw (0.50,3.00) node{$X$};
\draw (5.50,3.00) node{$X$};
\end{tikzpicture}
      \qeq
      \begin{tikzpicture}[baseline=(current bounding box.center),xscale=0.50,yscale=0.50,scale=0.80]
\useasboundingbox (-0.5,-0.5) rectangle (6.5,4.5);
\draw[,,] (1.00,3.00) -- (5.00,3.00);
\draw (0.50,3.00) node{$X$};
\draw (5.50,3.00) node{$X$};
\end{tikzpicture}

    \end{center}
  \end{enumerate}

\begin{proposition}
  The construction of Definition~\ref{def:net-constr} induces a trace
  on~$\Net_\Sigma$ and on~$\sNet_\Sigma$.
\end{proposition}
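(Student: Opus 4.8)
The plan is to verify that the trace defined in Definition~\ref{def:net-constr} satisfies the three axioms (Vanishing, Superposing, Yanking) and is natural. Since morphisms in $\Net_\Sigma$ are isomorphism classes of nets and the trace is defined purely combinatorially by quotienting ports, the strategy is to exhibit, for each axiom, an explicit isomorphism of nets between the two sides. Let me think about what these axioms require and how the quotient construction behaves.

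Let me work through the structure. The trace $\Tr{n_1,n_2}{n}(N)$ takes $N : n_1+n \to n_2+n$ and quotients $P_N$ by the smallest equivalence relation $\sim$ generated by $s_N(n_2+k) = t_N(n_1+k)$ for $k \in \intset{n}$. This glues the "output feedback" ports to the "input feedback" ports. The key observation is that all the set-theoretic data ($O$, $\lambda$) is preserved untouched, and only $P$ gets quotiented — so each axiom reduces to showing that two successively-generated equivalence relations on $P_N$ coincide, or that the port-gluing commutes with the relevant construction.

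For each axiom:
- Vanishing: tracing out $X \otimes Y$ at once versus tracing $Y$ then $X$ should produce the same equivalence relation on ports, since composing the two quotients (first identifying the $Y$-feedback ports, then the $X$-feedback ports) generates the same equivalence relation as identifying both families simultaneously.
- Superposing: here $g \otimes f$ has the same port set as the disjoint union, the trace only touches $P_f$'s feedback ports, so the induced relation on $P_g \uplus P_f$ restricts to identity on $P_g$, matching $g \otimes \Tr{}{}(f)$ on the nose.
- Yanking: $\gamma_{X,X} : X \otimes X \to X \otimes X$ is the symmetry net (identity operators, wires crossed), and tracing it glues ports so that the composite collapses to the identity net on $X$.
- Naturality (in $A$, in $B$, and the "dinaturality" in $X$) follows because pre/post-composition and the feedback quotient both act by gluing ports, and these gluings commute.

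The plan is as follows. First I would fix notation by spelling out, for a net $N : n_1+n \to n_2+n$, exactly which ports of $N$ are identified by the trace quotient, writing $q_N : P_N \to P_N/\!\sim$ for the quotient map. Then, for each of the three axioms and for naturality, I would construct the candidate bijection $\varphi_P$ on ports between the two nets (the operator maps $\varphi_O$ being essentially identities, since no construction alters $O$ except along the disjoint unions already present) and check that it is a well-defined morphism of nets, \ie that it commutes with the source and target functions as required in the definition of morphism. The central computation, repeated in each case, is to show that two equivalence relations on a disjoint union of port sets agree: one obtained by tracing ``all at once'' and the other by tracing ``in stages,'' or one obtained by tracing before composing and the other after.

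The main obstacle I anticipate is the Vanishing axiom, where one must prove that the equivalence relation generated on $P_N$ by simultaneously identifying the $X$\nbd and $Y$\nbd feedback ports equals the relation generated by first quotienting by the $Y$\nbd feedback identifications and then quotienting the result by the (images of the) $X$\nbd feedback identifications. This is a standard but slightly delicate fact about generating equivalence relations on a set in two steps versus one step: since the generators for the two stages involve disjoint pairs of ports (the $X$\nbd feedback ports are distinct from the $Y$\nbd feedback ports as elements of $T_N$ and $S_N$), the two-stage generation yields the same partition, and the canonical map $P_N/\!\sim_{XY} \to (P_N/\!\sim_Y)/\!\sim_X$ is a bijection. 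Once this combinatorial lemma is established, the remaining axioms and naturality are direct verifications of the commutation of port gluings, and the fact that all constructions are compatible with net isomorphism (noted just before the definition of $\Net_\Sigma$) ensures that everything descends to the quotient category $\sNet_\Sigma$ as well.
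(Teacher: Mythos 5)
Your overall strategy is the right one, and for what it is worth the paper itself gives \emph{no} proof of this proposition -- it is left as a routine verification, and the verification it implicitly expects is exactly the one you describe: the operators and labels are untouched by all the constructions, so every axiom (vanishing, superposing, yanking, naturality and sliding) reduces to checking that two generated equivalence relations on a fixed set of ports coincide, with the one-step-versus-two-step quotient lemma carrying the vanishing case. One remark on that lemma: your disjointness justification is both misstated and unnecessary. The \emph{indices} $n_2+k$ in $S_N$ and $n_1+k$ in $T_N$ are distinct for the $X$- and $Y$-feedback families, but the \emph{ports} $s_N(n_2+k)$ need not be distinct, since the source function is not injective (only $t$ is); a feedback-output port may well coincide with another feedback port. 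Fortunately no disjointness is needed: for any set $P$ and any generating sets of pairs $R_1,R_2$, the quotient $P/\langle R_1\cup R_2\rangle$ is canonically isomorphic to $(P/\langle R_1\rangle)/\langle \overline{R_2}\rangle$ where $\overline{R_2}$ is the image of $R_2$, so the staged and simultaneous gluings always agree.

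There are two genuine gaps. First, well-definedness: you never check that $\Tr{n_1,n_2}{n}(N)$ is actually a net, \ie that $q\circ t_N$ restricted to the remaining targets $T=\setof{(x,i)}{x\in O}\uplus\intset{n_1}$ is still injective. This needs an argument: viewing the generators as edges $\set{s_N(n_2+k),t_N(n_1+k)}$ on $P_N$, a path starting at a non-feedback target $t_N(u)$ must first use an equality $t_N(u)=s_N(n_2+k)$ and thereafter passes only through feedback-input ports $t_N(n_1+k')$, which by injectivity of $t_N$ are never equal to another $t_N(v)$ with $v\in T$; hence no two targets in $T$ are merged. Second, the descent to $\sNet_\Sigma$ is misjustified: compatibility with isomorphism only gives you the trace on $\Net_\Sigma$, whereas morphisms of $\sNet_\Sigma$ are se-equivalence classes, so you must additionally check that if $N$ se-rewrites to $N'$ then $\Tr{n_1,n_2}{n}(N)$ is se-equivalent to $\Tr{n_1,n_2}{n}(N')$. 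Both cases hold but require their own (easy) verifications: for sharing, two operators with equal labels and equal input ports in $N$ still have equal input ports after applying $q$, and the two quotients combine by the same staged-quotient lemma; for erasing, the hypothesis $s_N^{-1}(t_N(x,i))=\emptyset$ in $N:n_1+n\to n_2+n$ already excludes the feedback sources $s_N(n_2+k)$, so the ports $t_N(x,i)$ lie in singleton $\sim$-classes and the erased operator is still erasable after tracing. Neither of these follows from isomorphism-invariance, which is the only justification your write-up offers for the $\sNet_\Sigma$ half of the statement.
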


The category~$\sNet_\sigma$ is a traced cartesian category that we call a
\emph{fixpoint category} in the following. Interestingly, it is actually
characterized by this structure in the sense that it is the free fixpoint
category containing a $\Sigma$-object.

\begin{definition}[$\Sigma$-object]
  Given a signature~$\Sigma$, a \emph{$\Sigma$-object} in a monoidal
  category~$\C$ consists of an object~$A$ together with a morphism
  $f_\alpha:\otimes^{\sigma(\alpha)}A\to\otimes^{\tau(\alpha)}A$ for every
  symbol~$\alpha\in\Sigma$, called the \emph{interpretation} of~$\alpha$,
  where~$\otimes^nA$ denotes the tensor product of~$n$ copies of the
  object~$A$. A \emph{$\Sigma$-morphism} between two $\Sigma$\nbd{}objects
  $(A,f_\alpha)$ and~$(B,g_\alpha)$ consists of a morphism~$h:A\to B$ such that
  for every~$\alpha\in\Sigma$, $(\otimes^{\tau(\alpha)}h)\circ
  f_\alpha=g_\alpha\circ(\otimes^{\sigma(\alpha)}h)$.
\end{definition}

\begin{theorem}
  \label{thm:free-sigma}
  The category~$\sNet_\Sigma$ is the free category containing a $\Sigma$-object
  in the sense that for every fixpoint category~$\C$, the
  category~$\Mod(\Sigma,\C)$ of $\Sigma$-objects and $\Sigma$-morphisms is
  equivalent to the category~$\Fix(\sNet_\Sigma,\C)$ whose objects are fixpoint
  functors (preserving cartesian product and trace) and morphisms are monoidal
  natural transformations.
\end{theorem}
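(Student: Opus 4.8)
The plan is to exhibit the claimed equivalence by constructing a functor in each direction and showing they are mutually pseudo-inverse, the whole argument resting on a presentation (coherence) result for~$\sNet_\Sigma$.

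First I would pin down the \emph{generic} $\Sigma$\nbd{}object sitting inside~$\sNet_\Sigma$. Since the tensor is given on objects by addition, we have $n=\otimes^n 1$, and for each symbol~$\alpha$ the one-operator net $u_\alpha:\sigma(\alpha)\to\tau(\alpha)$ (a single operator labelled~$\alpha$ with its inputs and outputs wired to the boundary) is a morphism $\otimes^{\sigma(\alpha)}1\to\otimes^{\tau(\alpha)}1$, so that $U=(1,(u_\alpha)_\alpha)$ is a $\Sigma$\nbd{}object. The evaluation functor $\Phi:\Fix(\sNet_\Sigma,\C)\to\Mod(\Sigma,\C)$ then sends a fixpoint functor~$F$ to $(F1,(Fu_\alpha)_\alpha)$ and a monoidal natural transformation $\eta:F\Rightarrow G$ to its component~$\eta_1$. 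To see that $\Phi$ is well defined on morphisms, I would use that monoidality forces $\eta_n$ to agree with $\otimes^n\eta_1$ up to the structural isomorphisms, so that naturality of~$\eta$ at each~$u_\alpha$ becomes exactly the defining equation $(\otimes^{\tau(\alpha)}\eta_1)\circ Fu_\alpha=Gu_\alpha\circ(\otimes^{\sigma(\alpha)}\eta_1)$ of a $\Sigma$\nbd{}morphism between $\Phi(F)$ and~$\Phi(G)$.

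The heart of the argument is a \emph{generation-and-relations} lemma for~$\sNet_\Sigma$: every morphism is a composite, under composition, tensor and trace, of identities, symmetries~$\gamma$, the cartesian diagonals and projections, and the generators~$u_\alpha$; and two such expressions denote the same morphism precisely when they are interconvertible using the fixpoint-category axioms, namely the trace axioms together with naturality of the cartesian structure. Generation I would prove by structural analysis of a shared net: lay its operators side by side as a tensor of the~$u_\alpha$, then route the wiring -- including duplications, deletions and feedback loops -- using diagonals, projections, symmetries and traces, exactly as in the free traced category constructions of Hasegawa and Abramsky cited above. Relation-completeness amounts to checking that the identifications imposed by net isomorphism and by se-equivalence are consequences of the axioms, and conversely that each axiom holds in~$\sNet_\Sigma$ (the latter being the content of the proposition establishing its trace together with its product structure). \textbf{This step is the main obstacle}, since it is a coherence theorem requiring a careful normal form for nets.

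Granting the lemma, I would define $\Psi:\Mod(\Sigma,\C)\to\Fix(\sNet_\Sigma,\C)$ by setting, for a $\Sigma$\nbd{}object~$(A,f_\alpha)$, a functor~$F$ with $Fn=\otimes^n A$, interpreting identities, symmetries, diagonals and projections by the corresponding canonical maps of~$\C$, sending $u_\alpha\mapsto f_\alpha$, and extending along composition, tensor and trace. Well-definedness is precisely relation-completeness, and preservation of products and trace is built into the construction. A $\Sigma$\nbd{}morphism $h:A\to B$ is sent to the monoidal natural transformation with components~$\otimes^n h$; naturality holds at the generators~$u_\alpha$ by the $\Sigma$\nbd{}morphism equation and at identities, symmetries, diagonals and projections by naturality of the cartesian structure, hence at all nets by generation. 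Finally I would check the two composites: $\Phi\Psi=\id$ holds on the nose, since evaluating the constructed~$F$ at~$U$ returns~$(A,f_\alpha)$, while $\Psi\Phi\cong\id$ follows from generation, because any fixpoint functor~$F$ and~$\Psi(\Phi F)$ agree on~$U$ and are therefore linked by the canonical monoidal natural isomorphism assembled from the structural maps $F(m+n)\cong Fm\otimes Fn$, which is natural precisely because it is natural on generators.
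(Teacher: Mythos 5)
Your overall architecture---evaluate fixpoint functors at the generic $\Sigma$\nbd{}object $(1,(u_\alpha)_\alpha)$, interpret back by a functor determined on generators, and check the two composites---is the correct unfolding of what the freeness statement means, and your generic $\Sigma$\nbd{}object is exactly the one the paper exhibits. The genuine gap is the step you yourself flag: the relation-completeness half of your generation-and-relations lemma is not an auxiliary verification but the entire mathematical content of the theorem, and you give no argument for it beyond the remark that it requires ``a careful normal form for nets.'' Well-definedness of $\Psi$ (and hence everything downstream, including $\Psi\Phi\cong\id$) rests on it, so as written your proof reduces the theorem to an unproved coherence statement of essentially the same difficulty as the theorem itself: you would have to show that every identification forced by net isomorphism and by se-equivalence is derivable from naturality of the trace, vanishing, superposing, yanking, and naturality of the diagonals and projections, and no normal-form or rewriting analysis is sketched.

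The paper avoids proving this monolithic coherence theorem by factoring the freeness through two results it can cite, and that factorization is precisely what would fill your gap. First, restricting $\sNet_\Sigma$ to nets whose source function~$s$ is bijective recovers the \emph{traced nets} of Abramsky and Hasegawa, which are already known to form the free traced symmetric monoidal category containing a $\Sigma$\nbd{}object; this discharges the traced part of your lemma by citation rather than by a new normal form. Second, passing from bijective~$s$ to arbitrary~$s$ equips each object with a cocommutative comonoid structure (the duplication and erasure nets $N_{\delta_1}$ and $N_{\varepsilon_1}$), and quotienting by se-equivalence imposes exactly the \emph{naturality} of that structure; by the Burroni-style result that the free cartesian category over a symmetric monoidal category is obtained by freely adjoining natural cocommutative comonoids on every object, this handles the cartesian part. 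Since a fixpoint category is by definition a traced cartesian category, the two freeness results compose to yield the theorem. If you wish to keep your direct route, you must either reprove both of these coherence facts via an explicit normal form for shared nets, or restructure your relation-completeness argument along this same two-stage decomposition.
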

\begin{proof}
  The category~$\sNet_\Sigma$ contains a $\Sigma$-object whose underlying object
  is~$1$ and the interpretation of a symbol~$\alpha$ with~$\sigma(\alpha)=m$
  and~$\tau(\alpha)=n$ is the net $N:m\to n$ such that~$P=\intset{m+n}$,
  $O=\{x\}$, $\lambda(x)=\alpha$, $s(x,i)=i$, $s(k)=m+k$, $t(x,i)=m+i$,
  $t(k)=k$:
  \vspace{-3ex}
  \begin{center}
  \begin{tikzpicture}[baseline=(current bounding box.center),xscale=0.50,yscale=0.50,scale=0.80]
\useasboundingbox (-0.5,-0.5) rectangle (4.5,5.5);
\draw[,,dotted] (0.00,4.00) -- (0.90,4.00);
\draw[,,dotted] (4.00,4.00) -- (3.10,4.00);
\draw[,,dotted] (0.00,3.00) -- (0.90,3.00);
\draw[,,] (1.00,4.00) -- (2.00,4.00);
\draw[,,] (1.00,3.00) -- (2.00,3.00);
\draw[,,] (1.00,1.00) -- (2.00,1.00);
\draw[,,] (3.00,4.00) -- (2.00,4.00);
\draw[,,] (3.00,3.00) -- (2.00,3.00);
\draw[,,] (3.00,1.00) -- (2.00,1.00);
\draw[,,dotted] (4.00,3.00) -- (3.10,3.00);
\draw[,,dotted] (0.00,1.00) -- (0.90,1.00);
\draw[,,dotted] (4.00,1.00) -- (3.10,1.00);
\filldraw[,fill=white] (1.50,0.50) rectangle (2.50,4.50);
\draw (1.00,4.50) node{\scriptsize{$1$}};
\draw (3.00,4.50) node{\scriptsize{$m$}};
\draw (0.00,4.00) node{$\bullet$};
\draw (1.00,3.50) node{\scriptsize{$2$}};
\draw (1.00,4.00) node{$\bullet$};
\draw (3.00,4.00) node{$\bullet$};
\draw (3.00,3.50) node{\scriptsize{~~~~$m+1$}};
\draw (4.00,4.00) node{$\bullet$};
\draw (0.00,3.00) node{$\bullet$};
\draw (1.00,3.00) node{$\bullet$};
\draw (2.00,3.00) node{$x$};
\draw (3.00,3.00) node{$\bullet$};
\draw (4.00,3.00) node{$\bullet$};
\draw (1.00,2.00) node{$\vdots$};
\draw (3.00,2.00) node{$\vdots$};
\draw (0.00,1.00) node{$\bullet$};
\draw (1.00,1.00) node{$\bullet$};
\draw (3.00,1.00) node{$\bullet$};
\draw (4.00,1.00) node{$\bullet$};
\draw (1.00,0.50) node{\scriptsize{$m-1$~~~~}};
\draw (4.00,0.50) node{\scriptsize{$m\!\!+\!\!n\!\!-\!\!1$~~~~}};
\end{tikzpicture}
  \vspace{-2ex}
  \end{center}
  A construction of the free traced symmetric monoidal category containing a
  $\Sigma$-object was provided in~\cite{abramsky:traced-compact-closed} and
  reformulated in~\cite{hasegawa2008finite} using a variant of the nets defined
  here, that we call \emph{traced nets}. It is easy to see that we recover
  traced nets by restricting~$\sNet_\Sigma$ to the nets such that the source
  function~$s$ is a bijective function. We thus have to show that our category
  of nets is the free category over the category of traced nets. Recall that a
  cocommutative comonoid~$(M,\delta,\varepsilon)$ in a symmetric monoidal
  category consists of an object~$M$ together with two morphisms $\delta:M\to
  M\otimes M$ (called \emph{duplication}) and~$\varepsilon:M\to I$ (called
  \emph{erasure}), which are such that
  $(\delta\otimes\id_I)\circ\delta=(\id_I\otimes\delta)\circ\delta$,
  $(\varepsilon\otimes\id_I)\circ\delta=\delta=(\id_I\otimes\varepsilon)\circ\delta$
  and $\gamma_{M,M}\circ\delta=\delta$. Now, it has been
  shown~\cite{burroni1993higher} that the category whose objects are integers
  and whose morphisms~$f:m\to n$ are functions~$f:\intset{m}\to\intset{n}$ is
  the free monoidal symmetric monoidal category containing a commutative monoid,
  and that the free cartesian category over a symmetric monoidal category is
  obtained by freely adding a natural structure of cocommutative comonoids over
  each object: precisely, this means that each object~$M$ is equipped with a
  structure~$(M,\delta_M,\varepsilon_M)$ of cocommutative comonoid and these are
  natural in the sense that for every morphism~$f:M\to N$, $\delta_N\circ
  f=(f\otimes f)\circ\delta_M$ and~$\varepsilon_N\circ f=\varepsilon_M$. From
  this it can be deduced that going from nets with bijective~$s$ to nets with
  any function as~$s$, and quotienting by se-equivalence, amounts to construct
  the free cartesian category over the category of traced nets. Namely, allowing
  any function equips the object~$1$ with a structure of comonoid with the
  duplication~$\delta_1$ being the net~$N_{\delta_1}:1\to 2$ such
  that~$P=\{p\}$, $O=\emptyset$, $s(k)=p$ and~$t(k)=p$ and the
  duplication~$\varepsilon_1$ being the net~$N_{\varepsilon_1}:1\to 0$ such that
  $P=\{p\}$, $O=\emptyset$ and~$t(k)=p$:
  \begin{center}
  $N_{\delta_1}$$\qeq$\begin{tikzpicture}[baseline=(current bounding box.center),xscale=0.50,yscale=0.50,scale=0.80]
\useasboundingbox (-0.5,-0.5) rectangle (2.5,2.5);
\draw[,,dotted] (2.00,2.00) -- (1.00,1.10);
\draw[,,dotted] (0.00,1.00) -- (0.90,1.00);
\draw[,,dotted] (2.00,0.00) -- (1.00,0.90);
\draw (2.00,2.00) node{$\bullet$};
\draw (0.00,1.00) node{$\bullet$};
\draw (1.00,1.00) node{$\bullet$};
\draw (2.00,0.00) node{$\bullet$};
\end{tikzpicture}
  \qquad\qquad\qquad
  $N_{\varepsilon_1}$$\qeq$\begin{tikzpicture}[baseline=(current bounding box.center),xscale=0.50,yscale=0.50,scale=0.80]
\useasboundingbox (-0.5,-0.5) rectangle (1.5,2.5);
\draw[,,dotted] (0.00,1.00) -- (0.90,1.00);
\draw (0.00,1.00) node{$\bullet$};
\draw (1.00,1.00) node{$\bullet$};
\end{tikzpicture}

  \end{center}
  (and every object can be equipped with a structure of cocommutative comonoid
  in a similar way). Quotienting by se-equivalence amounts to impose that the
  structures of cocommutative comonoid on the objects are natural.
\end{proof}



\subsection{Models of nets}
\label{sec:net-models}
The properties of fixpoint categories have been studied in details by Hasegawa
and Hyland~\cite{hasegawa1997recursion}. In particular, they have shown that a
cartesian category~$\C$ is traced if and only if it contains a fixpoint operator
satisfying suitable axioms (these are sometimes called \emph{Conway fixpoint
  operators}). For instance, the category of Scott domains recalled below admits
such a fixpoint and is therefore a fixpoint category
thus providing a natural semantics for nets.

A \emph{directed complete partial order} (or \emph{dcpo}) is a poset~$(D,\leq)$
such that every directed subset~$X$ has a supremum~$\bigvee X$ and a
\emph{complete partial order} (or \emph{cpo}) is a dcpo with a least
element~$\bot$~\cite{Abramsky:domain-theory-handbook, amadio1998domains,
  davey2002introduction}. A function~$f:A\to B$ between two dcpo is
\emph{Scott-continuous} when it preserves supremums. By the Kleene fixpoint
theorem, every Scott-continuous function \hbox{$f:X\to X$} admits a least
fixpoint~$\fix[X]f$ defined by~$\fix[X]{f}=\bigvee_{n\in\N}f^n(\bot_X)$,
where~$f^n$ denotes the composite of $n$ instances of~$f$. Suppose given a
function \hbox{$f:A\times X\to B\times X$}. We write \hbox{$\pi_B:B\times X\to
  B$} and \hbox{$\pi_X:B\times X\to X$} for the canonical
projections. Given~$a\in A$, we write~$f_a=x\mapsto f(a,x)$ for the partial
application of~$f$ to~$a$. A trace can defined on~$f$ by
\begin{equation}
  \label{eq:fix-trace}
  \Tr{A,B}X(f)\qeq a\qmapsto\pi_B(\fix[B\times X]{f_a\circ\pi_X})
\end{equation}
and this function can be shown to be Scott-continuous.

\begin{proposition}
  The category~$\Cpo$ of cpo and Scott-continuous functions is a fixpoint category
  with~\eqref{eq:fix-trace} as trace.
\end{proposition}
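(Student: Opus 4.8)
The plan is to verify the two defining pieces of a fixpoint category for $\Cpo$: that it is cartesian, and that the operation~\eqref{eq:fix-trace} is a trace (in the sense of Definition~\ref{def:trace}) on the cartesian monoidal structure. The cartesian part is standard: $\Cpo$ has finite products given by the product order on $D_1\times D_2$ with componentwise suprema and the pair $(\bot_{D_1},\bot_{D_2})$ as least element, the projections are Scott-continuous, and the terminal object is the one-point cpo; pairing of Scott-continuous functions is Scott-continuous since suprema are computed coordinatewise. So the bulk of the work is checking that~\eqref{eq:fix-trace} is well-defined and satisfies naturality together with the three axioms.

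First I would establish well-definedness: for $f:A\times X\to B\times X$ and fixed $a\in A$, the map $f_a\circ\pi_X:B\times X\to B\times X$ is Scott-continuous (as a composite of Scott-continuous maps), so by the Kleene fixpoint theorem $\fix[B\times X]{f_a\circ\pi_X}$ exists, and postcomposing with $\pi_B$ gives an element of $B$. Then I would show that $a\mapsto\pi_B(\fix[B\times X]{f_a\circ\pi_X})$ is Scott-continuous in $a$, as already asserted in the text; the cleanest route is to observe that $\fix$ itself is Scott-continuous as a functional on the cpo of Scott-continuous endofunctions (a classical fact following from $\fix(g)=\bigvee_n g^n(\bot)$ and the continuity of each $g\mapsto g^n(\bot)$), and that $a\mapsto f_a\circ\pi_X$ is continuous because $f$ is continuous and the exponential $[B\times X\to B\times X]$ with the pointwise order is a cpo on which application is continuous.

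With well-definedness in hand, the three axioms reduce to equalities between least fixpoints. For \emph{yanking} I would compute $\Tr{X,X}X(\gamma_{X,X})$: here $\gamma_{X,X}$ swaps the two copies of $X$, so $(\gamma_{X,X})_x\circ\pi_X$ sends $(b,y)\mapsto(y,x)$ and its least fixpoint, projected to the first $X$, returns $x$, giving $\id_X$. For \emph{superposing} I would use that $g$ does not interact with the traced component, so the fixpoint computation for $g\otimes f$ factors: the $X$-component fixpoint is identical to that of $f$ while $g$ is applied unchanged, yielding $g\otimes\Tr{A,B}X(f)$. For \emph{vanishing} I would appeal to the standard Bekić-style decomposition of fixpoints over a product $X\otimes Y$ into an inner fixpoint over $Y$ followed by an outer fixpoint over $X$, which is exactly the content of the two-variable fixpoint identity. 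Rather than recomputing all of this by hand, I would invoke the Hasegawa--Hyland correspondence cited just above in Section~\ref{sec:net-models}, which states that a cartesian category is traced precisely when it carries a Conway fixpoint operator; it then suffices to check that the Kleene least fixpoint $\fix$ satisfies the Conway axioms (naturality/dinaturality, the parameterized fixpoint identity, and Bekić), all of which are classical properties of least fixpoints in $\Cpo$.

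The main obstacle I expect is the \emph{vanishing} axiom, i.e.\ reconciling a single least fixpoint over a composite object $X\otimes Y$ with an iterated least fixpoint (inner over $Y$, outer over $X$). This is precisely the Bekić lemma, and the delicate point is confirming that the \emph{least} fixpoint is preserved under this decomposition rather than merely some fixpoint; one must check that the inner parameterized fixpoint is itself Scott-continuous in the outer variable so that the outer Kleene iteration is legitimate, and that the two iteration schemes converge to the same supremum. Naturality of the trace, by contrast, follows routinely from the naturality (dinaturality) of least fixpoints under Scott-continuous substitution. Once Bekić is in place, the remaining axioms are direct, and the proposition follows either by the explicit computations above or, more economically, by the Conway-operator characterization.
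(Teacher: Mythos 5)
Your proposal is correct and takes essentially the same route as the paper: the paper gives no explicit proof, relying precisely on the Hasegawa--Hyland characterization (a cartesian category is traced iff it carries a Conway fixpoint operator) together with the classical fact that the Kleene least fixpoint in $\Cpo$ satisfies the Conway axioms --- exactly your ``more economical'' alternative. Your additional direct verification of well-definedness, yanking, superposing, and vanishing via the Beki\'c decomposition is sound and merely fills in details the paper delegates to the citation of~\cite{hasegawa1997recursion}.
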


By Theorem~\ref{thm:free-sigma}, any $\Sigma$-object in~$\Cpo$ canonically
induces a functor \hbox{$F:\Net_\Sigma\to\Cpo$} which associates to every net
its \emph{domain semantics}: once we have interpreted each symbol as a
Scott-continuous function (by fixing a $\Sigma$-object), the interpretation of each
network is also fixed. In particular, when the $\Sigma$\nbd{}object is the
domain~$R^\infty$ of $R$-valued streams (for some set~$R$), we recover the usual
Kahn semantics~\cite{kahn:semantics-parallel} of nets: $R^\infty$ is the domain
whose elements are finite or infinite sequences (called \emph{streams}) of
elements of~$R$, ordered by inclusion. The intuition here is that time is
discrete (because we only consider the times where some information is passed on
a wire as \emph{instants}) and the elements of the domain are the sequences of
values passed on wires at the various instants.

\begin{example}
  Consider a signature~$\Sigma$ containing two symbols~$+:2\to 1$
  and~$\iota:1\to 1$. We consider the $\Sigma$-object~$\R^\infty$ in~$\Cpo$ such
  that the interpretation of~$+$ is the Scott-continuous
\end{example}

\vspace{-1.8ex}
\begin{wrapfigure}{r}{5cm}
  \vspace{-5ex}
  \begin{tikzpicture}[baseline=(current bounding box.center),xscale=0.50,yscale=0.50,scale=0.80]
\useasboundingbox (-0.5,-0.5) rectangle (12.5,4.5);
\draw[,,dotted] (0.00,3.00) -- (1.90,3.00);
\draw[,,] (2.00,3.00) -- (4.00,3.00);
\draw[] (9.00,1.00) -- (9.05,1.01) -- (9.09,1.01) -- (9.14,1.02) -- (9.18,1.02) -- (9.23,1.02) -- (9.27,1.03) -- (9.32,1.03) -- (9.37,1.04) -- (9.42,1.04) -- (9.47,1.04) -- (9.51,1.04) -- (9.56,1.04) -- (9.62,1.04) -- (9.67,1.04) -- (9.72,1.03) -- (9.77,1.03) -- (9.83,1.02) -- (9.88,1.02) -- (9.94,1.01) -- (10.00,1.00);
\draw[] (10.00,1.00) -- (10.07,0.99) -- (10.14,0.97) -- (10.20,0.96) -- (10.27,0.94) -- (10.34,0.92) -- (10.41,0.90) -- (10.48,0.87) -- (10.54,0.85) -- (10.61,0.83) -- (10.67,0.80) -- (10.72,0.77) -- (10.78,0.74) -- (10.82,0.72) -- (10.87,0.69) -- (10.91,0.66) -- (10.94,0.63) -- (10.96,0.59) -- (10.98,0.56) -- (11.00,0.53) -- (11.00,0.50);
\draw[] (11.00,0.50) -- (11.00,0.47) -- (10.98,0.44) -- (10.96,0.41) -- (10.94,0.37) -- (10.91,0.34) -- (10.87,0.31) -- (10.82,0.28) -- (10.77,0.25) -- (10.72,0.23) -- (10.66,0.20) -- (10.60,0.17) -- (10.54,0.15) -- (10.48,0.12) -- (10.41,0.10) -- (10.34,0.08) -- (10.27,0.06) -- (10.20,0.04) -- (10.13,0.03) -- (10.07,0.01) -- (10.00,0.00);
\draw[] (10.00,0.00) -- (9.94,-0.01) -- (9.88,-0.02) -- (9.83,-0.02) -- (9.77,-0.03) -- (9.72,-0.03) -- (9.67,-0.04) -- (9.62,-0.04) -- (9.57,-0.04) -- (9.52,-0.04) -- (9.47,-0.04) -- (9.42,-0.03) -- (9.37,-0.03) -- (9.32,-0.03) -- (9.28,-0.02) -- (9.23,-0.02) -- (9.18,-0.02) -- (9.14,-0.01) -- (9.09,-0.01) -- (9.05,-0.00) -- (9.00,0.00);
\draw[] (9.00,0.00) -- (8.81,0.01) -- (8.63,0.03) -- (8.44,0.03) -- (8.24,0.04) -- (8.05,0.04) -- (7.85,0.04) -- (7.66,0.04) -- (7.46,0.04) -- (7.26,0.04) -- (7.05,0.04) -- (6.85,0.03) -- (6.65,0.03) -- (6.44,0.02) -- (6.24,0.01) -- (6.03,0.01) -- (5.83,0.01) -- (5.62,0.00) -- (5.41,-0.00) -- (5.21,-0.00) -- (5.00,0.00);
\draw[] (5.00,0.00) -- (4.90,0.00) -- (4.79,0.00) -- (4.69,0.00) -- (4.59,0.01) -- (4.48,0.01) -- (4.38,0.01) -- (4.28,0.01) -- (4.18,0.02) -- (4.08,0.02) -- (3.98,0.02) -- (3.88,0.02) -- (3.78,0.02) -- (3.68,0.02) -- (3.58,0.02) -- (3.48,0.02) -- (3.38,0.02) -- (3.28,0.02) -- (3.19,0.01) -- (3.09,0.01) -- (3.00,0.00);
\draw[] (3.00,0.00) -- (2.95,-0.00) -- (2.91,-0.01) -- (2.86,-0.01) -- (2.81,-0.02) -- (2.77,-0.02) -- (2.72,-0.02) -- (2.67,-0.03) -- (2.63,-0.03) -- (2.58,-0.03) -- (2.53,-0.03) -- (2.48,-0.04) -- (2.43,-0.04) -- (2.38,-0.04) -- (2.33,-0.03) -- (2.28,-0.03) -- (2.22,-0.03) -- (2.17,-0.02) -- (2.11,-0.02) -- (2.06,-0.01) -- (2.00,0.00);
\draw[] (2.00,0.00) -- (1.93,0.01) -- (1.87,0.03) -- (1.80,0.04) -- (1.73,0.06) -- (1.66,0.08) -- (1.59,0.10) -- (1.53,0.13) -- (1.46,0.15) -- (1.40,0.18) -- (1.34,0.20) -- (1.28,0.23) -- (1.23,0.26) -- (1.18,0.29) -- (1.14,0.32) -- (1.10,0.35) -- (1.06,0.38) -- (1.04,0.41) -- (1.02,0.44) -- (1.01,0.47) -- (1.00,0.50);
\draw[] (1.00,0.50) -- (1.00,0.53) -- (1.01,0.56) -- (1.03,0.59) -- (1.06,0.62) -- (1.09,0.65) -- (1.13,0.68) -- (1.17,0.71) -- (1.22,0.73) -- (1.27,0.76) -- (1.32,0.79) -- (1.38,0.81) -- (1.45,0.84) -- (1.51,0.86) -- (1.58,0.88) -- (1.65,0.91) -- (1.72,0.93) -- (1.79,0.95) -- (1.86,0.97) -- (1.93,0.98) -- (2.00,1.00);
\draw[] (2.00,1.00) -- (2.12,1.03) -- (2.24,1.05) -- (2.36,1.07) -- (2.47,1.08) -- (2.58,1.09) -- (2.69,1.10) -- (2.79,1.10) -- (2.89,1.11) -- (2.99,1.11) -- (3.09,1.10) -- (3.19,1.10) -- (3.28,1.09) -- (3.38,1.09) -- (3.47,1.08) -- (3.56,1.07) -- (3.65,1.05) -- (3.74,1.04) -- (3.82,1.03) -- (3.91,1.01) -- (4.00,1.00);
\draw[,,] (6.00,2.00) -- (4.00,2.00);
\draw[,,dotted] (12.00,2.00) -- (6.05,2.00);
\draw[] (9.00,1.00) -- (8.95,1.00) -- (8.90,0.99) -- (8.85,0.99) -- (8.80,0.99) -- (8.75,0.98) -- (8.70,0.98) -- (8.64,0.98) -- (8.59,0.98) -- (8.54,0.97) -- (8.49,0.97) -- (8.44,0.97) -- (8.39,0.97) -- (8.34,0.97) -- (8.29,0.97) -- (8.24,0.98) -- (8.19,0.98) -- (8.15,0.98) -- (8.10,0.99) -- (8.05,0.99) -- (8.00,1.00);
\draw[] (8.00,1.00) -- (7.90,1.02) -- (7.80,1.04) -- (7.70,1.07) -- (7.60,1.10) -- (7.50,1.14) -- (7.40,1.18) -- (7.30,1.22) -- (7.21,1.27) -- (7.11,1.32) -- (7.02,1.37) -- (6.92,1.43) -- (6.83,1.49) -- (6.74,1.55) -- (6.65,1.61) -- (6.56,1.67) -- (6.46,1.73) -- (6.37,1.80) -- (6.28,1.87) -- (6.19,1.93) -- (6.10,2.00);
\filldraw[,fill=white] (3.50,0.50) rectangle (4.50,3.50);
\filldraw[,fill=white] (8.50,0.50) rectangle (9.50,1.50);
\draw (0.00,3.00) node{$\bullet$};
\draw (2.00,3.00) node{$\bullet$};
\draw (4.00,2.00) node{$+$};
\draw (6.00,2.00) node{$\bullet$};
\draw (12.00,2.00) node{$\bullet$};
\draw (2.00,1.00) node{$\bullet$};
\draw (9.00,1.00) node{$\iota$};
\end{tikzpicture}
  \vspace{-5ex}
\end{wrapfigure}
\noindent
function~$\R^\infty\times\R^\infty\to\R^\infty$ such that the image of two
streams of same length is their pointwise addition and the interpretation
of~$\iota$ is the function~$\R^\infty\to\R^\infty$ which prepends a~$0$ to
streams. The interpretation of the net on the right is the function which
returns the stream whose $n$-th value is the sum of the $n+1$ first values of
the stream.
\vspace{2ex}






An element of the Kahn domain can be considered as a partial function~$s:\N\to
R$ whose domain of definition is an initial segment of~$\N$ (the integers
in~$\N$ represent the instants of the time). Our goal here is to consider a
semantics where \emph{time is continuous}, \ie streams are generalized to
partial functions~$s:\R^+\to R$ defined on an initial segment of~$\R^+$ and to
relate it to the Kahn semantics. In order to build bridge between the two
models, the intuition here is that continuous time can be considered as
``discrete'' if we allow ourselves to consider \emph{infinitesimals}: the time
in~$\R^+$ can namely be thought as a sequence of instants $0, \d t, 2\d t, 3\d
t,\ldots$ where~$\d t$ is an infinitesimal, thus enabling us to extend
techniques developed for Kahn networks to continuous time semantics. Moreover,
many operations of common use in analysis can be simply formulated by an
appropriate net with the continuous time semantics. For instance, the derivative
of a function whose definition can be formulated as~$f'(t)=(f(t)-f(t-\d t))/\d
t$ can be implemented by a net of the form~\eqref{eq:net-der}
which directly translates to nets the above formula. The rest of the paper is
devoted to explaining and formalizing these ideas by using of non-standard
analysis which allows us to rigorously make sense of the notion of
infinitesimal.

\section{A non-standard semantics for Kahn networks in continuous time}
\subsection{Hyperreals}
\label{sec:hyperreals}
We give here a brief introduction to non-standard analysis and refer the reader
to textbooks~\cite{goldblatt_lectureshyperreals_1998, robinson-book, petry1996balade} for
details. The motivation underlying the construction of hyperreals is to extend
the field~$\R$ of real numbers into a field~$\ns\R$ in which one can give a
meaning to the notions of infinitesimal and infinite numbers. Basically,
hyperreal numbers are defined as countable sequences~$(x_i)_{i\in\N}$ of real
numbers, the sequences converging towards~$0$ representing infinitesimals. Any
real~$x$ can be seen as the hyperreal which is the constant sequence whose
elements are equal to~$x$, moreover the usual operations are extended pointwise
to hyperreals, \eg the multiplication is defined by $(x_i)\times(y_i)=(x_i\times
y_i)$. In order for suitable axioms to be satisfied (for instance every non-null
hyperreal should have an inverse) one has to consider equivalence classes of
such sequences; in particular, any two sequences which only differ on a finite
number of values should be equivalent.
The starting point of non-standard analysis is the fact that a suitable
equivalence relation can be defined from an ultrafilter:

\begin{definition}[Ultrafilter]
  A \emph{filter} on a set~$I$ is a collection~$\F$ of sets which is closed
  under intersection and under supersets (\ie if $U\subseteq V\subseteq I$
  and~$U\in\F$ then~$V\in\F$). A filter is \emph{proper} when
  \hbox{$\emptyset\not\in\F$}. An \emph{ultrafilter} on~$I$ is a proper filter
  such that for every~$U\subseteq I$, either~$U\in\F$ or \hbox{$I\setminus
    U\in\F$}.
  An ultrafilter~$\F$ is \emph{principal} when there exists~$i\in I$ such that
  \hbox{$\F=\setof{U\subseteq I}{i\in U}$}, and \emph{non-principal} otherwise.
\end{definition}


\noindent
Assuming Zorn's lemma (which is equivalent to the axiom of choice), it can be
shown that

\begin{proposition}
  Any infinite set~$I$ has a non-principal ultrafilter on it.
\end{proposition}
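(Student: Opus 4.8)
The plan is to build the desired ultrafilter by extending the \emph{Fréchet filter} of cofinite sets to a maximal proper filter via Zorn's lemma, and then to observe that maximality forces both the ultrafilter property and non-principality.

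First I would introduce the Fréchet filter $\F_0=\setof{U\subseteq I}{I\setminus U\text{ is finite}}$ and check that it is a proper filter: it is closed under supersets since shrinking a complement keeps it finite, closed under intersection since $I\setminus(U\cap V)=(I\setminus U)\cup(I\setminus V)$ is finite whenever both pieces are, and proper because $I\setminus\emptyset=I$ is infinite. This last point is exactly where the hypothesis that $I$ is infinite is used. I would then consider the poset of all proper filters on~$I$ containing~$\F_0$, ordered by inclusion, and apply Zorn's lemma. The key verification is that every chain~$(\F_j)$ has an upper bound, for which I would take the union~$\bigcup_j\F_j$: it is clearly closed under supersets and contains~$\F_0$, it avoids~$\emptyset$ since none of the~$\F_j$ do, and it is closed under intersection because any two of its members already lie in a common~$\F_j$ by the chain condition. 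Zorn's lemma then yields a maximal proper filter~$\F$ containing~$\F_0$.

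Next I would show that this maximal~$\F$ is in fact an ultrafilter. Given~$U\subseteq I$ with~$U\notin\F$, consider the filter generated by~$\F\cup\set{U}$, whose members are precisely the supersets of~$W\cap U$ for~$W\in\F$ (using that~$\F$ is closed under finite intersections). By maximality this enlarged filter cannot be proper, so~$W\cap U=\emptyset$ for some~$W\in\F$; hence~$W\subseteq I\setminus U$ and, by closure under supersets, $I\setminus U\in\F$. This is exactly the ultrafilter dichotomy. Finally, non-principality is immediate from~$\F_0\subseteq\F$: for each~$i\in I$ the cofinite set~$I\setminus\set{i}$ belongs to~$\F_0$ and hence to~$\F$, so~$\F$ cannot equal the principal filter~$\setof{U\subseteq I}{i\in U}$, which excludes~$I\setminus\set{i}$.

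The main obstacle is the passage from \emph{maximal proper filter} to \emph{ultrafilter}: it is not a purely formal consequence of maximality but requires the generated-filter argument above, where one must identify the obstruction to properness as a pair~$W,U$ of disjoint sets and then invoke upward closure to recover~$I\setminus U\in\F$. By contrast, the Zorn setup (verifying the chain bound) and the non-principality check are routine once the inclusion~$\F_0\subseteq\F$ is in place.
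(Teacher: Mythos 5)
Your proof is correct and is exactly the argument the paper has in mind: the paper states this proposition without proof, remarking only that it follows from Zorn's lemma, and your route---extend the Fr\'echet filter $\F_0$ to a maximal proper filter by Zorn, derive the dichotomy via the generated-filter argument, and get non-principality from $\F_0\subseteq\F$---is the standard instantiation of that remark. The only (harmless) quibble is the empty chain in the Zorn setup, which is handled by noting that the poset is nonempty and that $\F_0$ itself bounds the empty chain.
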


We now fix such an ultrafilter~$\F$ on~$\N$ whose elements are called
\emph{large sets}. The fact that~$\F$ is non-principal implies that it does not
contain any finite subset of~$\N$: the construction of the ultrafilter can thus
be thought as a way of constructing a set, starting from all cofinite sets, and
coherently adding either~$I$ or its complement for every set $I\subseteq\N$
which is neither finite nor cofinite. We define an equivalence relation~$\equiv$
on countable sequences of reals by~$(x_i)\equiv(y_i)$ when
\hbox{$\setof{i\in\N}{x_i=y_i}\in\F$} and denote by $\nsclass{x_i}$ the
equivalence class of~$(x_i)$.

\begin{definition}[Hyperreals]
  The set~$\ns\R$ of \emph{hyperreals} is the set of equivalence classes (\wrt
  $\equiv$) of countable sequences of reals.
\end{definition}

\noindent
The set~$\ns\N$ of \emph{hyperintegers} is defined similarly and there is a
canonical inclusion~$\ns\N\subseteq\ns\R$.

Any countable sequence~$(x_i)$ of reals induces an hyperreal~$\nsclass{x_i}$,
and in particular a real~$r$ can be seen as an hyperreal~$\ns r=\nsclass{r}$ by
considering the equivalence class of the constant sequence whose elements are
equal to~$r$ (we sometimes leave this conversion implicit). Similarly, a
countable sequence~$(X_i)$ of subsets of~$\R$ induces a set~$\nsclass{X_i}$ of
hyperreals defined as the set of~$\nsclass{x_i}\in\nsreal$ such that
$\setof{i\in\N}{x_i\in X_i}\in\F$. A subset of~$\nsreal$ is an \emph{internal
  set} if it can be obtained this way, in particular any set~$X\subseteq\R$
induces an internal set~$\ns X=\nsclass{X}$, associated to the constant sequence
(for instance $\nsclass\R=\ns\R$). Similarly, a countable sequence of
functions~$(f_i:A_i\to B_i)$, where the~$A_i$ and~$B_i$ are subsets of~$\R$,
extends to a function~$\nsclass{f_i}:\nsclass{A_i}\to\nsclass{B_i}$, defined
on~$\nsclass{x_i}\in\nsclass{A_i}$ by
$\nsclass{f_i}(\nsclass{x_i})=\nsclass{\overline{f_i}(x_i)}$
where~$\overline{f_i}(x_i)=f_i(x_i)$ if~$x_i\in A_i$ and~$\overline{f_i}(x_i)=0$
otherwise. Such a function is called an \emph{internal function}. Any
real-valued function~$f:A\to B$ may be seen as an internal function~$\ns
f=\nsclass{f}:\nsclass{A}\to\nsclass{B}$. The notion of \emph{internal relation}
is defined similarly.

\begin{remark}
  As we explain in Section~\ref{sec:transfer}, it is important to keep in mind
  that not every set~$X\subseteq\ns\R$ (or function, or relation) is internal.
\end{remark}

\noindent
Notice that in the above definition of an internal function, we have used~$0$ as
``default value'' for the functions~$f_i$ on the elements~$x_i\not\in A_i$. This
could be avoided by choosing a suitable representative in the equivalence
class~$\nsclass{x_i}$:

\begin{lemma}
  \label{lemma:internal-element}
  Given an element~$x$ of an internal set~$\nsclass{A_i}$, there exists a
  sequence~$(y_i)$, such that~$y_i\in A_i$ for every index~$i$,
  satisfying~$\nsclass{y_i}=x$.
\end{lemma}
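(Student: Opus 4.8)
The plan is to start from an arbitrary representative of~$x$ and repair it on the ``bad'' indices, exploiting that a modification on a non-large set leaves the equivalence class untouched. Concretely, I would pick a sequence~$(x_i)$ with $\nsclass{x_i}=x$. By the very definition of the internal set~$\nsclass{A_i}$, the assumption $x\in\nsclass{A_i}$ means precisely that $L=\setof{i\in\N}{x_i\in A_i}\in\F$, so~$L$ is a large set. For every index~$i\in L$ we already have $x_i\in A_i$, so the only indices to fix are those in~$\N\setminus L$.

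The definition I would use is $y_i=x_i$ for~$i\in L$, and $y_i$ equal to an arbitrarily chosen element of~$A_i$ for~$i\notin L$ (invoking the axiom of choice to make all these selections simultaneously). Then $y_i\in A_i$ holds for every index by construction, and since $\setof{i\in\N}{y_i=x_i}\supseteq L$ is a superset of a large set it is itself large, whence $\nsclass{y_i}=\nsclass{x_i}=x$. Here the fact that~$\F$ is a filter (closed under supersets) is exactly what guarantees that agreeing on~$L$ forces the two classes to coincide.

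The one point requiring care, and the main obstacle, is that the selection ``$y_i\in A_i$ for~$i\notin L$'' presupposes $A_i\neq\emptyset$ at those indices, which need not hold for an arbitrary defining sequence. I would dispose of this by a reduction: the set $\setof{i\in\N}{A_i\neq\emptyset}$ contains~$L$ and is therefore large, and replacing each empty~$A_i$ by a fixed singleton (say~$\set 0$) yields a sequence~$(A_i')$ agreeing with~$(A_i)$ on this large set. A short computation using the filter properties shows that membership of a class~$\nsclass{z_i}$ in~$\nsclass{A_i}$ depends only on the largeness of~$\setof{i\in\N}{z_i\in A_i}$, and hence is unchanged when the~$A_i$ are altered off a large set; thus $\nsclass{A_i'}=\nsclass{A_i}$ and we may assume from the outset that every~$A_i$ is nonempty. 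With this reduction in place the construction above goes through verbatim, and the ultrafilter property, namely that exactly one of~$L$ and~$\N\setminus L$ lies in~$\F$, is precisely what ensures the repaired indices form a negligible set.
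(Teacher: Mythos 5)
Your proof is correct, and the paper in fact states this lemma without proof, treating it as a standard ultrafilter argument; your construction (keep $x_i$ on the large set $L=\setof{i\in\N}{x_i\in A_i}\in\F$, repair it by choice off $L$, and use closure of $\F$ under supersets to conclude $\nsclass{y_i}=x$) is exactly the intended one. Your extra care about possibly empty $A_i$ is warranted -- as literally stated the lemma presupposes a defining sequence with nonempty members -- and your reduction replacing each empty $A_i$ by a singleton without changing the class $\nsclass{A_i}$ matches the paper's own later remark that one may always pass to a sequence $(B_i)$ with $\nsclass{B_i}=\nsclass{A_i}$ whose members all have the desired property.
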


In the way described above, all the usual operations on reals extend to
hyperreals (and similarly for hyperintegers).
For instance, the absolute value of an hyperreal~$\hr{x}=\nsclass{x_i}$ is
defined by~$|\hr{x}|=\nsclass{|x_i|}$. An hyperreal $\hr{x}$ of $\nsreal$ is
\emph{infinitesimal} whenever $|\hr{x}|<r$ for every real $r>0$, and
\emph{unlimited} if~$r<|\hr{x}|$ for every real~$r\in\R$. Given a
hyperreal~$\hr{x}$ which is not unlimited, there exists a unique real~$y$ such
that~$\hr{x}-y$ is infinitesimal: this real is called the \emph{standard part}
of~$\hr{x}$ and denoted by~$\st(\hr{x})$. We define an equivalence
relation~$\approx$ on hyperreals by $\hr{x}\approx\hr{y}$ whenever
$\st(\hr{x}-\hr{y})=0$.

\begin{remark}
  The existence of a standard part might be surprising at first: for instance,
  given the sequence~$x_i$ such that~$x_i=0$ if $i$ is even and $x_i=1$
  otherwise, what should be the standard part of~$\nsclass{x_i}$? The result is
  given by the chosen ultrafilter~$\F$: if the set~$I$ of even integers is
  in~$\F$ then~$\st(\nsclass{x_i})=0$, otherwise the set~$\N\setminus I$ of odd
  integers is in~$\F$ and~$\st(\nsclass{x_i})=1$.
\end{remark}

\begin{remark}
  The method used to construct~$\ns\R$ is an instance of a very general
  construction of ultraproducts in model theory, which can be used to define a
  non-standard model~$\ns{\mathbb{X}}$ from any
  model~$\mathbb{X}$~\cite{los1955algebraic, robinson-book,
    goldblatt_lectureshyperreals_1998}. In particular, given sets~$\mathbb{X}$
  and~$\mathbb{Y}$, this construction applied to the
  set~$\mathbb{Y}^{\mathbb{X}}$ of functions from~$\mathbb{X}$ to~$\mathbb{Y}$
  constructs the set~$\ns{(\mathbb{Y}^{\mathbb{X}})}$ of internal functions
  from~$\ns{\mathbb{X}}$ to~$\ns{\mathbb{Y}}$.
\end{remark}

\subsubsection{The transfer principle}
\label{sec:transfer}
A fundamental tool in non-standard analysis is the \emph{transfer principle},
which follows from Łoś theorem~\cite{los1955algebraic}. Informally, this
principle can be formulated as follows

\begin{proposition}[Transfer principle]
  \label{prop:transfer}
  A first-order formula~$\varphi$ is satisfied in~$\R$ if and only if it is
  satisfied in~$\ns\R$, if we assume that all the sets, functions and relations
  involved in the formula are internal.
\end{proposition}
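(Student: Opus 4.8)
The plan is to derive the transfer principle from Łoś's theorem by exhibiting the internal universe as the ultrapower of the standard one modulo~$\F$. First I would fix a first-order language whose nonlogical symbols name the sets, functions and relations over~$\R$ that one wishes to quantify over, taking membership and function application as primitive and using only bounded quantifiers (equivalently, working inside the superstructure over~$\R$), so that every quantifier ranges over a specified sort. The standard structure interprets each symbol by the corresponding object of~$\R$, while the non-standard structure interprets it by the associated internal object, \ie the constant sequences~$\nsclass{X}$, $\nsclass{f}$, etc. The point of restricting to \emph{internal} sets, functions and relations is precisely that these are exactly the elements of the ultrapower $\R^\N/{\equiv}$: an internal set has the form~$\nsclass{X_i}$, and by Lemma~\ref{lemma:internal-element} its elements are represented by sequences landing in the~$X_i$.

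The core of the argument is Łoś's theorem, which I would prove by induction on the structure of a formula~$\varphi(x_1,\dots,x_k)$, asserting that for internal parameters~$\nsclass{a^1_i},\dots,\nsclass{a^k_i}$,
\[
\ns\R\models\varphi(\nsclass{a^1_i},\dots,\nsclass{a^k_i})
\qquad\text{iff}\qquad
\inF{\R\models\varphi(a^1_i,\dots,a^k_i)}.
\]
The atomic case is immediate from the definitions of internal set and internal function: for instance $\nsclass{x_i}\in\nsclass{X_i}$ holds exactly when $\inF{x_i\in X_i}$. The Boolean cases use the defining properties of~$\F$: conjunction uses closure under intersection (with the converse using closure under supersets), and negation uses that for every~$U\subseteq\N$ either~$U\in\F$ or~$\N\setminus U\in\F$, from which disjunction follows.

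The step I expect to be the main obstacle is the existential quantifier. To pass from $\inF{\R\models\exists x\,\psi(x,a^1_i,\dots)}$ to an actual witness in~$\ns\R$, I would, for each index~$i$ in this large set, use the axiom of choice to select~$y_i$ with $\R\models\psi(y_i,a^1_i,\dots)$ (setting~$y_i$ arbitrarily elsewhere); the internal element~$\nsclass{y_i}$ is then the required witness by the induction hypothesis, since the set of indices where it satisfies~$\psi$ contains the given large set. It is here that it is essential that the quantifier range over internal objects only, matching the remark that not every subset of~$\ns\R$ is internal, and Lemma~\ref{lemma:internal-element} guarantees the witness can be taken inside the relevant internal sort. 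The universal case follows by duality.

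Finally I would specialize to transfer. A formula~$\varphi$ as in the statement has its symbols interpreted in~$\R$ by standard objects~$A_1,\dots,A_n$ and in~$\ns\R$ by their star-transforms~$\ns A_j=\nsclass{A_j}$, which are constant sequences. Hence every factor of the ultrapower is the same copy of~$\R$, so $\setof{i\in\N}{\R\models\varphi}$ is either all of~$\N$ or empty. Since~$\N\in\F$ and~$\emptyset\notin\F$, Łoś's theorem yields $\ns\R\models\varphi$ if and only if $\R\models\varphi$, as claimed.
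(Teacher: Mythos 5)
Your proposal is correct and follows exactly the route the paper itself indicates: the paper states the transfer principle without proof, noting only that it ``follows from Łoś theorem''~\cite{los1955algebraic}, and your argument is precisely the standard derivation via the ultrapower $\R^\N/{\equiv}$, with the usual induction on formulas (ultrafilter properties for the Boolean connectives, the axiom of choice for the existential witness) filling in the proof of Łoś's theorem that the paper delegates to the literature. Your specialization to constant sequences, using $\N\in\F$ and $\emptyset\notin\F$, and your observation that restricting quantifiers to internal objects is exactly what makes the ultrapower argument go through, are both accurate and consistent with the paper's remark that non-internal subsets of $\ns\R$ violate transferred properties.
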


\noindent
A similar theorem can be formulated for~$\ns\N$. Many constructions of standard
analysis can thus be transferred to non-standard analysis. For instance, the
sets~$\ns\N$ and~$\ns\R$ are, respectively, a ring and a field and both are totally
ordered.

\begin{remark}
  The assumption that we consider only internal objects is very important. For
  instance the formula $((\forall x\in A.\ x\in\N)\land(\exists x\in\N.\ x\in
  A))\Rightarrow(\exists x\in A.\forall y\in A.\ x\leq y)$ is true in~$\N$:
  every non-empty subset~$A$ of~$\N$ admits a smallest element. From this, we
  can deduce by transfer that every non-empty \emph{internal} subset of~$\ns\N$
  admits a smallest element. However, this property does not hold for every
  subset of~$\ns\N$: for instance, $\ns\N\setminus\N$ does not have a smallest
  element since it can be shown to be closed under predecessor, and is thus not
  internal. Likewise a subset of~$\ns\N$ (\resp $\ns\R$) is internal if and only
  if it is finite.
\end{remark}

\subsubsection{Non-standard analysis}
\label{sec:ns-analysis}
One of the most interesting property of hyperreals is that it allows one to
rigorously consider infinitesimals and thus formalize in an elegant way many of
the tools in common use in standard analysis. We give below some of these
reformulations which will be of use afterward.

\begin{proposition}[Continuity]
  \label{prop:continuity}
  A function~$f:\R\to\R$ is \emph{continuous} at~$x$ when for
  every~$\hr{y}\in\ns\R$ such that $\hr{y}\approx x$, we have $\ns
  f(\hr{y})\approx f(x)$. Otherwise said, $f$ is continuous at~$x$ when for
  every infinitesimal~$\delta\approx 0$, there exists an
  infinitesimal~$\varepsilon\approx 0$ such that~$\ns f(x+\delta)=
  f(x)+\varepsilon$.
\end{proposition}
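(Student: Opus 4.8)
The statement to establish is that the classical $\varepsilon$--$\delta$ definition of continuity at~$x$ is \emph{equivalent} to the non-standard characterization ``$\hr y\approx x$ implies $\ns f(\hr y)\approx f(x)$'', together with the trivially equivalent reformulation in terms of infinitesimal increments. The plan is to prove the two implications separately: the forward direction by the transfer principle, and the converse by contraposition using an explicitly constructed hyperreal witness.

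For the forward direction I would assume $f$ continuous at~$x$ in the usual sense and fix an arbitrary \emph{standard} real~$\varepsilon>0$. Classical continuity then supplies a \emph{standard} real~$\delta>0$ for which the formula $\forall y\in\R\,(|y-x|<\delta\Rightarrow|f(y)-f(x)|<\varepsilon)$ holds. Since the parameters $x,\delta,\varepsilon$ are standard reals and the function~$f$, the absolute value and the order are internal, Proposition~\ref{prop:transfer} transfers this to $\forall\hr y\in\nsreal\,(|\hr y-x|<\delta\Rightarrow|\ns f(\hr y)-f(x)|<\varepsilon)$, using that $\ns f$ extends $f$ so $\ns f(x)=f(x)$. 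Now if $\hr y\approx x$ then $|\hr y-x|$ is infinitesimal, hence smaller than the standard bound~$\delta$, so $|\ns f(\hr y)-f(x)|<\varepsilon$. As $\varepsilon$ was an arbitrary positive standard real, $\ns f(\hr y)-f(x)$ is infinitesimal, \ie $\ns f(\hr y)\approx f(x)$.

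For the converse I would argue by contraposition and read off a witness from the ultraproduct construction of~$\nsreal$. If $f$ is discontinuous at~$x$, there is a real~$\varepsilon>0$ such that for every real~$\delta>0$ some real~$y$ satisfies $|y-x|<\delta$ yet $|f(y)-f(x)|\geq\varepsilon$. Taking $\delta=1/n$ yields a sequence~$(y_n)$ of reals with $|y_n-x|<1/n$ and $|f(y_n)-f(x)|\geq\varepsilon$ for all~$n$. Put $\hr y=\nsclass{y_n}$. For each real~$r>0$ the set $\setof{n}{|y_n-x|<r}$ is cofinite, hence lies in the ultrafilter~$\F$, so $\hr y-x$ is infinitesimal and $\hr y\approx x$; on the other hand $|\ns f(\hr y)-f(x)|=\nsclass{|f(y_n)-f(x)|}\geq\ns\varepsilon$, so $\ns f(\hr y)-f(x)$ is not infinitesimal and $\ns f(\hr y)\not\approx f(x)$, contradicting the non-standard hypothesis. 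The ``otherwise said'' reformulation is then immediate: writing $\hr y=x+\delta$ identifies $\hr y\approx x$ with $\delta\approx 0$, and $\ns f(\hr y)\approx f(x)$ says exactly that $\ns f(x+\delta)-f(x)$ equals some infinitesimal~$\varepsilon$.

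I expect the only genuinely delicate point to be the bookkeeping around transfer in the forward direction: one must fix a standard~$\varepsilon$ and extract a standard~$\delta$ from classical continuity \emph{before} transferring, and then transfer only the inner universally quantified implication. Transferring the full $\forall\varepsilon\,\exists\delta$ statement and specializing to standard~$\varepsilon$ afterwards would leave the produced~$\delta$ possibly infinitesimal, in which case $|\hr y-x|<\delta$ could fail and the argument would break. The backward direction, by contrast, is elementary once the explicit description of hyperreals as equivalence classes of sequences modulo~$\F$ is in hand.
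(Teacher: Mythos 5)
Your proof is correct, and there is nothing in the paper to compare it against: the paper states this proposition without proof, as a classical fact of non-standard analysis deferred to the cited textbooks. Your argument is exactly the standard one --- transfer of the inner $\forall y$ implication after fixing a standard $\varepsilon$ and extracting a standard $\delta$ (your closing caveat about the quantifier order is the right thing to worry about, and you handle it correctly), and for the converse the explicit ultrapower witness $\nsclass{y_n}$ built from $\delta=1/n$, which is infinitely close to $x$ because cofinite sets lie in the non-principal ultrafilter $\F$ --- so the proposal is sound as written.
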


\begin{proposition}[Differentiation]
  \label{prop:differentiation}
  A function~$f:\R\to\R$ admits~$y\in\R$ as \emph{derivative} at $x\in\R$ when
  for every non-null infinitesimal~$\delta\approx 0$, we have $(\ns
  f(x+\delta)-f(x))/\delta\approx y$. Furthermore, if~$f$ is continuously
  differentiable on $\R$, then for any two non-null distinct infinitesimals
  $\delta$ and $\varepsilon$, and for any $x\in\R$, we have
  \[
  f'(x)\qeq\st\left(\frac{\ns f(x+\delta)- \ns{f}(x+\varepsilon)}{\delta-\varepsilon}\right)
  \]
\end{proposition}

Given a continuous function~$f:\R\to\R$, its integral on an interval~$[a,b]$ is
defined by
\[
\int_a^bf(x)\d x
\qeq
\lim_{n\to\infty}\pa{\sum_{k=0}^{n-1}f\pa{a+\frac kn(b-a)}\frac 1n}
\]
Notice that each sum makes sense because it is finite since it is indexed over
the finite set~$\setof{k\in\N}{0\leq k<n}$. This notion of finite set can be
generalized to internal sets as follows: an internal set~$A=\nsclass{A_i}$ is
\emph{hyperfinite} if almost all the~$A_i$ are finite, \ie
$\setof{i\in\N}{\text{$A_i$ is finite}}\in\F$. By an argument similar to
Lemma~\ref{lemma:internal-element}, we can always suppose that all the~$A_i$ are
finite by choosing a suitable sequence of finite sets~$B_i$ such
that~$\nsclass{B_i}=\nsclass{A_i}$. Given such an internal set and an internal
function~$\nsclass{f_i}$, we define
$\sum_{\nsclass{x_i}\in\nsclass{A_i}}\nsclass{f_i}(\nsclass{x_i})=\nsclass{\sum_{x_i\in A_i} f_i(x_i)}$.

\begin{proposition}[Integration]
  \label{prop:integral}
  Given a function~$f:\R\to\R$ which is continuous on an interval~$[a,b]$,
  excepting possibly a finite number of points of discontinuity, we have
  \begin{equation}
    \label{eq:integral}
    \int_a^bf(x)\d x
    \qeq
    \st\pa{\sum_{x\in N}\ns f(a+x(b-a))\delta}
  \end{equation}
  where~$\delta=1/\hr{n}$ for some unlimited~$\hr{n}=\nsclass{n_i}\in\ns\N$
  and~$N$ is the hyperfinite set $N=\nsclass{N_i}\subseteq\ns\R$
  with~$N_i=\setof{k_i/n_i\in\R}{k_i\in\N, 0\leq k_i<n_i}$. In particular, the
  result does not depend on the choice of the unlimited
  hyperinteger~$\hr{n}\in\ns\N$.
\end{proposition}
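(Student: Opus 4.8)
The plan is to prove Proposition~\ref{prop:integral} by reducing the hyperfinite Riemann sum to the classical limit of Riemann sums and then invoking the definition of the standard integral. Let me sketch the approach.

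The plan is to recognize the hyperfinite Riemann sum in \eqref{eq:integral} as the non-standard extension, evaluated at $\hr{n}$, of the ordinary sequence of Riemann sums whose limit defines the integral, and then to transfer the convergence of that sequence. Write $S_n=\sum_{k=0}^{n-1}f\pa{a+\frac kn(b-a)}\frac 1n$ for the $n$-th standard Riemann sum, so that $S\colon\N\to\R$ is a genuine real sequence and $\ns S\colon\nsint\to\nsreal$ is an internal function. Unfolding the definitions of hyperfinite summation and of $\ns f$, the sum $\sum_{x\in N}\ns f(a+x(b-a))\delta$ computed at index $i$ is exactly $\sum_{k_i=0}^{n_i-1}f\pa{a+\frac{k_i}{n_i}(b-a)}\frac{1}{n_i}=S_{n_i}$, whence $\sum_{x\in N}\ns f(a+x(b-a))\delta=\nsclass{S_{n_i}}=\ns S(\hr{n})$. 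The whole statement thus reduces to proving $\st(\ns S(\hr{n}))=\int_a^bf(x)\d x$.

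First I would establish the non-standard characterization of the limit of a sequence: if $L=\lim_{n\to\infty}S_n$ exists in $\R$, then $\ns S(\hr{n})\approx L$ for every unlimited $\hr{n}\in\nsint$. Fixing a standard real $\varepsilon>0$, convergence yields a standard $M\in\N$ with $|S_n-L|<\varepsilon$ for all $n>M$; since $S$, $\varepsilon$ and $L$ are standard, the formula $\forall n\in\N.\,(n>M\Rightarrow|S_n-L|<\varepsilon)$ is internal and transfers (Proposition~\ref{prop:transfer}) to $\forall n\in\nsint.\,(n>M\Rightarrow|\ns S(n)-L|<\varepsilon)$, giving $|\ns S(\hr{n})-L|<\varepsilon$ for every $\hr{n}>M$, in particular for every unlimited one. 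As this holds for all standard $\varepsilon>0$, the difference $\ns S(\hr{n})-L$ is infinitesimal, that is $\st(\ns S(\hr{n}))=L$.

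It then remains to identify $\lim_{n\to\infty}S_n$ with $\int_a^bf(x)\d x$. When $f$ is continuous this is immediate, since the integral was \emph{defined} as exactly this limit. The substantive case is the one allowing finitely many discontinuities, and this is where I expect the main obstacle to lie: the equispaced grids $N_i$ need not meet the discontinuity points of $f$, so one cannot simply split $[a,b]$ at those points and reduce to the continuous case. Instead I would invoke the classical facts that a bounded function on $[a,b]$ with finitely many discontinuities is Riemann integrable, and that for a Riemann-integrable function the equispaced Riemann sums converge to the Riemann integral as $n\to\infty$; together these guarantee that $\lim_{n\to\infty}S_n$ exists and equals $\int_a^bf(x)\d x$, so that the previous step applies.

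Finally, independence from the choice of $\hr{n}$ is automatic: by the above the common value $\st(\ns S(\hr{n}))$ equals the fixed real number $L=\int_a^bf(x)\d x$, whose definition makes no reference to $\hr{n}$. Hence any two unlimited hyperintegers produce the same standard part, which completes the argument.
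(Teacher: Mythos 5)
The paper offers no proof of Proposition~\ref{prop:integral} at all: it is stated as a recalled fact of non-standard analysis, with the cited textbooks standing in for an argument, so there is no in-paper proof to compare yours against; judged on its own, your proof is correct and is essentially the canonical one. The identification of the hyperfinite sum with $\ns S(\nsclass{n_i})=\nsclass{S_{n_i}}$ is a direct unfolding of the paper's definition of hyperfinite summation (unproblematic here since $f$ is total on~$\R$, so no default values intervene); your transfer argument is the standard non-standard characterization of $\lim_n S_n=L$, and it moreover shows that $\ns S(\nsclass{n_i})$ is limited, so the standard part on the right of~\eqref{eq:integral} exists --- a point worth stating explicitly; and your detour through the Lebesgue/Riemann criterion is exactly the right way to handle the finitely-discontinuous case, since, as you note, the equispaced grids need not hit the discontinuities, so splitting the interval there does not work. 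Two small remarks. First, the proposition as literally stated admits unbounded~$f$ (finitely many discontinuities does not imply boundedness), in which case neither side of~\eqref{eq:integral} need make sense; your insertion of ``bounded'' before invoking Riemann integrability is the correct reading and silently repairs an imprecision of the statement rather than opening a gap in your proof. Second, the paper's defining display for the integral weights the samples by $\frac1n$ rather than $\frac{b-a}{n}$; you follow this convention consistently, but the classical theorem you invoke (equispaced Riemann sums of a Riemann-integrable function converge to the integral) yields, in the usual normalization, $\lim_n S_n=\frac{1}{b-a}\int_a^b f(x)\d x$. A change of variables $x=a+t(b-a)$, or simply reading the paper's display as its \emph{definition} of the left-hand side, reconciles the two, and the existence of $\lim_n S_n$ --- which is all your transfer step actually requires --- holds either way.
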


\noindent
The notion defined above corresponds to the Riemann integral. More refined notions
(such as the Lebesgue integral) can also be adapted to the non-standard setting.

\subsection{Internal domains}
\label{sec:internal-dom}
In this section, we introduce the notion of internal domain, which we
use to define a non-standard denotational semantics for process networks.
Given a totally ordered set~$T$ and a set~$R$, we write $\sdom TR$ for the set
of partial functions $s:T\to R$ defined on an initial segment of~$T$, called the
\emph{domain of definition} of~$s$. The elements of~$\sdom TR$ are called
\emph{streams}: the set~$T$ can be thought as \emph{time} and the elements
of~$R$ as the possible \emph{values} of a stream over time. Every such set can
be equipped with a partial order~$\sqsubseteq$ such that, given $r,s\in\sdom
TR$, we have $f\sqsubseteq g$ whenever the definition domain of~$r$ is included
in the definition domain of~$s$ and~$r$ and~$s$ coincide on the domain of
definition of~$r$.

\begin{proposition}
  The poset $(\sdom TR,\sqsubseteq)$ is a cpo with smallest element~$\bot$ being
  the function nowhere defined.
\end{proposition}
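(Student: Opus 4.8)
The plan is to check the two defining properties of a cpo separately: that the nowhere-defined function is the least element, and that every directed family admits a supremum (antisymmetry, reflexivity and transitivity of~$\sqsubseteq$ being routine and already implicit in the preceding discussion). The first property is immediate: the empty set is vacuously an initial segment of~$T$, so~$\bot$ genuinely belongs to~$\sdom TR$; and for any stream~$s$, the domain of definition of~$\bot$ is empty, hence included in that of~$s$, and the two agree vacuously on it. Thus~$\bot\sqsubseteq s$ for every~$s$, so~$\bot$ is the least element.

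For directed completeness, let~$X\subseteq\sdom TR$ be a directed subset and write~$\mathrm{dom}(s)$ for the domain of definition of each~$s\in X$. I would take as candidate supremum the ``union''~$\sigma$ of the family, defined on~$D=\bigcup_{s\in X}\mathrm{dom}(s)$ by~$\sigma(t)=s(t)$ for any~$s\in X$ with~$t\in\mathrm{dom}(s)$. First,~$D$ is an initial segment of~$T$, since each~$\mathrm{dom}(s)$ is downward closed and a union of downward closed sets is downward closed; hence~$\sigma$ is a legitimate element of~$\sdom TR$ as soon as it is well defined.

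The one genuinely non-trivial point --- and the place where directedness is essential --- is the well-definedness of~$\sigma$. If~$t\in\mathrm{dom}(s_1)\cap\mathrm{dom}(s_2)$ for two elements~$s_1,s_2\in X$, then directedness yields some~$s_3\in X$ with~$s_1\sqsubseteq s_3$ and~$s_2\sqsubseteq s_3$; since~$s_1$ and~$s_2$ both agree with~$s_3$ on their respective domains, we get~$s_1(t)=s_3(t)=s_2(t)$. Hence the value~$\sigma(t)$ does not depend on the chosen~$s$. For a non-directed~$X$ this step would fail, which is precisely why only directed suprema are required.

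It remains to check that~$\sigma=\bigvee X$. By construction~$\mathrm{dom}(s)\subseteq D$ and~$\sigma$ restricts to~$s$ on~$\mathrm{dom}(s)$ for each~$s\in X$, so~$s\sqsubseteq\sigma$ and~$\sigma$ is an upper bound. If~$u$ is any other upper bound, then~$\mathrm{dom}(s)\subseteq\mathrm{dom}(u)$ for all~$s$, whence~$D\subseteq\mathrm{dom}(u)$; and for~$t\in D$, picking~$s$ with~$t\in\mathrm{dom}(s)$ gives~$\sigma(t)=s(t)=u(t)$ since~$s\sqsubseteq u$. Thus~$\sigma$ agrees with~$u$ on its domain~$D$, so~$\sigma\sqsubseteq u$, and~$\sigma$ is the least upper bound. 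This establishes that~$(\sdom TR,\sqsubseteq)$ is directed-complete with least element~$\bot$, \ie a cpo.
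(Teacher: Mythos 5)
Your proof is correct and complete: the paper states this proposition without proof, treating it as routine, and your argument (take the union of the family, use directedness to make it well defined, observe that a union of initial segments is an initial segment, then verify least-upper-bound-ness) is exactly the standard verification that would be expected. One small observation you could add but need not: since initial segments of a totally ordered~$T$ are themselves totally ordered by inclusion, any directed subset of~$\sdom TR$ is in fact a chain --- two streams with an upper bound have nested domains and agree on the smaller one --- though your appeal to directedness alone suffices and is the more robust phrasing.
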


\begin{example}
  The Kahn domain described in Section~\ref{sec:net-models} is $\sdom\N R$.
\end{example}

\noindent
Every function~$f:R\to R$ lifts to a function~$\tilde{f}:\sdom TR\to\sdom TR$
such that, given~$s\in\sdom TR$, the domain of definition of~$\tilde{f}(s)$ is
the same as the domain of definition of~$s$ and the image of~$s$ is defined
by~$\tilde f(s)(t)=f\circ s(t)$. The function~$\tilde{f}$ is called the
\emph{lifting} of~$f$ to~$\sdom TR$. It is easy to show that every such lifting
is Scott-continuous.

In the following, we will be interested in modeling nets operating in a time
which varies continuously. We thus introduce the following domain in order to
model the data flowing through the wires:

\begin{definition}[Continuous-time domain]
  The \emph{continuous-time domain} is the complete partial order
  $\CT=\sdom{\R^+}\R$. The \emph{continuous-time domain of continuous functions}
  ~$\CCT$ is the subdomain of~$\CT$ whose elements are continuous partial
  functions~$\R^+\to\R$.
\end{definition}

\noindent
As explained in the introduction, the purpose of this paper is to explain how to
implement Scott-continuous functions over this domain using Kahn networks by
formalizing the following intuition using non-standard semantics: continuous
time can be considered as ``discrete'' where the duration between two instants
is infinitesimal. A natural candidate for this would be the
domain~$\sdom{\ns\N}{\ns\R}$. Namely, in the view of
Proposition~\ref{prop:integral}, we would like to relate a stream
$s\in\sdom{\R^+}\R$ with the stream~$\overline{s}\in\sdom{\ns\N}{\ns\R}$ defined
on~$\hr{t}\in\ns\N$ by~$\overline{s}(\hr{t})=\ns{s}(\hr{t}\delta)$, from some
infinitesimal~$\delta\in\ns\R$. It turns out that the fixpoints computed
in~$\sdom{\ns\N}{\ns\R}$ are not satisfactory.
\begin{wrapfigure}{r}{3.1cm}
  \vspace{-15ex}
  \parbox{0pt}{\begin{equation}\label{eq:net_cst}\end{equation}}
  \begin{tikzpicture}[baseline=(current bounding box.center),xscale=0.50,yscale=0.50,scale=0.80]
\useasboundingbox (-0.5,-0.5) rectangle (5.5,1.5);
\draw[] (2.00,1.00) -- (1.95,1.01) -- (1.91,1.01) -- (1.86,1.02) -- (1.81,1.02) -- (1.76,1.03) -- (1.72,1.03) -- (1.67,1.04) -- (1.62,1.04) -- (1.57,1.04) -- (1.52,1.04) -- (1.47,1.04) -- (1.42,1.04) -- (1.37,1.04) -- (1.32,1.04) -- (1.27,1.04) -- (1.22,1.03) -- (1.16,1.03) -- (1.11,1.02) -- (1.06,1.01) -- (1.00,1.00);
\draw[] (1.00,1.00) -- (0.94,0.99) -- (0.87,0.97) -- (0.81,0.95) -- (0.75,0.93) -- (0.68,0.91) -- (0.62,0.89) -- (0.56,0.86) -- (0.50,0.84) -- (0.44,0.81) -- (0.38,0.79) -- (0.33,0.76) -- (0.28,0.73) -- (0.23,0.70) -- (0.18,0.67) -- (0.14,0.64) -- (0.10,0.61) -- (0.07,0.58) -- (0.04,0.56) -- (0.02,0.53) -- (0.00,0.50);
\draw[] (0.00,0.50) -- (-0.02,0.45) -- (-0.02,0.40) -- (-0.00,0.36) -- (0.03,0.32) -- (0.08,0.28) -- (0.14,0.24) -- (0.22,0.21) -- (0.31,0.18) -- (0.41,0.15) -- (0.52,0.12) -- (0.64,0.10) -- (0.77,0.08) -- (0.90,0.06) -- (1.05,0.04) -- (1.20,0.03) -- (1.35,0.02) -- (1.51,0.01) -- (1.67,0.00) -- (1.84,0.00) -- (2.00,0.00);
\draw[] (2.00,0.00) -- (2.16,0.00) -- (2.33,0.00) -- (2.49,0.01) -- (2.65,0.02) -- (2.80,0.03) -- (2.95,0.04) -- (3.10,0.06) -- (3.23,0.08) -- (3.36,0.10) -- (3.48,0.12) -- (3.59,0.15) -- (3.69,0.18) -- (3.78,0.21) -- (3.86,0.24) -- (3.92,0.28) -- (3.97,0.32) -- (4.00,0.36) -- (4.02,0.40) -- (4.02,0.45) -- (4.00,0.50);
\draw[] (4.00,0.50) -- (3.98,0.53) -- (3.96,0.56) -- (3.93,0.58) -- (3.90,0.61) -- (3.86,0.64) -- (3.82,0.67) -- (3.77,0.70) -- (3.72,0.73) -- (3.67,0.76) -- (3.62,0.79) -- (3.56,0.81) -- (3.50,0.84) -- (3.44,0.86) -- (3.38,0.89) -- (3.32,0.91) -- (3.25,0.93) -- (3.19,0.95) -- (3.13,0.97) -- (3.06,0.99) -- (3.00,1.00);
\draw[] (3.00,1.00) -- (2.94,1.01) -- (2.89,1.02) -- (2.84,1.03) -- (2.78,1.03) -- (2.73,1.04) -- (2.68,1.04) -- (2.63,1.04) -- (2.58,1.04) -- (2.53,1.04) -- (2.48,1.04) -- (2.43,1.04) -- (2.38,1.04) -- (2.33,1.04) -- (2.28,1.03) -- (2.24,1.03) -- (2.19,1.02) -- (2.14,1.02) -- (2.09,1.01) -- (2.05,1.01) -- (2.00,1.00);
\draw[,,dotted] (5.00,1.00) -- (3.10,1.00);
\filldraw[,fill=white] (1.50,0.50) rectangle (2.50,1.50);
\draw (2.00,1.00) node{$\iota$};
\draw (3.00,1.00) node{$\bullet$};
\draw (5.00,1.00) node{$\bullet$};
\end{tikzpicture} (\theequation)
  \vspace{-5ex}
\end{wrapfigure}
For instance, consider the net on the right such that the
interpretation of the operator~$\iota$ is the
function~$\iota:\sdom{\ns\N}{\ns\R}\to\sdom{\ns\N}{\ns\R}$ such that the image
of a stream~$r$ is the stream~$s$ defined by~$s(0)=0$ and for any non-null
hyperinteger~$\hr{t}$, $s(\hr{t})=r(\hr{t}-1)$. We expect the interpretation of
this net to be the constant function equal to~$0$. However, this is not the case:
the semantics~$s$ of this net is given by the
fixpoint~$s=\fix\iota=\bigvee_{k\in\N}\iota^k(\bot)$ of~$\iota$. Given~$k\in\N$,
the domain of definition of the stream~$\iota^k(\bot)$ is the
set~$\setof{\hr{p}\in\ns\N}{0\leq\hr{p}<k}$. Therefore, given an
unlimited~$\hr{n}\in\ns\N$, $\iota^k(\bot)(\hr{n})$ is undefined for
every~$k\in\N$ and thus $\fix\iota(\hr{n})$ is undefined. Intuitively, the
induction on~$k\in\N$ defining the smallest fixpoint is not powerful enough to
reach all elements of~$\ns\N$. The cpo~$\sdom{\ns\N}{\ns\R}$ is thus not the
appropriate domain, however we explain below that internal domains are a more
suitable notion, because they support an induction principle on~$\ns\N$.

\begin{definition}[Internal cpo]
  An \emph{internal cpo} $(D,\leq)$ in a non-standard model consists of an
  internal set $D=\nsclass{D_i}$ and an internal
  relation~$\leq=\nsclass{\leq_i}$ such that for every integer~$i$,
  $(D_i,\leq_i)$ is a cpo. Similarly, an \emph{internal Scott-continuous
    function}~$f:D\to E$ between two internal cpo~$D=\nsclass{D_i}$
  and~$E=\nsclass{E_i}$ consists of an internal function~$\nsclass{f_i:D_i\to
    E_i}$ such that all the~$f_i$ are continuous. We write~$\ICpo$ for the
  category of internal cpo and internal Scott-continuous functions.
\end{definition}

\begin{remark}
  Notice that such an internal cpo~$(D,\leq)$ is not necessarily a cpo: only
  internal directed subsets are required to have a supremum. For instance
  suppose fixed an unlimited hyperinteger~$\hr{n}\in\ns\N$. The
  set~$D=\setof{\hr{k}\in\ns\N}{\hr{k}\leq\hr{n}}$ equipped with the usual total
  order is an internal cpo, but not a cpo because the (non-internal)
  subset~$\N\subseteq D$ is directed and does not have a supremum.
\end{remark}

\begin{proposition}
  \label{prop:internal-fix}
  Any internal Scott-continuous function~$f:D\to D$, where~$D$ is an internal
  cpo, admits a least fixpoint~$\fix{f}$ which
  satisfies~$\fix{f}=\bigvee\setof{f^{\hr{n}}(\bot)}{\hr{n}\in\ns\N}$. Here,
  if~$f=\nsclass{f_i}$ and~$\hr{n}=\nsclass{n_i}$, $f^{\hr{n}}$ is defined
  as~$\nsclass{f_i^{n_i}}$.
\end{proposition}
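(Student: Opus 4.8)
The plan is to reduce the statement to the ordinary Kleene fixpoint theorem applied in each coordinate, and then to transport the resulting facts through the ultrapower by Łoś's theorem. Writing $f=\nsclass{f_i}$ with each $f_i\colon D_i\to D_i$ Scott-continuous on the cpo $(D_i,\leq_i)$, the first step is to produce the candidate $\fix{f}=\nsclass{\fix{f_i}}$, where $\fix{f_i}=\bigvee_{n\in\N}f_i^{n}(\bot_i)$ is the least fixpoint supplied by the Kleene theorem in coordinate $i$. Since the bottom of $D$ is $\bot=\nsclass{\bot_i}$, and since internal function application and the internal order $\leq=\nsclass{\leq_i}$ are both computed coordinatewise, the fixpoint equation is immediate: $f(\fix{f})=\nsclass{f_i(\fix{f_i})}=\nsclass{\fix{f_i}}=\fix{f}$.

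Next I would check that $\fix{f}$ is least among the elements of $D$ (which are all internal by construction). Given any $x=\nsclass{x_i}\in D$ with $f(x)=x$, the equality $f(x)=x$ unwinds to $\setof{i\in\N}{f_i(x_i)=x_i}\in\F$; on this large set $x_i$ is a fixpoint of $f_i$, so minimality of the Kleene fixpoint gives $\fix{f_i}\leq_i x_i$ there. Hence $\setof{i\in\N}{\fix{f_i}\leq_i x_i}$ contains a large set and so belongs to $\F$, which is exactly the assertion $\fix{f}\leq x$.

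There remains the characterisation $\fix{f}=\bigvee\setof{f^{n}(\bot)}{n\in\ns\N}$, and this is the delicate step, since it is precisely here that iterating over $\ns\N$ rather than $\N$ is essential (the $\iota$ example above shows that a supremum over $\N$ alone fails to reach the fixpoint). For $n=\nsclass{n_i}\in\ns\N$ the definition gives $f^{n}(\bot)=\nsclass{f_i^{n_i}(\bot_i)}$, so I would first verify that the family $S=\setof{f^{n}(\bot)}{n\in\ns\N}$ is \emph{internal}, being represented coordinatewise by the chains $S_i=\setof{f_i^{n}(\bot_i)}{n\in\N}$; indeed $S=\nsclass{S_i}$, the two inclusions following by choosing, for a given hyperinteger index, a sequence of ordinary iteration counts representing it. Each $S_i$ is a directed subset of the cpo $D_i$ with $\bigvee S_i=\fix{f_i}$, so by the defining property of an internal cpo the internal directed set $S$ has a supremum, computed coordinatewise, namely $\bigvee S=\nsclass{\bigvee S_i}=\nsclass{\fix{f_i}}=\fix{f}$.

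The main obstacle is conceptual rather than computational: one must ensure that the induction defining the Kleene fixpoint transfers so that its index ranges over all of $\ns\N$, and that the resulting hyper-indexed chain genuinely forms an internal directed set, since only internal directed subsets are guaranteed a supremum in an internal cpo. Once $S$ is identified with $\nsclass{S_i}$, the fact that internal suprema are taken coordinatewise does the remaining work and collapses all three claims to the single coordinatewise Kleene theorem.
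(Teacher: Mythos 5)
Your proof is correct, and since the paper itself states Proposition~\ref{prop:internal-fix} without giving a proof, your argument supplies exactly the intended one: reduce to the Kleene fixpoint theorem in each coordinate and transport through the ultrapower via Łoś's theorem, the one delicate step being that $\setof{f^{\hr{n}}(\bot)}{\hr{n}\in\ns\N}$ is \emph{internal}, represented by the Kleene chains $S_i=\setof{f_i^n(\bot_i)}{n\in\N}$, so that this internal directed set has a supremum in the internal cpo~$D$, computed coordinatewise as $\nsclass{\fix{f_i}}$. You also correctly identify why the supremum must be indexed over $\ns\N$ rather than $\N$ (as the $\iota$ example in the paper shows), and your coordinatewise verifications of the fixpoint equation, of leastness via the ultrafilter, and of the two inclusions establishing $S=\nsclass{S_i}$ are all sound.
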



\noindent
The axioms of fixpoint categories can be formulated in first-order logic. Using
the transfer principle (Proposition~\ref{prop:transfer}), it can be shown that
the fact that~$\Cpo$ is a fixpoint category implies that

\begin{proposition}
  \label{prop:icpo-fix-point-cat}
  The category~$\ICpo$ is a fixpoint category.
\end{proposition}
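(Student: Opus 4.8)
The plan is to realize $\ICpo$ as the non-standard extension of $\Cpo$ and to obtain its fixpoint structure by transferring, componentwise, the one already available on $\Cpo$. Concretely, I would fix an ambient superstructure over $\R$ in which the cpos whose carrier is a subset of $\R$, the Scott-continuous functions between them, and the operations witnessing the fixpoint structure (composition, identities, the cartesian product on objects and on morphisms, the projections, pairing and diagonal, and the trace operator $\Tr{A,B}X$ of \eqref{eq:fix-trace}) all appear as elements. By the ultraproduct construction of the preceding remark, the non-standard extension of this structure has as objects exactly the internal cpos $\nsclass{D_i}$ and as morphisms exactly the internal Scott-continuous functions $\nsclass{f_i}$; that is, it is precisely $\ICpo$. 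It therefore suffices to define each piece of structure componentwise and then check that the first-order axioms transfer.

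First I would fix the cartesian structure: on objects $\nsclass{D_i}\times\nsclass{E_i}=\nsclass{D_i\times E_i}$, which is again an internal cpo, with terminal object the constant one-point cpo, and with projections, pairing and diagonal all taken componentwise from $\Cpo$. Next comes the trace: given an internal Scott-continuous function $f=\nsclass{f_i}:D\times X\to E\times X$, I set $\Tr{D,E}X(f)=\nsclass{\Tr{D_i,E_i}{X_i}(f_i)}$, using on each component the trace of $\Cpo$ defined by \eqref{eq:fix-trace}. Since each $\Tr{D_i,E_i}{X_i}(f_i)$ was shown to be Scott-continuous, this sequence does define an internal Scott-continuous function, so the trace stays inside $\ICpo$. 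The point to emphasize is that, although an internal cpo need not be a genuine cpo, Proposition~\ref{prop:internal-fix} guarantees that the componentwise least fixpoints occurring in \eqref{eq:fix-trace} assemble into the internal least fixpoint; thus the naive componentwise reading of the trace formula and its internal reading over $\ICpo$ coincide.

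With the structure in place, I would then verify the axioms by transfer. Associativity and unitality of composition, bifunctoriality and symmetry of the product, the equational form of the universal property of products (expressed through projections, pairing and the naturality of the diagonal and terminal maps), and the three trace axioms \emph{vanishing}, \emph{superposing} and \emph{yanking} together with the naturality of $\Tr{A,B}X$, are all first-order -- indeed universally quantified equational -- sentences in the language of a category equipped with these operations. Each holds in $\Cpo$, which is a fixpoint category. Because every operation on $\ICpo$ is the internal image of the corresponding operation on $\Cpo$, each such equation holds for almost all indices $i$, hence holds in $\ICpo$ by the transfer principle (Proposition~\ref{prop:transfer}).

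The main obstacle is not any single calculation but making the transfer step legitimate despite the higher-type, categorical nature of the statements: the transfer principle is stated for formulas over $\R$, so the real work is to set up the ambient superstructure so that ``internal cpo'' and ``internal Scott-continuous function'' literally are the non-standard extensions of ``cpo over $\R$'' and ``Scott-continuous function'', and so that the trace operator is an element that extends. The delicate semantic point inside this is precisely the mismatch flagged above -- internal cpos are not cpos -- which is why the argument must route the trace through Proposition~\ref{prop:internal-fix} rather than through ordinary least fixpoints; once that identification is made, all the fixpoint axioms become the same equations on both sides and transfer applies verbatim.
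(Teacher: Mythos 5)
Your proposal is correct and follows essentially the same route as the paper: the paper's argument is precisely that the fixpoint-category axioms are first-order statements, so the structure of $\Cpo$ (cartesian products and the trace of \eqref{eq:fix-trace}, taken componentwise on representatives) transfers to $\ICpo$ via Proposition~\ref{prop:transfer}. Your elaboration --- making the componentwise definitions explicit, noting that Proposition~\ref{prop:internal-fix} reconciles the internal least fixpoint with the componentwise ones, and flagging the superstructure setup needed to make transfer legitimate --- only fills in details the paper leaves implicit.
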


In the category~$\ICpo$, we will be particularly interested in the following
domain:
\begin{definition}[Infinitesimal-time domain]
  The \emph{infinitesimal-time domain} is the internal complete partial
  order~$\IT=\ns{(\sdom\N\R)}$.
\end{definition}

\noindent
As explained in the remark in the end of Section~\ref{sec:hyperreals}, the
elements of~$\IT=\ns{(\sdom\N\R)}$ are the internal partial functions
from~$\ns\N$ to~$\ns\R$. The order~$\sqsubseteq$ on this domain is such that for
any~$r,s\in\IT$, we have $r\sqsubseteq s$ whenever the domain of definition
of~$r$ is included in the domain of definition of~$s$ and~$r$ and~$s$ coincide
on the domain of definition of~$r$.

\subsection{Comparing continuous time and infinitesimal time}
\label{sec:ct-it}
In this section, we explain how the semantics of nets in~$\IT$ can ``simulate''
operations in~$\CT$. We now suppose fixed an infinitesimal $\delta$ called
\emph{sampling period}. We define a function $S:\CT\to\IT$, called
\emph{sampling}, which to every stream~$s\in\CT$ associates the stream
$S(s)=\hr{k}\mapsto\ns{s}(\hr{k}\delta)$, and a function~$T:\IT\to\CT$, called
\emph{standardization}, which to every stream~$s$
associates~$T(s)=x\mapsto\st{(s(\floor{\ns{x}/\delta}))}$,
where~$\floor-:\ns\R\to\ns\N$ denotes the floor function, and is defined on the
biggest initial segment of~$\R^+$ for which this definition makes sense. These
functions enable us to show that~$\CCT$ (the domain of \emph{continuous}
streams) is a retract of~$\IT$. We discuss afterward the possible extensions of
this result to elements of~$\CT$.

\begin{proposition}
  \label{prop:retract}
  The restriction of the composite~$T\circ S$ to~$\CCT$ is the identity.
\end{proposition}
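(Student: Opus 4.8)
The plan is to fix an arbitrary element $s\in\CCT$, with domain of definition an initial segment $D$ of $\R^+$, and to show that $T(S(s))$ has the very same domain $D$ and coincides with $s$ on it. First I would simply unfold the two definitions: for a real $x$,
\[
T(S(s))(x)\qeq\st\pa{S(s)(\floor{\ns x/\delta})}\qeq\st\pa{\ns s(\floor{\ns x/\delta}\cdot\delta)},
\]
so that everything reduces to controlling the hyperreal $\floor{\ns x/\delta}\cdot\delta$ and the value of $\ns s$ at that point. The key estimate is that this point is infinitely close to $x$: by definition of the floor we have $0\leq\ns x/\delta-\floor{\ns x/\delta}<1$, and multiplying by the positive infinitesimal $\delta$ yields $0\leq\ns x-\floor{\ns x/\delta}\cdot\delta<\delta$, whence $\floor{\ns x/\delta}\cdot\delta\approx\ns x=x$.

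Next I would invoke continuity. Since $s\in\CCT$ is continuous at $x$, the non-standard characterization of continuity (Proposition~\ref{prop:continuity}) gives $\ns s(\floor{\ns x/\delta}\cdot\delta)\approx s(x)$; because $s(x)$ is a standard real, this is exactly the statement $\st\pa{\ns s(\floor{\ns x/\delta}\cdot\delta)}=s(x)$, which establishes the equality of values wherever both functions are defined. It then remains to match the domains. For $x$ interior to $D$, the estimate above gives $\floor{\ns x/\delta}\cdot\delta\leq\ns x$ together with $\floor{\ns x/\delta}\cdot\delta\approx x$, so this point lies in $\ns D=\mathrm{dom}(\ns s)$; hence $S(s)(\floor{\ns x/\delta})$ is defined and, being infinitely close to the real $s(x)$, is limited, so its standard part exists and $x$ belongs to $\mathrm{dom}(T(S(s)))$. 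Conversely, if $x$ exceeds the right endpoint $a$ of $D$, then $x-a$ is a positive real dominating the infinitesimal $\ns x-\floor{\ns x/\delta}\cdot\delta$, so $\floor{\ns x/\delta}\cdot\delta>\ns a$ falls outside $\ns D$ and the expression is undefined. Taking the biggest initial segment on which $T(S(s))$ makes sense therefore recovers exactly $D$.

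I expect the domain matching at the single right endpoint to be the main obstacle: whether $x=a$ itself lies in $\mathrm{dom}(T(S(s)))$ depends on whether $\ns a/\delta$ happens to be a hyperinteger (so that $\floor{\ns a/\delta}\cdot\delta=\ns a$ rather than strictly below it) and on the behaviour of $\ns s$ on the infinitesimal neighbourhood just below $a$, where $s$ itself need not be defined. This is precisely the place where the hypothesis that $\CCT$ consists of \emph{continuous} streams, combined with the ``biggest initial segment'' clause in the definition of $T$, is used to pin the boundary down; by contrast, the interior value computation via the floor estimate and Proposition~\ref{prop:continuity} is entirely routine.
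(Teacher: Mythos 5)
Your core computation is exactly the paper's proof: the published argument consists precisely of your interior-value step --- for $x$ in the domain of $s$, continuity at $x$ together with Proposition~\ref{prop:continuity} gives $\ns{s}(k\delta)\approx s(x)$ for every $k\in\ns\N$ with $k\delta\approx x$, hence $T(S(s))(x)=\st\pa{\ns{s}\pa{\floor{\ns{x}/\delta}\,\delta}}=s(x)$ --- and it stops there, leaving your floor estimate $0\leq\ns{x}-\floor{\ns{x}/\delta}\,\delta<\delta$ and all domain bookkeeping implicit. Your interior argument (any $x$ in the initial segment $D$ has $\floor{\ns{x}/\delta}\,\delta\leq\ns{x}$, hence in $\ns{D}$ by transfer) and your exterior argument (for real $x$ beyond the right endpoint $a$, the gap $x-a$ dominates $\delta$, so $\floor{\ns{x}/\delta}\,\delta>\ns{a}$ falls outside $\ns{D}$) are both correct and already cover strictly more ground than the published proof, which is silent about domains altogether.

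Your suspicion about the right endpoint, however, is better founded than your optimism about it: the boundary cannot be ``pinned down'' by continuity plus the maximal-initial-segment clause. Take $D=[0,a)$ with $s\equiv 0$, and suppose $\ns{a}/\delta$ is not a hyperinteger (which happens for suitable standard $a$ once $\delta$ is fixed). Then $\floor{\ns{a}/\delta}\,\delta<\ns{a}$ lies in $\ns{D}$, so $S(s)\pa{\floor{\ns{a}/\delta}}=0$ is defined and limited, its standard part exists, and the ``biggest initial segment'' clause forces $a$ into the domain of $T(S(s))$: the composite returns the zero stream on the closed segment $[0,a]$, which strictly extends $s$. The same occurs for any bounded continuous $s$ on an open segment $[0,a)$. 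So your concluding claim that the clause ``recovers exactly $D$'' is the one over-optimistic step in the sketch: the proposition, read literally as $T(S(s))=s$, holds only for streams with closed domain of definition, and in general only up to this single boundary point. Since the paper's own proof establishes nothing more than the pointwise equality $T(S(s))(x)=s(x)$ on the domain of $s$, you have lost nothing relative to it; you have merely made explicit a boundary defect that the published argument glosses over.
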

\begin{proof}
  Suppose given a stream~$s\in\CCT$. For any~$x\in\R^+$, the fact that~$s$ is
  continuous at~$x$ implies, by Proposition~\ref{prop:continuity}, that for
  every~$\hr{k}\in\ns\N$ such that $\hr{k}\delta\approx x$, we have
  $S(s)(\hr{k}\delta)\approx s(x)$. From this we deduce that $T(S(s))(x)=s(x)$.
\end{proof}

\begin{remark}
  The function~$T\circ S$ is generally \emph{not} the identity on~$\CT$. For
  instance, suppose that~$\delta=1/\hr{n}$, where~$\hr{n}\in\ns\N$ is unlimited,
  and consider the stream~$s\in\CT$ whose value is~$0$ everywhere except
  at~$\sqrt{2}$ where~$s(\sqrt{2})=1$. Using the transfer principle, it is easy
  to show that for every~$\hr{k}\in\ns\N$, we have
  $\hr{k}/\hr{n}\neq\sqrt{2}$. From this we can deduce that~$T\circ S(s)$ is the
  constant stream equal to~$0$.
\end{remark}

\noindent
In order to make a more convincing case of the interest of the domain~$\IT$ as a
model of nets and study further its relationship with~$\CT$, we give below some
examples of nets interpreted in~$\IT$ which implement common constructions in
analysis, and relate them to the corresponding constructions in~$\CT$
through~$S$ and~$T$. For concision, we do not detail the easy verification that
the interpretations of operators are internal Scott-continuous functions.

As a first simple example, consider the net~\eqref{eq:net_cst}. From the
characterization of the fixpoint of internal Scott-continuous functions given by
Proposition~\ref{prop:internal-fix}, it is easy to check that its semantics~$s$
in the domain~$\IT$ is the constant function (defined everywhere) as expected:
if we write~$c_0:\R^+\to\R$ for the constant function equal to~$0$, we have
$s=S(c_0)$ and~$c_0=T(s)$.

\subsubsection{Differentiation}
The \emph{differentiation} is the following net where ``$\varepsilon$'' is
interpreted as the function which drops the first element of a stream (\ie
$\varepsilon(s)(\hr{k})=s(\hr{k}+1)$), ``$-$'' is interpreted as the pointwise
difference (\ie $(-(s,t))(\hr{k})=s(\hr{k})-t(\hr{k})$), and ``$/\delta$'' is
interpreted as the pointwise division by~$\delta$ (\ie
$(s/\delta)(\hr{k})=s(\hr{k})/\delta$).
\begin{equation}
  \label{eq:net-der}
  \begin{tikzpicture}[baseline=(current bounding box.center),xscale=0.50,yscale=0.50,scale=0.80]
\useasboundingbox (-0.5,-0.5) rectangle (13.5,2.5);
\draw[] (3.00,2.00) -- (2.95,2.01) -- (2.89,2.01) -- (2.84,2.02) -- (2.79,2.03) -- (2.74,2.03) -- (2.68,2.04) -- (2.63,2.04) -- (2.58,2.05) -- (2.53,2.05) -- (2.48,2.05) -- (2.43,2.06) -- (2.38,2.06) -- (2.33,2.05) -- (2.28,2.05) -- (2.23,2.05) -- (2.18,2.04) -- (2.14,2.03) -- (2.09,2.03) -- (2.04,2.01) -- (2.00,2.00);
\draw[] (2.00,2.00) -- (1.94,1.98) -- (1.88,1.95) -- (1.82,1.92) -- (1.77,1.88) -- (1.71,1.85) -- (1.66,1.80) -- (1.60,1.76) -- (1.55,1.71) -- (1.50,1.66) -- (1.45,1.61) -- (1.41,1.56) -- (1.36,1.50) -- (1.31,1.44) -- (1.27,1.38) -- (1.22,1.32) -- (1.18,1.26) -- (1.13,1.19) -- (1.09,1.13) -- (1.04,1.06) -- (1.00,1.00);
\draw[] (6.00,2.00) -- (5.90,2.00) -- (5.80,2.00) -- (5.70,2.00) -- (5.60,2.00) -- (5.50,2.00) -- (5.40,2.00) -- (5.30,2.00) -- (5.20,2.00) -- (5.10,2.00) -- (5.00,2.00) -- (4.90,2.00) -- (4.80,2.00) -- (4.70,2.00) -- (4.60,2.00) -- (4.50,2.00) -- (4.40,2.00) -- (4.30,2.00) -- (4.20,2.00) -- (4.10,2.00) -- (4.00,2.00);
\draw[] (4.00,2.00) -- (3.95,2.00) -- (3.90,2.00) -- (3.85,2.00) -- (3.80,2.00) -- (3.75,2.00) -- (3.70,2.00) -- (3.65,2.00) -- (3.60,2.00) -- (3.55,2.00) -- (3.50,2.00) -- (3.45,2.00) -- (3.40,2.00) -- (3.35,2.00) -- (3.30,2.00) -- (3.25,2.00) -- (3.20,2.00) -- (3.15,2.00) -- (3.10,2.00) -- (3.05,2.00) -- (3.00,2.00);
\draw[,,dotted] (0.00,1.00) -- (0.90,1.00);
\draw[] (1.00,0.90) -- (1.10,0.85) -- (1.20,0.79) -- (1.29,0.74) -- (1.39,0.68) -- (1.49,0.63) -- (1.59,0.58) -- (1.69,0.53) -- (1.79,0.48) -- (1.89,0.43) -- (1.98,0.38) -- (2.08,0.33) -- (2.18,0.29) -- (2.28,0.24) -- (2.39,0.20) -- (2.49,0.16) -- (2.59,0.13) -- (2.69,0.09) -- (2.79,0.06) -- (2.90,0.03) -- (3.00,0.00);
\draw[] (3.00,0.00) -- (3.14,-0.03) -- (3.29,-0.06) -- (3.43,-0.08) -- (3.58,-0.10) -- (3.72,-0.12) -- (3.87,-0.13) -- (4.02,-0.13) -- (4.17,-0.14) -- (4.32,-0.14) -- (4.47,-0.13) -- (4.62,-0.13) -- (4.77,-0.12) -- (4.93,-0.11) -- (5.08,-0.10) -- (5.23,-0.08) -- (5.39,-0.07) -- (5.54,-0.05) -- (5.69,-0.04) -- (5.85,-0.02) -- (6.00,0.00);
\draw[] (10.00,1.00) -- (9.90,1.00) -- (9.80,1.00) -- (9.70,1.00) -- (9.60,1.00) -- (9.50,1.00) -- (9.40,1.00) -- (9.30,1.00) -- (9.20,1.00) -- (9.10,1.00) -- (9.00,1.00) -- (8.90,1.00) -- (8.80,1.00) -- (8.70,1.00) -- (8.60,1.00) -- (8.50,1.00) -- (8.40,1.00) -- (8.30,1.00) -- (8.20,1.00) -- (8.10,1.00) -- (8.00,1.00);
\draw[] (8.00,1.00) -- (7.90,1.00) -- (7.80,1.00) -- (7.70,1.00) -- (7.60,1.00) -- (7.50,1.00) -- (7.40,1.00) -- (7.30,1.00) -- (7.20,1.00) -- (7.10,1.00) -- (7.00,1.00) -- (6.90,1.00) -- (6.80,1.00) -- (6.70,1.00) -- (6.60,1.00) -- (6.50,1.00) -- (6.40,1.00) -- (6.30,1.00) -- (6.20,1.00) -- (6.10,1.00) -- (6.00,1.00);
\draw[,,] (12.00,1.00) -- (10.00,1.00);
\draw[,,dotted] (13.00,1.00) -- (12.10,1.00);
\filldraw[,fill=white] (2.50,1.50) rectangle (3.50,2.50);
\filldraw[,fill=white] (5.50,-0.50) rectangle (6.50,2.50);
\filldraw[,fill=white] (9.00,0.50) rectangle (11.00,1.50);
\draw (3.00,2.00) node{$\varepsilon$};
\draw (4.50,2.00) node{$\bullet$};
\draw (0.00,1.00) node{$\bullet$};
\draw (1.00,1.00) node{$\bullet$};
\draw (6.00,1.00) node{$-$};
\draw (8.00,1.00) node{$\bullet$};
\draw (10.00,1.00) node{$/\delta$};
\draw (12.00,1.00) node{$\bullet$};
\draw (13.00,1.00) node{$\bullet$};
\end{tikzpicture}
\end{equation}

\noindent
We write~$\varphi:\IT\to\IT$ for the semantics of the net. By
Proposition~\ref{prop:differentiation}, we have immediately:
\begin{proposition}
  For any continuously differentiable function~$s:\R^+\to\R$ in $\CCT$,
  $T(\varphi(S(s))) = s'$.
\end{proposition}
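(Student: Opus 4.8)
The plan is to unfold the composite $T\circ\varphi\circ S$ pointwise and reduce its value at a fixed $x\in\R^+$ to a single non-standard difference quotient, which Proposition~\ref{prop:differentiation} then identifies with $s'(x)$. First I would compute $\varphi(S(s))$ directly from the operator semantics: since $\varepsilon$ shifts a stream forward, $-$ is the pointwise difference and $/\delta$ the pointwise division, the net \eqref{eq:net-der} interprets to the forward difference quotient
\[
\varphi(S(s))(\hr{k})=\frac{\varepsilon(S(s))(\hr{k})-S(s)(\hr{k})}{\delta}=\frac{\ns{s}((\hr{k}+1)\delta)-\ns{s}(\hr{k}\delta)}{\delta},
\]
where the last equality uses the definition $S(s)(\hr{k})=\ns{s}(\hr{k}\delta)$ of sampling.

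Next I would apply the standardization $T$. Fixing $x\in\R^+$ and setting $\hr{k}=\floor{\ns{x}/\delta}$, the definition of $T$ gives $T(\varphi(S(s)))(x)=\st\pa{(\ns{s}((\hr{k}+1)\delta)-\ns{s}(\hr{k}\delta))/\delta}$. The defining inequality of the floor yields $\hr{k}\delta\le x<(\hr{k}+1)\delta$, so I may write $\hr{k}\delta=x-\eta$ with $0\le\eta<\delta$; in particular $\eta$ is an infinitesimal. Substituting $\hr{k}\delta=x-\eta$ and $(\hr{k}+1)\delta=x+(\delta-\eta)$, the quotient becomes
\[
\frac{\ns{s}(x+(\delta-\eta))-\ns{s}(x+(-\eta))}{(\delta-\eta)-(-\eta)},
\]
whose two offsets $\delta-\eta$ and $-\eta$ are infinitesimals whose difference is exactly $\delta\neq0$, hence distinct.

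Finally I would invoke Proposition~\ref{prop:differentiation}. When $\eta\neq0$ both offsets $\delta-\eta>0$ and $-\eta$ are non-null (and distinct), so the two-infinitesimal formula of that proposition — available precisely because $s$ is continuously differentiable — shows that the standard part of the displayed quotient equals $s'(x)$. When $\eta=0$ the sample lands exactly on $x$, the quotient collapses to $(\ns{s}(x+\delta)-s(x))/\delta$, and the first part of Proposition~\ref{prop:differentiation} applies since $\delta$ is a non-null infinitesimal. In either case $T(\varphi(S(s)))(x)=s'(x)$, and as $x$ was arbitrary this is the claimed identity. The one real subtlety, and the step I expect to be the crux, is the grid misalignment: the sampled abscissa $\hr{k}\delta$ is only infinitesimally close to $x$ rather than equal to it, so the naive single-infinitesimal derivative criterion does not apply directly; it is exactly the two-infinitesimal refinement in Proposition~\ref{prop:differentiation} that absorbs the offset $\eta$, with the degenerate case $\eta=0$ dispatched separately because that refinement requires non-null infinitesimals.
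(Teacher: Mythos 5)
Your proposal is correct and follows essentially the same route as the paper: unfold $\varphi(S(s))$ into the forward difference quotient $(\ns{s}((\hr{k}+1)\delta)-\ns{s}(\hr{k}\delta))/\delta$, apply $T$, and invoke the two-infinitesimal formula of Proposition~\ref{prop:differentiation} (valid since $s$ is continuously differentiable) to identify the standard part with $s'(x)$. In fact you are more careful than the paper's two-line proof, which silently absorbs the grid misalignment $\eta=x-\hr{k}\delta$ into the ``$\approx s'(\st(\hr{k}\delta))$'' step, whereas you make the offsets $\delta-\eta$ and $-\eta$ explicit and dispatch the degenerate case $\eta=0$ separately.
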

\begin{proof}
  Given~$\hr{k}\in\nsint$ such that $\st(\hr{k}\delta)\in\Rplus$,
  $\varphi(S(s))(\hr{k})={S(s)(\hr{k}+1)- S(s)(\hr{k}) \over \delta}\approx$
  $s'(\st(\hr{k}\delta))$. The second step is proved by
  Proposition~\ref{prop:differentiation}. Therefore, $T(\varphi(S(s)))=s'$.
\end{proof}

\subsubsection{Integration}
The \emph{integration} is the following net where ``$\times\delta$'' is the
pointwise multiplication of a stream by~$\delta$, ``$+$'' is the pointwise
addition of streams, and~$\iota$ prepends~$0$ to a stream (see
Section~\ref{sec:internal-dom}).
\vspace{-4ex}
\begin{equation}
  \label{eq:net-int}
  \begin{tikzpicture}[baseline=(current bounding box.center),xscale=0.50,yscale=0.50,scale=0.80]
\useasboundingbox (-0.5,-0.5) rectangle (15.5,2.5);
\draw[,,dotted] (0.00,2.00) -- (0.90,2.00);
\draw[,,] (1.00,2.00) -- (3.00,2.00);
\draw[] (7.00,2.00) -- (6.90,2.00) -- (6.80,2.00) -- (6.70,2.00) -- (6.60,2.00) -- (6.50,2.00) -- (6.40,2.00) -- (6.30,2.00) -- (6.20,2.00) -- (6.10,2.00) -- (6.00,2.00) -- (5.90,2.00) -- (5.80,2.00) -- (5.70,2.00) -- (5.60,2.00) -- (5.50,2.00) -- (5.40,2.00) -- (5.30,2.00) -- (5.20,2.00) -- (5.10,2.00) -- (5.00,2.00);
\draw[] (5.00,2.00) -- (4.90,2.00) -- (4.80,2.00) -- (4.70,2.00) -- (4.60,2.00) -- (4.50,2.00) -- (4.40,2.00) -- (4.30,2.00) -- (4.20,2.00) -- (4.10,2.00) -- (4.00,2.00) -- (3.90,2.00) -- (3.80,2.00) -- (3.70,2.00) -- (3.60,2.00) -- (3.50,2.00) -- (3.40,2.00) -- (3.30,2.00) -- (3.20,2.00) -- (3.10,2.00) -- (3.00,2.00);
\draw[] (12.00,0.00) -- (12.05,0.01) -- (12.09,0.01) -- (12.14,0.02) -- (12.18,0.02) -- (12.23,0.03) -- (12.27,0.03) -- (12.32,0.03) -- (12.37,0.04) -- (12.42,0.04) -- (12.47,0.04) -- (12.51,0.04) -- (12.56,0.04) -- (12.62,0.04) -- (12.67,0.04) -- (12.72,0.04) -- (12.77,0.03) -- (12.83,0.03) -- (12.88,0.02) -- (12.94,0.01) -- (13.00,0.00);
\draw[] (13.00,0.00) -- (13.07,-0.01) -- (13.14,-0.03) -- (13.20,-0.04) -- (13.27,-0.06) -- (13.34,-0.08) -- (13.41,-0.10) -- (13.48,-0.13) -- (13.54,-0.15) -- (13.60,-0.18) -- (13.66,-0.20) -- (13.72,-0.23) -- (13.77,-0.26) -- (13.82,-0.29) -- (13.87,-0.32) -- (13.91,-0.35) -- (13.94,-0.38) -- (13.96,-0.41) -- (13.98,-0.44) -- (14.00,-0.47) -- (14.00,-0.50);
\draw[] (14.00,-0.50) -- (14.00,-0.53) -- (13.98,-0.56) -- (13.97,-0.59) -- (13.94,-0.62) -- (13.91,-0.65) -- (13.87,-0.68) -- (13.83,-0.71) -- (13.78,-0.74) -- (13.73,-0.76) -- (13.67,-0.79) -- (13.61,-0.82) -- (13.55,-0.84) -- (13.48,-0.86) -- (13.41,-0.89) -- (13.35,-0.91) -- (13.28,-0.93) -- (13.21,-0.95) -- (13.14,-0.97) -- (13.07,-0.98) -- (13.00,-1.00);
\draw[] (13.00,-1.00) -- (12.88,-1.02) -- (12.77,-1.04) -- (12.66,-1.06) -- (12.55,-1.07) -- (12.45,-1.08) -- (12.35,-1.09) -- (12.25,-1.09) -- (12.16,-1.09) -- (12.07,-1.09) -- (11.97,-1.09) -- (11.88,-1.08) -- (11.79,-1.07) -- (11.70,-1.07) -- (11.60,-1.06) -- (11.51,-1.05) -- (11.41,-1.04) -- (11.31,-1.03) -- (11.21,-1.02) -- (11.11,-1.01) -- (11.00,-1.00);
\draw[] (11.00,-1.00) -- (10.59,-0.98) -- (10.14,-0.97) -- (9.66,-0.97) -- (9.16,-0.98) -- (8.63,-0.99) -- (8.10,-1.01) -- (7.57,-1.03) -- (7.04,-1.05) -- (6.52,-1.07) -- (6.03,-1.08) -- (5.57,-1.08) -- (5.14,-1.08) -- (4.76,-1.07) -- (4.43,-1.04) -- (4.16,-1.00) -- (3.96,-0.94) -- (3.84,-0.86) -- (3.80,-0.77) -- (3.85,-0.65) -- (4.00,-0.50);
\draw[] (4.00,-0.50) -- (4.03,-0.47) -- (4.07,-0.45) -- (4.11,-0.42) -- (4.15,-0.39) -- (4.19,-0.37) -- (4.24,-0.34) -- (4.29,-0.31) -- (4.33,-0.28) -- (4.38,-0.26) -- (4.44,-0.23) -- (4.49,-0.20) -- (4.54,-0.18) -- (4.60,-0.15) -- (4.65,-0.13) -- (4.71,-0.10) -- (4.77,-0.08) -- (4.83,-0.06) -- (4.88,-0.04) -- (4.94,-0.02) -- (5.00,0.00);
\draw[] (5.00,0.00) -- (5.10,0.03) -- (5.21,0.05) -- (5.31,0.07) -- (5.41,0.09) -- (5.51,0.10) -- (5.61,0.11) -- (5.71,0.12) -- (5.81,0.12) -- (5.91,0.12) -- (6.01,0.12) -- (6.11,0.11) -- (6.21,0.11) -- (6.31,0.10) -- (6.41,0.09) -- (6.51,0.07) -- (6.61,0.06) -- (6.71,0.05) -- (6.80,0.03) -- (6.90,0.02) -- (7.00,0.00);
\draw[,,] (9.00,1.00) -- (7.00,1.00);
\draw[,,dotted] (15.00,1.00) -- (9.05,1.00);
\draw[] (12.00,0.00) -- (11.95,-0.00) -- (11.90,-0.01) -- (11.85,-0.01) -- (11.80,-0.01) -- (11.75,-0.02) -- (11.70,-0.02) -- (11.64,-0.02) -- (11.59,-0.02) -- (11.54,-0.03) -- (11.49,-0.03) -- (11.44,-0.03) -- (11.39,-0.03) -- (11.34,-0.03) -- (11.29,-0.03) -- (11.24,-0.02) -- (11.19,-0.02) -- (11.15,-0.02) -- (11.10,-0.01) -- (11.05,-0.01) -- (11.00,0.00);
\draw[] (11.00,0.00) -- (10.90,0.02) -- (10.80,0.04) -- (10.70,0.07) -- (10.60,0.10) -- (10.50,0.14) -- (10.40,0.18) -- (10.30,0.22) -- (10.21,0.27) -- (10.11,0.32) -- (10.02,0.37) -- (9.92,0.43) -- (9.83,0.49) -- (9.74,0.55) -- (9.65,0.61) -- (9.56,0.67) -- (9.46,0.73) -- (9.37,0.80) -- (9.28,0.87) -- (9.19,0.93) -- (9.10,1.00);
\filldraw[,fill=white] (2.00,1.50) rectangle (4.00,2.50);
\filldraw[,fill=white] (6.50,-0.50) rectangle (7.50,2.50);
\filldraw[,fill=white] (11.50,-0.50) rectangle (12.50,0.50);
\draw (0.00,2.00) node{$\bullet$};
\draw (1.00,2.00) node{$\bullet$};
\draw (3.00,2.00) node{$\times\delta$};
\draw (5.00,2.00) node{$\bullet$};
\draw (7.00,1.00) node{$+$};
\draw (9.00,1.00) node{$\bullet$};
\draw (15.00,1.00) node{$\bullet$};
\draw (5.00,0.00) node{$\bullet$};
\draw (12.00,0.00) node{$\iota$};
\end{tikzpicture}
\end{equation}

\noindent
We write~$\varphi:\IT\to\IT$ for the semantics of the net. By
Proposition~\ref{prop:integral}, we have immediately:
\begin{proposition}
  For any function~$s:\R^+\to\R$ in~$\CCT$, $T(\varphi(S(s))) =
  x\mapsto\int_0^xs(t)\d t$.
\end{proposition}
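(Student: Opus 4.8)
The plan is to reduce the statement to the non-standard characterization of the Riemann integral provided by Proposition~\ref{prop:integral}, after first computing the semantics $\varphi$ of the net~\eqref{eq:net-int} in closed form. That net is the trace of the morphism which sends an input to the $+$\nbd{}combination of its $\times\delta$\nbd{}image with the $\iota$\nbd{}image carried back along the loop; hence, writing $a=S(s)$ for the (internal, everywhere-defined) input stream, $\varphi(a)$ is the least fixpoint in~$\IT$ of the internal Scott-continuous operator $h\colon r\mapsto a\cdot\delta+\iota(r)$, where product and sum are taken pointwise. First I would determine this fixpoint explicitly. A straightforward induction, valid in each standard component and hence transferred to the internal iterates, shows that $h^{m}(\bot)$ is defined exactly on $\setof{\hr k\in\nsint}{\hr k<m}$ and equals $\sum_{j=0}^{\hr k}a(j)\delta$ there. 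By Proposition~\ref{prop:internal-fix} the least fixpoint is $\bigvee\setof{h^{\hr n}(\bot)}{\hr n\in\nsint}$, so choosing any $\hr n>\hr k$ shows that $\varphi(S(s))$ is defined on all of~$\nsint$ with
\[
\varphi(S(s))(\hr k)\qeq\sum_{j=0}^{\hr k}a(j)\,\delta\qeq\sum_{j=0}^{\hr k}\ns{s}(j\delta)\,\delta .
\]

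Next I would apply the standardization~$T$. By definition $T(\varphi(S(s)))(x)=\st\pa{\varphi(S(s))(\floor{\ns x/\delta})}$, so setting $\hr k=\floor{\ns x/\delta}\in\nsint$ we have $\hr k\delta\approx x$ (since $0\leq x-\hr k\delta<\delta$ and $\delta$ is infinitesimal), whence
\[
T(\varphi(S(s)))(x)\qeq\st\pa{\sum_{j=0}^{\hr k}\ns{s}(j\delta)\,\delta}.
\]
The inner hyperfinite sum is exactly a Riemann sum for~$s$ over $[0,\hr k\delta]\approx[0,x]$, with sample points $j\delta$ equally spaced by the infinitesimal mesh~$\delta$; it remains to identify its standard part with $\int_0^x s(t)\d t$.

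This last identification is where the real work lies, and I would carry it out by appealing to Proposition~\ref{prop:integral}. Since $s\in\CCT$ is continuous, it is continuous on the compact $[0,x]$, so the integral is well defined and is computed by the standard part of any hyperfinite Riemann sum of infinitesimal mesh. The only genuine discrepancy is that Proposition~\ref{prop:integral} is phrased for the normalized partition of $[0,x]$ into $\hr n$ equal pieces, whereas here the mesh is the fixed sampling period~$\delta$ and the number of points is $\hr k=\floor{x/\delta}$. I would absorb this either by invoking the mesh-independence already asserted in Proposition~\ref{prop:integral} (the value does not depend on the chosen unlimited hyperinteger), or, more directly, by transferring the uniform continuity of~$s$ on $[0,x]$ to~$\nsreal$ to show that two such Riemann sums of infinitesimal mesh differ by an infinitesimal (the extra endpoint term $\ns{s}(\hr k\delta)\delta$ being itself infinitesimal). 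This yields $\st\pa{\sum_{j=0}^{\hr k}\ns{s}(j\delta)\delta}=\int_0^x s(t)\d t$, and therefore $T(\varphi(S(s)))=x\mapsto\int_0^x s(t)\d t$, as claimed. The main obstacle is precisely this comparison between the net's absolute-mesh Riemann sum and the normalized sum of Proposition~\ref{prop:integral}; the fixpoint computation and the passage through the standard part are otherwise routine.
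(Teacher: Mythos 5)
Your proposal is correct and follows essentially the same route as the paper's proof: compute the net's semantics as the least fixpoint of $r\mapsto S(s)\cdot\delta+\iota(r)$, obtain the closed form $\varphi(S(s))(\hr{k})=\delta\sum_{j=0}^{\hr{k}}\ns{s}(j\delta)$, and conclude via Proposition~\ref{prop:integral} after taking the standard part. You merely supply two details the paper elides behind ``it can be shown'' --- the explicit internal iteration via Proposition~\ref{prop:internal-fix} with transfer, and the comparison between the absolute-mesh Riemann sum and the normalized partition of Proposition~\ref{prop:integral} --- both of which are handled soundly.
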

\begin{proof}
  The semantics~$\varphi$ of the net is computed by a fixpoint as explained in
  Section~\ref{sec:nets}, defined by~$\varphi(S(s))(0)=\delta S(s)(0)$ and
  $\varphi(S(s))(\hr{n}+1)= \varphi(S(s))(\hr{n})+\delta
  S(s)(\hr{n}+1)$. Therefore we have
  $\varphi(S(s))(\hr{n})=\delta\sum_{\hr{k}=0}^{\hr{n}} S(s)(\hr{k})$. Finally,
  by Proposition~\ref{prop:integral}, it can be shown that if $\st(\hr{x}\delta) = x\in\Rplus$,
  then $\delta\sum_{\hr{k} = 0}^{\hr{n}} S(s)(\hr{k})\approx \int_0^{x} s(t)\d t$ and
  thus $T(\varphi(S(s)))(x) = \int_0^{x} s(t)\d t$.
\end{proof}

\noindent
This construction can be generalized in order to describe solvers for ordinary
differential equations~\cite{bliudze2009modelling,
  benveniste2010fundamentals}. It should be noticed that the above propositions
show that the choice of the infinitesimal sampling period~$\delta$ is
essentially irrelevant.

Most of the preceding results can be adapted to the case where the streams
considered are only piecewise continuous, with a finite number of
discontinuities. In particular, for any such stream~$s$ we have $T\circ
S(s)\hat=s$ where~$\hat =$ denotes the equality almost everywhere (this
weakening of equality is necessary because of phenomena such as the one
described in the remark following Proposition~\ref{prop:retract}). However, the
formalization of this is obscured by the fact that piecewise continuous
functions, with a finite number of discontinuities, do not form a cpo because
the supremum in~$\CT$ of a directed set of such functions might have an infinite
number of points of discontinuity: this is sometimes referred to as the
\emph{Zeno phenomenon} in the study of hybrid systems.



\section{Conclusion and future works}
\label{sec:concl}
We have defined nets which provide a formal syntax for process networks, studied
the categorical structure of their models, and constructed the
infinitesimal-time model as an internal cpo. The fascinating links between
denotational semantics of concurrent systems and non-standard analysis have
started to be explored only recently and many structures are still yet to be
clarified. As explained above, the study of the infinitesimal-time domain has to
be refined in order to cope with streams which are not necessarily continuous,
and thus to properly model full-fledged hybrid systems. In particular,
Proposition~\ref{prop:retract} fails to be true if we replace~$\CCT$ by~$\CT$:
we plan to investigate generalizations of this property where~$S$ and~$T$ form
an adjunction. We are also investigating possible adaptations of nets in the
case where we consider a synchronous semantics (in which there is a notion of
simultaneity of events). In this setting, the usual delay operator can elegantly
be modeled in feedback categories~\cite{katis1999algebra} (which are traced
monoidal categories with the yanking axiom removed) and we plan to study nets
for those categories. Finally, we envisage many connections with other areas of
denotational semantics. For instance, the trace semantics of Kahn networks is
closely related to game semantics~\cite{mellies2007asynchronous} and it is thus
natural to wonder if non-standard analysis can provide insights about a possible
definition of a ``continuous game semantics'' in the spirit of the geometric
models for concurrent programs~\cite{goubault2000geometry}.

Acknowledgments: the authors would like to warmly thank Simon Bliudze,
Paul-André Melliès, Michael Mislove and Marc Pouzet for all the discussions they
had, directly or indirectly related to the subject of this article, as well as
the anonymous referees for helpful remarks.


\bibliographystyle{plain}
\bibliography{papers}
\end{document}